\documentclass[letter]{article}

\usepackage{aaai16}
\usepackage{graphicx}
\usepackage[usenames,dvipsnames]{xcolor}
\usepackage{float}
\usepackage{amsthm,amstext,amssymb,amsmath}
\usepackage{mathtools} 
\usepackage{enumerate}
\usepackage{xspace}
\usepackage{pifont} 
\usepackage{misc}
\usepackage{thmtools,thm-restate}
\usepackage{boxedminipage}
\usepackage{tikz}
\usepackage{times}

\newcommand{\mc}[1]{\ensuremath{\mathcal{#1}}\xspace}

\renewcommand{\EL}{\mc{E\kern-0.1emL}}

\renewcommand{\phi}{\varphi}
\renewcommand{\epsilon}{\varepsilon}

\newcommand{\tiles}{\mathbb{T}}
\newcommand{\horiz}{\mathbb{H}}
\newcommand{\vertical}{\mathbb{V}}
\newcommand{\adom}{\ensuremath{\mathsf{adom}}}

\newcommand{\myeqnsep}{0.5mm}
\newcommand{\vect}[1]{\mathbf{#1}}

\declaretheorem{theorem}
\declaretheorem[sibling=theorem]{lemma}

\declaretheorem[sibling=theorem]{example}

\pdfinfo{
/Title (Containment in Monadic Disjunctive Datalog, MMSNP, and Expressive Description Logics)
/Author (Pierre Bourhis, Carsten Lutz) }

\setlength{\belowcaptionskip}{-10pt}

\title{Containment in Monadic Disjunctive Datalog, MMSNP, and
  Expressive Description Logics}
  
%
\author{
      \hspace{18mm}Pierre Bourhis \\
      \hspace{18mm}CNRS CRIStAL, UMR 8189, Universit\'e Lille 1, \\
	      \hspace{18mm}INRIA Lille, France \\
      \hspace{18mm}pierre.bourhis@univ-lille1.fr
      \And 
      \hspace{18mm}Carsten Lutz \\
      \hspace{18mm}University of Bremen, Germany \\
      \hspace{18mm}clu@uni-bremen.de 
}

\begin{document}
\maketitle

\begin{abstract}
  We study query containment in three closely related formalisms:
  monadic disjunctive Datalog (MDDLog), MMSNP (a logical
  generalization of constraint satisfaction problems), and
  ontology-mediated queries (OMQs) based on expressive description
  logics and unions of conjunctive
  queries.  
  Containment in MMSNP was known to be decidable due to a result by
  Feder and Vardi, but its exact complexity has remained open. We
  prove {\sc 2NExpTime}-completeness and extend this result to monadic
  disjunctive Datalog and to OMQs. 
\end{abstract}

\section{Introduction}

In knowledge representation with ontologies, data centric applications
have become a significant subject of research.
In such applications, ontologies are used to address
incompleteness and heterogeneity of the data, and for enriching it
with background knowledge
\cite{DBLP:conf/rweb/CalvaneseGLLPRR09}. This trend has given rise to
the notion of an \emph{ontology-mediated query (OMQ)} which combines
a database query with an ontology, often formulated in a
description logic
(DL). 
From a data centric viewpoint, an OMQ can be viewed as a
normal database query 
that happens to consist of two components (the ontology and the
actual query).
 It is thus
natural to study OMQs in the same way as other query languages,
aiming to understand e.g.\ their expressive power and the complexity
of fundamental reasoning tasks such as query containment. In this
paper, we concentrate on the latter.

Containment of OMQs was first studied in
\cite{DBLP:conf/ecai/LevyR96,DBLP:conf/pods/CalvaneseGL98} and more
recently in
\cite{DBLP:conf/ijcai/CalvaneseOS11,DBLP:journals/tods/BienvenuCLW14,DBLP:conf/kr/BienvenuLW12}. To
appreciate the usefulness of this reasoning task, it is important to
recall that real-world ontologies can be very large and tend to change
frequently. As a user of OMQs, one might thus want to know whether the
ontology used in an OMQ can be replaced with a potentially much
smaller module extracted from a large ontology or with a newly released
version of the ontology, without compromising query answers. This
requires to decide equivalence of OMQs, which can be done by
answering two containment questions. Containment can also serve as a
central reasoning service when optimizing OMQs in static
analysis \cite{DBLP:conf/kr/BienvenuLW12}. 

In the most general form of OMQ containment, the two OMQs can
involve different ontologies and the data schema (ABox signature, in DL
terms) can be restricted to a subset of the signature of the
ontologies. While results for this form of containment have been
obtained for inexpressive DLs such as those of the DL-Lite and
$\mathcal{EL}$ families \cite{DBLP:conf/kr/BienvenuLW12},
containment of OMQs based on expressive DLs turned out to be a
technically challenging problem.
A step forward has been made in
\cite{DBLP:journals/tods/BienvenuCLW14} where it was observed that
there is a close relationship between three groups of formalisms:
(i)~OMQs based on expressive DLs, (ii)~monadic disjunctive Datalog
(MDDLog) programs, and (iii)~constraint satisfaction problems (CSPs)
as well as their logical generalization MMSNP. These observations have
given rise to first complexity results for containment of OMQs based
on expressive DLs, namely {\sc NExpTime}-completeness for several
cases where the actual query is an atomic query of the
form $A(x)$, with $A$ a monadic relation.

In this paper, we study containment in MDDLog, MMSNP, and OMQs that
are based on expressive DLs, conjunctive queries (CQ), and unions
thereof (UCQs).  A relevant result is due to Feder and Vardi
(\citeyear{DBLP:journals/siamcomp/FederV98}) who show that containment
of MMSNP sentences is decidable and that this gives rise to
decidability results for CSPs such as whether the complement of a CSP
is definable in monadic Datalog. As shown in
\cite{DBLP:journals/tods/BienvenuCLW14}, the complement of MMSNP is
equivalent to Boolean MDDLog programs and to Boolean OMQs with UCQs as
the actual query.
While these results can be used to infer decidability of containment
in the mentioned query languages, they do not immediately yield tight
complexity bounds. In particular, Feder and Vardi describe their
algorithm for containment in MMSNP only on a very high level of
abstraction, do not analyze its complexity, and do not attempt to
provide lower bounds.  Also a subsequent study of MMSNP containment
and related problems did not clarify the precise complexity
\cite{DBLP:conf/cp/Madelaine10}.  Other issues to be addressed are
that MMSNP containment corresponds only to the containment of
\emph{Boolean} queries and that the translation of OMQs into MMSNP
involves a double exponential blowup.

Our main contribution is to show that all of the mentioned containment
problems are 2{\sc NExpTime}-complete. In particular, this is the case
for MDDLog, MMSNP, OMQs whose ontology is formulated in a DL between
\ALC and $\mathcal{SHI}$ and where the actual queries are UCQs, and
OMQs whose ontology is formulated in a DL between \ALCI and
$\mathcal{SHI}$ and where the actual queries are CQs.  This closes
open problems from \cite{DBLP:conf/cp/Madelaine10} about MMSNP
containment and from \cite{DBLP:conf/kr/BienvenuLW12} about OMQ
containment. In addition, clarifying the complexity of MDDLog
containment is interesting from the perspective of database theory,
where Datalog containment has received a lot of attention. While being
undecidable in general \cite{DBLP:journals/jlp/Shmueli93}, containment
is known to be decidable for monadic Datalog
\cite{DBLP:conf/stoc/CosmadakisGKV88} and is in fact 2{\sc
  ExpTime}-complete \cite{DBLP:conf/icalp/BenediktBS12}.  Here, we
show that adding disjunction increases the complexity to 2{\sc
  NExpTime}. We refer to \cite{DBLP:conf/ijcai/BourhisKR15} for
another recent work that generalizes monadic Datalog containment, in
an orthogonal direction. It is interesting to note that all these
previous works rely on the existence of witness instances for
non-containment which have a tree-like shape. In MDDLog containment,
such witnesses are not guaranteed to exist which results in
significant technical challenges.

This paper is structured as follows. We first concentrate on MDDLog
containment, establishing that containment of a Boolean MDDLog program
in a Boolean conjunctive query (CQ) is 2{\sc NExpTime}-hard and that
containment in Boolean MDDLog is in 2{\sc NExpTime}. We then
generalize the upper bound to programs that are non-Boolean and admit
constant symbols. The lower bound uses a reduction of a tiling problem
and borrows 
queries from \cite{DBLP:conf/mfcs/BjorklundMS08}. For the upper bound,
we essentially follow the arguments of Feder and Vardi
(\citeyear{DBLP:journals/siamcomp/FederV98}), but provide full details
of all involved constructions and carefully analyze the involved
blowups. It turns out that an MDDLog containment question $\Pi_1
\subseteq \Pi_2$ can be decided non-deterministically in time single
exponential in the size of $\Pi_1$ and double exponential in the size
of $\Pi_2$. Together with some straightforward observations, this also
settles the complexity of MMSNP containment. We additionally observe
that FO- and Datalog-rewritability of MDDLog programs and (the
complements of) MMSNP sentences is 2{\sc NExpTime}-hard.

We then consider containment between OMQs, starting with the
observation that the 2{\sc NExpTime} lower bound for MDDLog also
yields that containment of an OMQ in a CQ is 2{\sc NExpTime}-hard when
ontologies are formulated in $\mathcal{ALC}$ and UCQs are used as
queries. The same is true for the containment of an OMQ in an OMQ, even
when their ontologies are identical. We then establish a matching upper
bound by translating OMQs to MDDLog and applying our results for
MDDLog containment. It is interesting that the complexity is double
exponential only in the size of the actual query (which tends to be
very small) and only single exponential in the size of the ontology.
We finally establish another 2{\sc NExpTime} lower bound which applies
to containment of OMQs whose ontologies are formulated in
$\mathcal{ALCI}$ and whose actual queries are CQs (instead of UCQs as
in the first lower bound).
This requires a different reduction strategy which borrows queries from
\cite{DBLP:conf/cade/Lutz08}.

Due to space limitations, we defer proof details to the appendix,
available at http://www.informatik.uni-bremen.de/tdki/research/papers.html.

\section{Preliminaries}

A \emph{schema} is a finite collection $\Sbf=(S_1,\dots,S_k)$ of
relation symbols with associated non-negative arity. An
\emph{\Sbf-fact}
is an expression of the form $S(a_1,\ldots, a_n)$ where $S\in\Sbf$ is
an $n$-ary relation symbol, and $a_1, \ldots, a_n$ are elements of
some fixed, countably infinite set $\mn{const}$ of \emph{constants}.
An \emph{\Sbf-instance} $I$ is a finite set of \Sbf-facts.  The
\emph{active domain} $\adom(I)$ of~$I$ is the set of all constants
that occur in the facts in~$I$.
%

An \emph{\Sbf-query} is semantically defined as a mapping $q$ that
associates with every \Sbf-instance $I$ a set of \emph{answers} $q(I)
\subseteq \adom(I)^{n}$, where $n\geq 0$ is the \emph{arity}
of~$q$. If $n=0$, then we say that $q$ is \emph{Boolean} and we write
$I \models q$ if $()\in q(I)$. We now introduce some concrete query
languages. A \emph{conjunctive query (CQ)} takes the form $\exists
\vect{y} \, \vp(\vect{x}, \vect{y})$ where $\vp$ is a conjunction of
relational atoms and \xbf, \ybf denote tuples of variables. The
variables in \xbf are called \emph{answer variables}.  Semantically,
$\exists
\vect{y} \, \vp(\vect{x}, \vect{y})$ denotes the query
$$
q(I) = \{ (a_{1},\ldots,a_{n})\in \adom(I)^{n} \mid I\models \varphi[a_{1},\ldots,a_{n}]\}.
$$
%
A \emph{union of conjunctive queries (UCQ)} is a disjunction of CQs
with the same free variables.  We now define disjunctive Datalog
programs,
see also \cite{DBLP:journals/tods/EiterGM97}.
A \emph{disjunctive Datalog rule} $\rho$ has
the form
\[S_1(\vect{x}_1) \vee \cdots \vee S_m(\vect{x}_m) \leftarrow
R_1(\vect{y}_1)\land \cdots\land R_n(\vect{y}_n)\]
where $n> 0$ and $m \geq 0$.\footnote{Empty rule heads (denoted
  $\bot$) are sometimes disallowed. We admit them only in our upper bound
  proofs, but do not use them for lower bounds, thus achieving maximum
  generality.  }
We refer to $S_1(\vect{x}_1) \vee \cdots \vee S_m(\vect{x}_m)$ as the
\emph{head} of $\rho$, and to $R_1(\vect{y}_1) \wedge \cdots \wedge
R_n(\vect{y}_n)$ as the \emph{body}. Every variable that occurs in the
head of a rule $\rho$ is required to also occur in the body of $\rho$.
A \emph{disjunctive Datalog (DDLog) program} $\Pi$ is a finite set of
disjunctive Datalog rules with a selected \emph{goal relation}
\mn{goal} that does not occur in rule bodies and appears only in
non-disjunctive \emph{goal rules} $\mn{goal}(\vect{x}) \leftarrow
R_1(\vect{x}_1) \wedge \cdots \wedge R_n(\vect{x}_n)$.  The
\emph{arity of $\Pi$} is the arity of the 
\mn{goal} relation.  Relation symbols that occur in the head of at
least one rule of $\Pi$ are \emph{intensional (IDB) relations}, and
all remaining relation symbols in $\Pi$ are \emph{extensional (EDB)
  relations}.  Note that, by definition, \mn{goal} is an IDB
relation. A DDLog program is called \emph{monadic} or an \emph{MDDLog
  program} if all its IDB relations except \mn{goal} have arity at most one.

An $\Sbf$-instance, with $\Sbf$ the set of all (IDB and EDB)
relations in $\Pi$, is a \emph{model} of $\Pi$ if it satisfies all
rules in~$\Pi$. We use $\mn{Mod}(\Pi)$ to denote the set of all models
of~$\Pi$.
%
%
Semantically, a DDLog program $\Pi$ of arity $n$ defines the
following query over the schema $\Sbf_E$ that consists of the EDB
relations of~$\Pi$: for every $\Sbf_E$-instance $I$,
$$
\begin{array}{r@{}l}
\Pi(I) =
\{\vect{a}\in \adom(I)^{n}\mid \; &\mn{goal}(\vect{a})\in
J  \text{ for all }  \\[\myeqnsep]
& J\in \mn{Mod}(\Pi) \text{ with } I\subseteq J \}.
\end{array}
$$ 
Let $\Pi_1,\Pi_2$ be DDLog programs over the same EDB schema $\Sbf_E$
and of the same arity.  We say that $\Pi_1$ \emph{is contained in}
$\Pi_2$, written $\Pi_1 \subseteq \Pi_2$, if for every $\Sbf_E$-instance
$I$, we have $\Pi_1(I) \subseteq \Pi_2(I)$.
\begin{example}
  Consider the following MDDLog program~$\Pi_1$ over EDB schema
  $\Sbf_E=\{A,B,r\}$:
  $$
  \begin{array}{rcl}
    A_1(x) \vee A_2(x) &\leftarrow& A(x)
    \\[\myeqnsep]
    \mn{goal}(x) &\leftarrow&
    A_1(x) \wedge r(x,y) \wedge A_1(y)  \\[\myeqnsep]
    \mn{goal}(x) &\leftarrow&
    A_2(x) \wedge r(x,y) \wedge A_2(y)  \\[\myeqnsep]
  \end{array}
  $$
  Let $\Pi_2$ consist of the single rule $\mn{goal}(x)
  \leftarrow B(x)$. Then $\Pi_1 \not\subseteq \Pi_2$ is witnessed,
  for example, by the $\Sbf_E$-instance $I=\{r(a,a),A(a)\}$. It is
  interesting to note that there is no tree-shaped $\Sbf_E$-instance
  that can serve as a witness although all rule bodies in $\Pi_1$ and
  $\Pi_2$ are tree-shaped. In fact, a tree-shaped instance does not
  admit any answers to $\Pi_1$ because we can alternate $A_1$
  and $A_2$ with the levels of the tree, avoiding to make $\mn{goal}$
  true anywhere.
\end{example}
%
%
An \emph{MMSNP sentence} over schema $\Sbf_E$ has the form
$
\exists X_1 \cdots \exists X_n \forall x_1 \cdots \forall x_m \vp
$
with $X_1,\dots,X_n$ monadic second-order variables,
$x_1,\dots,x_m$ first-order variables, and $\vp$ a conjunction of
formulas of the form
$$
\alpha_1 \wedge \cdots
\wedge \alpha_n \rightarrow \beta_1 \vee \cdots \vee \beta_m \mbox{ with $n,m \geq 0$},
$$
where each $\alpha_i$ takes the form $X_i(x_j)$ or $R(\vect{x})$ with
$R \in \Sbf_E$, and each $\beta_i$ takes the form $X_i(x_j)$. This
presentation is syntactically different from, but semantically
equivalent to the original definition from
\cite{DBLP:journals/siamcomp/FederV98}, which does not use the
implication symbol and instead restricts the allowed polarities of
atoms. An MMSNP sentence $\varphi$ can serve as a Boolean query in the
obvious way, that is, $I \models \varphi$ whenever $\varphi$ evaluates
to true on the instance $I$. The containment problem in MMSNP coincides
with logical implication. See 
\cite{DBLP:journals/siamdm/BodirskyCF12,DBLP:journals/jcss/BodirskyD13}
for more information on MMSNP.

It was shown in \cite{DBLP:journals/tods/BienvenuCLW14} 
that the complement of an MMSNP sentence can be translated into an
equivalent Boolean MDDLog program in polynomial time and vice versa. 
The involved complementation is irrelevant for the purposes of deciding
containment since for any two Boolean queries $q_1,q_2$, we have $q_1
\subseteq q_2$ if and only if $\neg q_1 \not\supseteq \neg
q_2$. Consequently, any upper bound for containment in MDDLog also
applies to MMSNP and so does any lower bound for containment between
Boolean MDDLog programs.
\begin{example}
  Let $\Sbf_E = \{ r \}$, $r$ binary. The complement of the MMSNP
  formula $\exists R \exists G \exists B \forall x \forall y \,\psi$
  over $\Sbf_E$ with $\psi$ the conjunction of
  $$
  \begin{array}{c}
    \top \rightarrow R(x) \vee G(x) \vee B(x) \\[\myeqnsep]
    C(x) \wedge r(x,y) \wedge C(y) \rightarrow \bot \quad \text{ for } C
    \in \{ R,G,B \}
  \end{array}
  $$
  is equivalent to the Boolean MDDLog program
  $$
  \begin{array}{rl}
    r(x,y) \rightarrow C(x) \vee \overline{C}(x) &\text{ for } C 
    \in \{ R,G,B \} \\[\myeqnsep]
    r(x,y) \rightarrow C(y) \vee \overline{C}(y) &\text{ for } C 
    \in \{ R,G,B \} \\[\myeqnsep]
    \overline{R}(x) \wedge \overline{G}(x) \wedge \overline{B}(x) \rightarrow \mn{goal}() \\[\myeqnsep]
    C(x) \wedge r(x,y) \wedge C(y) \rightarrow \mn{goal}() & \text{ for } C
    \in \{ R,G,B \}.
  \end{array}
  $$
\end{example}
%
%


\section{MDDLog and MMSNP: Lower Bounds}

The first main aim of this paper is to establish the following result.
Point~3 closes an open problem from \cite{DBLP:conf/cp/Madelaine10}.
\begin{theorem}
\label{thm:hardness1}    
The following containment problems are 2\NExpTime-complete:
  \begin{enumerate}

  \item of an MDDLog program in a CQ;

  \item of an MDDLog program in an MDDLog program;

  \item of two MMSNP sentences.


  \end{enumerate}
\end{theorem}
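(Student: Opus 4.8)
The plan is to establish hardness only for the first and sharpest problem---containment of an MDDLog program in a CQ---and to obtain the other two as immediate consequences. A Boolean CQ is the special case of a Boolean MDDLog program consisting of a single goal rule, so hardness of~(1) yields hardness of~(2). For~(3), I would invoke the polynomial-time, containment-preserving correspondence between Boolean MDDLog and the complement of MMSNP recorded in the preliminaries, together with the duality $q_1 \subseteq q_2 \iff \neg q_1 \not\supseteq \neg q_2$; since the left-hand query in~(1) is an arbitrary MDDLog program and the right-hand query a (trivial) MDDLog program, complementation turns both into MMSNP sentences and transfers the lower bound to~(3). The matching upper bound I would defer to the later Feder--Vardi-style algorithm, whose analysis shows that any MDDLog containment $\Pi_1 \subseteq \Pi_2$ is decidable in nondeterministic time single-exponential in $|\Pi_1|$ and double-exponential in $|\Pi_2|$; all three problems reduce to this and hence lie in 2\NExpTime.

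For the lower bound on~(1), I would reduce from the standard 2\NExpTime-complete doubly-exponential torus tiling problem: given a tiling system and a parameter $n$ in unary, decide whether the $2^{2^n}\times 2^{2^n}$ torus admits a correct tiling. From such an input I would build, in polynomial time, a Boolean MDDLog program $\Pi_1$ and a Boolean CQ $q$ over a common EDB schema such that $\Pi_1 \not\subseteq q$ holds exactly when a tiling exists. The intended witness instance for non-containment encodes one grid cell per element, equipped via EDB relations with its tile type and with a doubly-nested binary address: each of the two coordinates is a $2^n$-bit counter, and each bit position is itself named by an $n$-bit pointer, so that the full $2^{2^n}$ range is addressable with only polynomially many relation symbols and rules.

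Recall that, by the entailment semantics, $I \models \Pi_1$ holds precisely when the monadic IDB predicates cannot be assigned over $I$ so as to satisfy all disjunctive rules while falsifying every goal-rule body; that is, $\Pi_1$ fires on $I$ iff $I$ is \emph{not} properly colorable, mirroring the complement of a CSP. I would exploit this by letting the disjunctive rules guess, and propagate along the address structure, the information needed to verify horizontal and vertical adjacency and the faithful increment of the counters, arranging matters so that the goal becomes unavoidable exactly on instances that genuinely encode a complete, locally correct tiling. The CQ $q$, on the other hand, can only test for the presence of a pattern, so I would use it---borrowing the counter-checking queries of \cite{DBLP:conf/mfcs/BjorklundMS08}---to \emph{forbid} the malformed addressing and shortcut configurations that the guessing part alone cannot rule out; $I \not\models q$ then certifies that the addressing is sound. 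The two correctness directions amount to showing that a genuine tiling yields a grid instance that is simultaneously uncolorable and $q$-free, and conversely that any such instance can be decoded into a correct tiling.

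The main obstacle I anticipate is forcing the witness to be a faithful grid encoding using only a polynomial-size $\Pi_1$ and $q$. Two features make this delicate. First, a CQ detects only the \emph{existence} of a bad pattern, whereas what must ultimately be guaranteed is a global, counting property (a fully addressed $2^{2^n}\times 2^{2^n}$ torus), so the burden of enforcing universality and consistency has to be shifted onto the disjunctive program and its co-colorability semantics, with the interaction of the two balanced so that neither spurious colorable instances nor spurious pattern-free instances survive. Second---and unlike the earlier decidability results for non-disjunctive monadic Datalog containment, which rely on tree-shaped witnesses---non-containment here need not be witnessed by tree-like instances, and indeed the grid witnesses are highly cyclic; I must therefore design the guessing and propagation to operate correctly on such cyclic instances rather than exploiting any tree structure.
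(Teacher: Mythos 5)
Your high-level scaffolding matches the paper: hardness of~(2) follows from~(1) because a Boolean CQ is a single-rule Boolean MDDLog program, hardness of~(3) follows via the polynomial-time complementation correspondence with MMSNP together with $q_1 \subseteq q_2$ iff $\neg q_1 \supseteq \neg q_2$, and the upper bound is indeed deferred to the Feder--Vardi-style algorithm with its single/double-exponential analysis. But your reduction has the polarity of the biconditional reversed, and this is fatal. You arrange for $\Pi_1 \not\subseteq q$ to hold exactly when a tiling exists. Since ``a tiling exists'' is the {\sc 2NExpTime}-hard problem, this establishes {\sc 2NExpTime}-hardness of \emph{non}-containment, i.e., only co-{\sc 2NExpTime}-hardness of containment; together with the {\sc 2NExpTime} upper bound, your statement would even imply co-{\sc 2NExpTime} $\subseteq$ {\sc 2NExpTime}, which is not known. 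The paper proves the opposite biconditional, $\Pi \subseteq q$ iff a tiling exists: when \emph{no} tiling exists, the full defect-free grid witnesses $\Pi \not\subseteq q$; when a tiling exists, any instance with $I \not\models q$ has sound counters and therefore a tileable partial grid, whence $I \not\models \Pi$ (Lemma~\ref{lem:contlowercorr}).

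The source of the flip is your plan to make ``the goal unavoidable exactly on instances that genuinely encode a complete, locally correct tiling.'' Besides pointing the wrong way, this is not MDDLog-expressible: Boolean MDDLog queries are preserved under homomorphisms (the paper relies on this in the proof of Lemma~\ref{lem:KthetaAreEnough}), whereas ``encodes a correct complete tiling'' is destroyed by collapsing homomorphisms; restricting attention to $q$-free instances does not rescue the hardness direction. What the co-CSP semantics of MDDLog naturally expresses is the complementary property, and that is exactly what the paper uses: the disjunctive rule $\bigvee_{T_i \in \tiles} T_i(x) \leftarrow \mn{gactive}(x)$ guesses a tile assignment, and the goal rules fire on every horizontal, vertical, or initial-condition \emph{defect}, so that on counting-defect-free instances $I \models \Pi$ iff the represented partial grid admits no tiling, while the CQ $q$ detects counting defects. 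A secondary concern: your flat encoding, with one element per cell and bit positions named by $n$-bit pointers, gives the polynomial-size CQ no handle for comparing addresses; the paper's counting trees with left/right navigation and \mn{jump} gadgets are precisely what makes the Bj\"orklund--Martens--Schwentick query able to align corresponding counter bits of neighboring cells. The repair is to invert your roles: tile types are guessed by IDB disjunction (not EDB-given), goal fires on defects, and tiling \emph{existence} then corresponds to containment.
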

We prove the lower bounds by
reduction of a tiling problem.  
It 
suffices to show 
that containment between a Boolean MDDLog program and a Boolean CQ is
2\NExpTime-hard.  \newcommand{\torus}{square\xspace} A \emph{2-exp
  \torus tiling problem} is a triple $P=(\tiles,\horiz,\vertical)$
where
\begin{itemize}
\item $\tiles=\{T_{1},\ldots,T_{p}\}$, $p \geq 1$, is a finite set of
  \emph{tile types};
\item $\horiz \subseteq \tiles \times \tiles$ is a
\emph{horizontal matching relation}; 
\item $\vertical \subseteq \tiles \times
\tiles$ is a \emph{vertical matching relation}. 
\end{itemize}
An input to $P$ is a word $w \in \tiles^*$. Let $w=T_{i_0} \cdots
T_{i_n}$. A \emph{tiling for $P$ and $w$} is a map
$f:\{0,\dots,2^{2^n}-1\} \times \{0,\dots,2^{2^n}-1\} \rightarrow \tiles$
such that
$f(0,j)=T_{i_j}$ for $0 \leq j \leq n$,
$(f(i,j),f(i+1,j)) \in \horiz$ for $0 \leq i < 2^{2^n}$, and 
$(f(i,j),f(i,j+1)) \in \vertical$ for $0 \leq i < 2^{2^n}$. 
%
It is 2\NExpTime-hard to decide, given a 2-exp \torus tiling problem
$P$ and an input $w$ to $P$, whether there is a tiling for $P$ and
$w$.


For the reduction, let $P$ be a 2-exp \torus tiling problem and $w_0$
an input to $P$ of length~$n$. We construct a Boolean MDDLog
program~$\Pi$ and a Boolean CQ $q$ such that $\Pi \subseteq q$ iff
there is a tiling for $P$ and $w_0$. To get a first intuition, assume
that instances $I$ have the form of a (potentially partial) $2^{2^n}
\times 2^{2^n}$-grid in which the horizontal and vertical positions of
grid nodes are identified by binary counters, described in more detail
later on. We construct $q$ such that $I \models q$ iff $I$ contains a
counting defect, that is, if the counters in $I$ are not properly
incremented or assign multiple counter values to the same node. $\Pi$
is constructed such that on instances $I$ without counting defects, 
$I \models \Pi$ 
iff 
the partial grid in $I$ does not admit a tiling for $P$
and~$w_0$. Note that this gives the desired result: if there is no
tiling for $P$ and $w_0$, then an instance $I$ that represents the
full $2^{2^n} \times 2^{2^n}$-grid (without counting defects) shows
$\Pi \not\subseteq q$; conversely, if there is a tiling for $P$ and
$w_0$, then $I \not\models q$ means that there is no counting
defect in $I$ and thus $I \not\models \Pi$.

We now detail the exact form of the grid and the counters.
Some of the constants in the input instance serve as \emph{grid nodes}
in the $2^{2^n} \times 2^{2^n}$-grid while other constants serve 
different purposes described below. 
%
%
To identify the position of a grid node $a$, we use a binary counter
whose value is stored at the $2^m$ leaves of a binary \emph{counting
  tree} with root $a$ and depth $m:=n+1$. The depth of counting trees
is $m$ instead of $n$ because we need to store the horizontal position
(first $2^n$ bits of the counter) as well as the vertical position
(second $2^n$ bits). The binary relation $r$ is used to connect
successors. To distinguish left and right successors, every left
successor $a$ has an attached \emph{left navigation gadget}
$r(a,a_1),r(a_1,a_2),\mn{jump}(a,a_2)$ and every right successor $a$
has an attached \emph{right navigation gadget}
$r(a,a_1),\mn{jump}(a,a_1)$---these gadgets will be used in the
formulation of the query $q$ later on.

If a grid node $a_2$ represents the right neighbor of grid node $a_1$,
then there is some node $b$ such that $r(a_1,b), r(b,a_2)$. The node
$b$ is called a \emph{horizontal step node}. Likewise, if $a_2$
represents the upper neighbor of $a_1$, then there must also be some
$b$ with $r(a_1,b), r(b,a_2)$ and we call $b$ a \emph{vertical step
  node}. In addition, for each grid node $a$ there must be a node
$c$ such that $r(a,c), r(c,a)$ and we call $c$ a \emph{self step
  node}. We make sure that, just like grid nodes, all three types of
step node have an attached counting tree.
\begin{figure}[t!]
  \begin{center}
    \framebox[1\columnwidth]{
 \footnotesize
\begin{tikzpicture}[->,font=\scriptsize]
\fill[black!10!white](0.1,0.9) -- (0.25,0.3) -- (0.65,0.3)-- cycle; 
\fill[black!10!white](2.1,0.8) -- (2.25,0.3) -- (2.75,0.3)-- cycle; 
\node (as) at (-1,1) {$c_1$};
\node (as2) at (-1,3) {$c_3$};
\node (as3) at (1,1) {$c_2$};
\node (as4) at (1,3) {$c_4$};
\node (p1) at (-1,0) {};
\node (p2) at (0,-1) {};
\node (a) at (0,0) {$a_1$};
\node (a1) at (1,0) {$b_1$};
\node (a2) at  (2,0) {$a_2$};
\node (b1) at (0,1) {$b_2$};
\node (b2) at (0,2) {$a_3$};

\node(sa2h) at (3,0) {};
\node (sa2v1) at (2,1) {$b_3$};
\node (sa2v2) at (2,-1) {};
\node(sb2v) at (0,3) {};
\node (d1) at (1,2) {$b_4$};
\node (sb2h2) at (-1,2) {};
\node (d2) at (2,2)  {$a_4$};
\node (sd2h) at (2,3) {};
\node (sd2v) at (3,2) {};

\draw (a) -- (a1) node[midway,above] {$r$}; 
\draw (a1) -- (a2)  node[midway,above]  {$r$};  
\draw (a) to[bend angle = 15, bend right]  node[midway,above] {$r$} (as) ; 
\draw (as) to[bend angle = 15,bend right]  node[midway,above] {$r$} (a); 
\draw (b2) to[bend angle = 15, bend right]  node[midway,above] {$r$} (as2) ; 
\draw (as2) to[bend angle = 15,bend right]  node[midway,above] {$r$} (b2); 
\draw (d2) to[bend angle = 15, bend right]  node[midway,above] {$r$} (as4) ; 
\draw (as4) to[bend angle = 15,bend right]  node[midway,above] {$r$} (d2); 
\draw (a2) to[bend angle = 15, bend right]  node[midway,above] {$r$} (as3) ; 
\draw (as3) to[bend angle = 15,bend right]  node[midway,above] {$r$} (a2); 
\draw (a) -- (b1) node[midway,above,right] {$r$}; 
\draw (b1) -- (b2)  node[midway,above,left]  {$r$}; 
\draw (b2) -- (d1)  node[midway,above]  {$r$}; 
\draw (d1) -- (d2)  node[midway,above]  {$r$}; 
\draw (a2) -- (sa2v1)  node[midway,above,right]  {$r$}; 
\draw (sa2v1) -- (d2)  node[midway,above,left]  {$r$}; 

\draw[dashed] (p1) -- (a);
\draw[dashed] (p2) -- (a) ;
\draw[dashed] (a2) -- (sa2h);
\draw[dashed] (a2) -- (sa2v1) ;
\draw[dashed] (sa2v2) -- (a2);
\draw[dashed] (b2) -- (sb2v) ;
\draw[dashed] (d2) -- (sd2v);
\draw[dashed] (d2) -- (sd2h);
\draw[dashed] (sb2h2) -- (b2) ;

\fill[black!10!white](0.1,-0.1) -- (0.25,-0.7) -- (0.65,-0.7)-- cycle; 
\fill[black!10!white](1.1,-0.1) -- (1.25,-0.7) -- (1.65,-0.7)-- cycle; 
\fill[black!10!white](2.1,1.9) -- (2.25,1.3) -- (2.65,1.3)-- cycle; 
\fill[black!10!white](0.1,1.9) -- (0.25,1.3) -- (0.65,1.3)-- cycle; 
\fill[black!10!white](1.1,1.9) -- (1.25,1.3) -- (1.65,1.3)-- cycle;
\fill[black!10!white](2.1,1.9) -- (2.25,1.3) -- (2.65,1.3)-- cycle;
\fill[black!10!white](2.1,-0.1) -- (2.25,-0.7) -- (2.65,-0.7)-- cycle;

\fill[black!10!white](-1.1,0.9) -- (-0.9,0.3) -- (-1.3,0.3)-- cycle; 
\fill[black!10!white](0.9,0.9) -- (1.2,0.30) -- (0.8,0.30)-- cycle; 
\fill[black!10!white](-1.1,2.9) -- (-0.9,2.3) -- (-1.3,2.3)-- cycle; 
\fill[black!10!white](0.9,2.9) -- (1.2,2.30) -- (0.8,2.30)-- cycle; 

\end{tikzpicture}
}
    \caption{A grid cell (counting trees in grey).}
    \label{fig:cell}
  \end{center}
\end{figure}
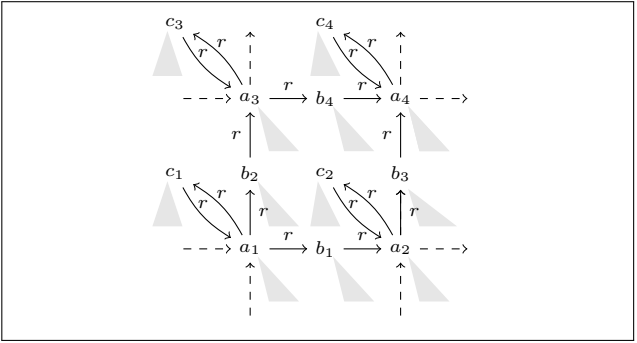
Figure~\ref{fig:cell} illustrates the representation of a single grid cell.

We need to make sure that counters are properly incremented when
transitioning to right and upper neighbors via step nodes. To achieve
this, each counting tree actually stores two counter values via
monadic relations $B_1, \overline{B}_1$ (first value) and
$B_2,\overline{B_2}$ (second value) at the leaves of the tree, where
$B_i$ indicates bit value one and $\overline{B}_i$ bit value zero.
While the $B_1$-value represents the actual position of the node in
the grid, the $B_2$-value is copied from the $B_1$-values of its
predecessor nodes (which must be identical). In fact,
%
%
%
the query $q$ to be defined later shall guarantee that 
\begin{description}

\item[(Q1)] whenever $r(a_1,a_2)$ and $a_1$ is associated (via a
  counting tree) with $B_1$-value $k_1$ and $a_2$ is associated with
  $B_2$-value $k_2$, then $k_1=k_2$;
%

\item[(Q2)] every node is associated (via counting trees) with at most
  one $B_1$-value. 

\end{description}
Between neighboring grid and step nodes, counter values are thus
copied as described in (Q1) above, but not incremented.
Incrementation takes place inside counting trees, as follows: at grid
nodes and at self step nodes, the two values are identical; at
horizontal (resp.\ vertical) step nodes, the $B_1$-value is obtained
from the $B_2$-value by incrementing the horizontal part and keeping
the vertical part (resp.\ incrementing the vertical part and keeping
the horizontal part). 

We now construct the program $\Pi$. As the EDB schema,
we use $\Sbf_E = \{ r, \mn{jump}, B_1,
B_2,\overline{B}_1,\overline{B}_2 \}$ where $r$ and \mn{jump} are
binary and all other relations are monadic.  We first define rules
which verify that a grid or self step node has a proper counting tree
attached to it (in which both counters are identical):
$$
\begin{array}{r@{\;}c@{\;}l}
  \mn{left}(x) & \leftarrow & r(x,y) \wedge r(y,z)
  \wedge\mn{jump}(x,z) \label{eq:leftgadget}\\[\myeqnsep]
  \mn{right}(x) &\leftarrow& r(x,y) \wedge
  \mn{jump}(x,y) \label{eq:rightgadget}  \\[\myeqnsep]
\mn{lrok}(x) &\leftarrow& \mn{left}(x) \\[\myeqnsep]
\mn{lrok}(x) &\leftarrow& \mn{right}(x) \label{eq:endLRblock} \\[\myeqnsep]
\mn{lev}^G_m(x) &\leftarrow& B_1(x) \wedge B_2(x) \wedge 
\mn{lrok}(x) \label{eq:gactivelastlev1} \\[\myeqnsep]
\mn{lev}^G_m(x) &\leftarrow&\overline{B}_1(x) \wedge \overline{B}_2(x) \wedge \mn{lrok}(x) \label{eq:gactivelastlev2}\\[\myeqnsep]
%
%
%
\mn{lev}^G_i(x) &\leftarrow& r(x,y_1) \wedge \mn{lev}^G_{i+1}(y_1) \wedge \mn{left}(y_1) 
  \, \wedge \\[\myeqnsep]
  &&r(x,y_2) \wedge  \mn{lev}^G_{i+1}(y_2) \wedge \mn{right}(y_2) 
\end{array}
$$
for $0 \leq i < m$.  We call a constant $a$ of an instance $I$
\emph{g-active} if it has all required structures attached to serve as
a grid node. 
Such constants are marked by the IDB
relation \mn{gactive}: 
$$
\mn{gactive}(x) \leftarrow \mn{lev}^G_0(x) \wedge r(x,y) \wedge \mn{lev}^G_0(y)
\wedge r(y,x) 
$$
We also want horizontal and vertical step nodes to be roots of the
required counting trees. 
The difference to the counting trees below grid / self step nodes is
that we need to increment the counters. This requires
modifying the rules with head relation $\mn{lev}^G_i$ above. We only
consider horizontal step nodes explicitly as vertical ones are very
similar.  The relations $B_1,\overline{B}_1$ and $B_2,\overline{B_2}$
give rise to a labeling of the leaf nodes that defines a word over
the alphabet $\Sigma=\{0,1\}^2$ where symbol $(i,j)$ means that the
bit encoded via $B_1,\overline{B}_1$ has value $i$ and the bit encoded
via $B_2,\overline{B}_2$ has value $j$. Ensuring that the
$B_1$-value is obtained by incrementation from the $B_2$-value
(least significant bit at the left-most leaf) then corresponds to
enforcing that the leaf word is from the regular language
$L=(0,1)^*(1,0)((0,0) + (1,1))^*$.
To
achieve this, we consider the languages $L_1=(0,1)^*$, $L_2=L$, and
$L_3=((0,0) + (1,1))^*$. Instead of level relations $\mn{lev}^G_i$, we
use relations $\mn{lev}^{H,\ell}_i$ where $\ell \in \{1,2,3\}$
indicates that the leaf word of the subtree belongs to the language
$L_{\ell}$:
$$
\begin{array}{r@{\;}c@{\;}l}
  \mn{lev}^{H,1}_m(x) &\leftarrow& \overline{B}_1(x) \wedge B_2(x) \wedge
                                \mn{lrok}(x) \\[\myeqnsep]
  \mn{lev}^{H,2}_m(x) &\leftarrow& B_1(x) \wedge \overline{B}_2(x) \wedge
                                \mn{lrok}(x) \\[\myeqnsep]
  \mn{lev}^{H,3}_m(x) &\leftarrow& B_1(x) \wedge B_2(x) \wedge
                                \mn{lrok}(x) \\[\myeqnsep]
  \mn{lev}^{H,3}_m(x) &\leftarrow& \overline{B}_1(x) \wedge
                                \overline{B}_2(x) \wedge \mn{lrok}(x) \\[\myeqnsep]
  \mn{lev}^{H,\ell_3}_i(x) &\leftarrow& r(x,y_1) \wedge
  \mn{lev}^{H,\ell_1}_{i+1}(y_1) \wedge \mn{left}(y_1) \, \wedge \\[\myeqnsep]
  &&r(x,y_2) \wedge  \mn{lev}^{H,\ell_2}_{i+1}(y_2) \wedge \mn{right}(y_2) 
\end{array}
$$ 
where $1 \leq i < m$ and $(\ell_1,\ell_2,\ell_3) \in \{
(1,1,1),(1,2,2)$, $(2,3,2), (3,3,3) \}$.  We call a constant of an
instance \emph{h-active} if it is the root of a counting tree that
implements incrementation of the horizontal position (left subtree of
the root) and does not change the vertical position (right subtree of
the root), identified by the IDB relation \mn{hactive}:
%
$$
\begin{array}{r@{\;}c@{\;}l}
\mn{hactive}(x) &\leftarrow&
r(x,y_1) \wedge \mn{lev}^{H,\ell_2}_{1}(y_1) \wedge \mn{left}(y_1) 
  \, \wedge \\[\myeqnsep]
&&r(x,y_2) \wedge  \mn{lev}^{H,\ell_3}_{1}(y_2) \wedge \mn{right}(y_2) 
\end{array}
$$
We omit the rules for the corresponding IDB relation $\mn{vactive}$.
Call the fragment of $\Pi$ that we have constructed up to this point
$\Pi_{\mn{tree}}$. 

Recall that we want an instance to make $\Pi$
true if it admits no tiling for $P$ and
$w$. 
We thus label all g-active nodes with a tile type:
$$
\bigvee_{T_i \in \tiles} T_i(x) \leftarrow \mn{gactive}(x) 
$$
It then remains to trigger the \mn{goal} relation whenever there is a
defect in the tiling. Thus add for all $T_i,T_j \in \tiles$ with
$(T_i,T_j) \notin H$:
  $$
  \begin{array}{r@{\;}c@{\;}l}
\mn{goal}() 
&\leftarrow&
     T_i(x) \wedge \mn{gactive}(x) 
 \wedge r(x,y) 
     \wedge \mn{hactive}(y) 
     \, \wedge\\[\myeqnsep]
&& r(y,z)      
\wedge T_j(z) \wedge \mn{gactive}(z) 
  \end{array}
  $$
  and for all $T_i,T_j \in \tiles$ with $(T_i,T_j) \notin V$:
  $$
  \begin{array}{r@{\;}c@{\;}l}
\mn{goal}() 
 &\leftarrow& 
     T_i(x) \wedge \mn{gactive}(x) 
 \wedge r(x,y) \wedge  \mn{vactive}(y) \,
     \wedge \\[\myeqnsep]
&& r(y,z)      
\wedge T_j(z) \wedge \mn{gactive}(z) 
  \end{array}
  $$
  The last kind of defect concerns the initial condition. Let $w_0 =
  T_{i_0} \cdots T_{i_{n-1}}$.  It is tedious but not difficult to
  write rules which ensure that, for all $ i < n$, every g-active
  element whose $B_1$-value represents horizontal position $i$
  and vertical position 0 satisfies the monadic IDB relation
  $\mn{pos}_{i,0}$.  We then put for all $j < n$ and all $T_\ell
  \in \tiles$ with $T_\ell \neq T_{i_j}$:
  $$
   \mn{goal}() \leftarrow \mn{pos}_{j,0}(x) \wedge T_\ell(x).
  $$
We now turn to the definition of $q$; recall that we want it
to achieve conditions (Q1) and (Q2)
above.  
%
%
%
%
Due to the presence of self step nodes and since the counting trees
below self step nodes and grid nodes must have identical values for
the two counters, it can be verified that (Q1) implies (Q2).
%
%
Therefore, we only need to achieve (Q1).  We use as $q$
a minor variation of a CQ constructed in
\cite{DBLP:conf/mfcs/BjorklundMS08} for a similar purpose.  We first
constuct a UCQ and show in the appendix how to replace it with a
CQ, which also involves some minor additions to the program
$\Pi_{\mn{tree}}$ above.

The UCQ $q$ makes essential use of the left and right navigation
gadgets in counting trees. It uses a subquery $q_m(x,y)$ constructed
such that $x$ and $y$ can only be mapped to corresponding leaves in
successive counting trees, that is, (i)~the roots of the trees are
connected by the relation $r$ and (ii)~$x$ can be reached from the
root of the first tree by following the same sequence of left and
right successors that one also needs to follow to reach $y$ from the
root of the second tree. To define $q_m(x,y)$, we inductively define
queries $q_i(x,y)$ for all $i \leq m$, starting with
$q_0(x,y)=r(x_0,y_0)$ and setting, for $0 < i \leq m$,
$$
\begin{array}{r@{}l}
q_i(x_i,&y_i) = \exists x_{i-1} \exists y_{i-1}
\exists z_{i,0} \cdots \exists z_{i,i+2}
\exists z'_{i,1} \cdots \exists z'_{i,i+3} \\[\myeqnsep]
  & q_{i-1}(x_{i-1},y_{i-1}) \wedge r(x_{i-1},x_i) \wedge r(y_{i-1},y_i)  \, \wedge \\[\myeqnsep]
  & \mn{jump}(x_i,z_{i,i+2}) \wedge \mn{jump}(y_i,z'_{i,i+3})   \, \wedge \\[\myeqnsep]
  & r(z_{i,0},z_{i,1}) \wedge \cdots \wedge r(z_{i,i+1},z_{i,i+2})   \, \wedge \\[\myeqnsep]
  & r(z_{i,0},z'_{i,1}) \wedge r(z_{i,1},z'_{i,2}) \wedge \cdots \wedge r(z'_{i,i+2},z'_{i,i+3})
\end{array}
$$
The $r$-atom in $q_0$ corresponds to the move from the root of one
counting tree to the root of a successive tree, the atoms
$r(x_{i-1},x_i)$ and $r(y_{i-1},y_i)$ in $q_i$ correspond to moving
down the $i$-th step in both trees, and the remaining atoms in $q_i$
make sure that both of these steps are to a left successor or to a
right successor. We make essential use of the \mn{jump} relation here,
which shortcuts an edge on the path to the root for left successors,
but not for right successors.  Additional explanation is provided in
the appendix.  It is now easy to define the desired UCQ that achieves
(Q1):
$$
\begin{array}{r@{\,}c@{\,}l}
q&=&\exists x_m \exists y_m \, q_m(x_m,y_m) \wedge B_1(x_m) \wedge
\overline{B}_2(y_m) \\[\myeqnsep]
&& \vee \, \exists x_m \exists y_m \, q_m(x_m,y_m) \wedge \overline{B}_1(x_m) \wedge
B_2(y_m)
\end{array}
$$
The first CQ in $q$ is displayed in Figure~\ref{fig:mainquery}.  
\begin{figure}[t!]
  \begin{center}
    \framebox[1\columnwidth]{
\begin{tikzpicture}[->,font=\scriptsize]
\node (x0) at (0,0) {$x_0$};
\node (x1) at (-2.4,-1) {$x_1$} ;
\node 	(x2) at (-2.4,-2) {$x_2$} ;
\node 	(xm1) at (-2.4,-3) {$x_{m-1}$} ;
\node (xm) at (-2.4,-4.5) {$x_m$} ;

\node(xb) at (-2.8, -4.6) {$B_1$};
\node(xb) at (2.8, -4.6) {$\overline{B}_2$};

\node (u1) at (-1,-1.5) {$z_{1,3}$};
\node (u2) at (-1,-2.5) {$z_{2,4}$} ;
\node 	(um1) at (-1,-3.5) {$z_{m-1,m+1}$} ;
\node 	(um) at (-1,-5) {$z_{m,m+2}$} ;

\node (y0) at (1,-0.3) {$y_0$};
\node (y1) at (2.4,-1) {$y_1$} ;
\node 	(y2) at (2.4,-2) {$y_2$} ;
\node 	(ym1) at (2.4,-3) {$y_{m-1}$} ;
\node 	(ym) at (2.4,-4.5) {$y_m$} ;

\node (v1) at (1,-1.5){$z_{1,4}$};
\node (v2) at (1,-2.5) {$z_{2,5}$} ;
\node 	(vm1) at (1,-3.5) {$z_{m-1,m+2}$} ;
\node 	(vm) at (1,-5) {$z_{m,m+3}$}  ;

\node (z1) at (0,-0.5) {$z_{1,0}$};
\node (z2) at (0,-1.5) {$z_{2,0}$} ;
\node 	(zm1) at (0,-2.5) {$z_{m-1,0}$} ;
\node 	(zm) at (0,-4) {$z_{m,0}$} ;

\draw (x0) to[bend right]   node[midway,left]  {$r$} (x1); 
\draw (x1) -- (x2)  node[midway,left]  {$r$}; 
\draw [dashed] (x2) -- (xm1); 
\draw (xm1) -- (xm)  node[midway,left]  {$r$}; 

\draw (x0) to  node[midway,right,above]  {$r$} (y0) ; 
\draw (y0) to[bend left]  node[midway,right]  {$r$} (y1) ; 
\draw (y1) -- (y2)  node[midway,right]  {$r$}; 
\draw [dashed] (y2) -- (ym1); 
\draw (ym1) -- (ym)  node[midway,right]  {$r$}; 

\draw (x1) -- (u1)  node[sloped, midway,above]  {$jump$}; 
\draw (x2) -- (u2)  node[sloped, midway,above]  {$jump$};  
\draw (xm1) -- (um1)  node[sloped, midway,above]  {$jump$}; 
\draw (xm) -- (um)  node[sloped, midway,above]  {$jump$}; 

\draw (y1) -- (v1)  node[sloped, midway,above]  {$jump$}; 
\draw (y2) -- (v2)  node[sloped, midway,above]  {$jump$};  
\draw (ym1) -- (vm1)  node[sloped, midway,above]  {$jump$}; 
\draw (ym) -- (vm)  node[sloped, midway,above]  {$jump$};

\draw (z1) -- (u1)  node[sloped,midway,above]  {$r^3$}; 
\draw (z2) -- (u2)  node[sloped,midway,above]  {$r^4$}; 
\draw (zm1) -- (um1) node[sloped,midway,above]  {$r^{m+1}$}	; 
\draw (zm) -- (um)  node[sloped,midway,above]  {$r^{m+2}$}; 

\draw (z1) -- (v1)  node[sloped,midway,above]  {$r^4$}; 
\draw (z2) -- (v2)  node[sloped,midway,above]  {$r^5$}; 
\draw (zm1) -- (vm1) node[sloped,midway,above]  {$r^{m+2}$}	; 
\draw (zm) -- (vm)  node[sloped,midway,above]  {$r^{m+3}$}; 
\end{tikzpicture}
}
    \caption{The first CQ in $q$.}
    \label{fig:mainquery}
  \end{center}
\end{figure}
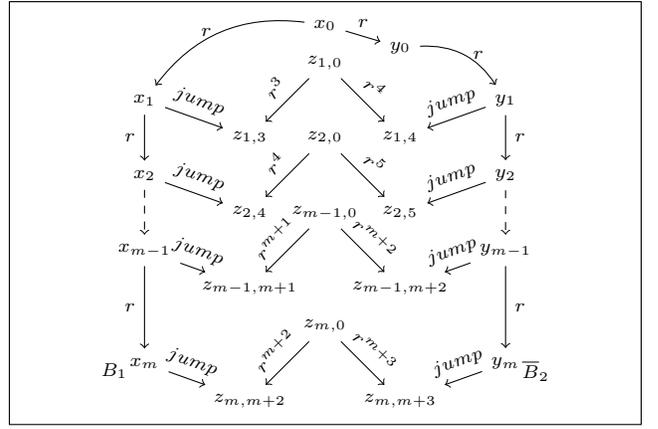

%
%
%
\begin{restatable}{lemma}{LEMcontlowercorr}\label{lem:contlowercorr}
  $\Pi \not\subseteq q$ iff there is no tiling for $P$
  and $w_0$.
\end{restatable}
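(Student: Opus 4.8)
The plan is to prove the two directions of the biconditional separately, exploiting the reduction's design: $q$ detects counting defects and $\Pi$ detects tiling defects on defect-free instances.

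The plan is to prove the equivalence by exhibiting, in each direction, the witness structure the reduction was designed around. Since both queries are Boolean, $\Pi \not\subseteq q$ holds iff there is an EDB-instance $I$ with $I \models \Pi$ and $I \not\models q$, and the whole argument rests on two structural facts that I would isolate first. \textbf{Fact A} (correctness of $q$): if $I \not\models q$ then (Q1) holds in $I$, and hence—using the self-step nodes together with the rules forcing equal counters there—(Q2) holds as well, so that every g-active, h-active, v-active, and self-step node carries a well-defined position in $\{0,\dots,2^{2^n}-1\}^2$. This reduces to the property of the subquery $q_m(x_m,y_m)$ that its free variables can be mapped only to \emph{corresponding} leaves of two $r$-successive counting trees; I would prove it by induction on $i$ for $q_i$, showing that the \mn{jump}-gadgets force each downward step in the $x$-branch to agree (left vs.\ right) with the corresponding step in the $y$-branch. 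This is the combinatorial core and where I expect the main difficulty; it is exactly the part borrowed from \cite{DBLP:conf/mfcs/BjorklundMS08} and deferred to the appendix.

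\textbf{Fact B} (correctness of $\Pi$ on defect-free instances): for every $I$ with $I \not\models q$, we have $I \models \Pi$ iff the partial grid carried by $I$ admits no tiling, where ``admits a tiling'' means there is an assignment of tile types to the g-active nodes triggering none of the goal rules. I would establish this by first checking that $\Pi_{\mn{tree}}$ derives \mn{gactive}, \mn{hactive}, \mn{vactive} exactly at the roots of correctly formed counting trees with the intended counter values: the $\mn{lev}^G_i$ rules verify a full binary tree of depth $m$ with $B_1=B_2$ at all leaves, while the $\mn{lev}^{H,\ell}_i$ rules accept precisely the leaf words of $L=(0,1)^*(1,0)((0,0)+(1,1))^*$, i.e.\ $B_1$ is the horizontal increment of $B_2$. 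Combining this with (Q1)/(Q2) yields the key local computation: if g-active $a_1$ and $a_2$ are joined through an h-active step node $b$ (so $r(a_1,b),r(b,a_2)$), then copying along the two $r$-edges and incrementation inside $b$ force the position of $a_2$ to be that of $a_1$ shifted by $(1,0)$, and symmetrically for vertical step nodes. The disjunctive tile-labelling rule together with the three families of goal rules then makes $I \models \Pi$ precisely when no tile assignment avoids a horizontal, vertical, or initial defect.

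Granting Facts A and B, both directions are short. For \emph{no tiling} $\Rightarrow$ $\Pi \not\subseteq q$, I would construct the canonical instance $I_{\mn{full}}$ representing the complete $2^{2^n}\times 2^{2^n}$-grid: one grid node per position with a counting tree storing that position ($B_1=B_2$), and horizontal, vertical, and self step nodes between the appropriate neighbors, each equipped with a counting tree realizing the correct (non-)increment. By construction $I_{\mn{full}}$ has no counting defect, so $I_{\mn{full}} \not\models q$ by Fact A; and since $P,w_0$ has no tiling, the full grid admits none either, whence $I_{\mn{full}} \models \Pi$ by Fact B. For \emph{tiling} $\Rightarrow$ $\Pi \subseteq q$, I would take an arbitrary $I$ with $I \not\models q$ and exhibit a tile assignment witnessing $I \not\models \Pi$: fix a global tiling $f$ and label each g-active node of position $(i,j)$ by $f(i,j)$, which is well defined by Fact A. The local position relations from Fact B ensure that horizontally (resp.\ vertically) adjacent g-active nodes receive tiles matching under $\horiz$ (resp.\ $\vertical$), and the initial-position rules are satisfied because $f(0,j)=T_{i_j}$; hence no goal rule fires and $I \not\models \Pi$. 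Combining the two directions gives $\Pi \not\subseteq q$ iff there is no tiling for $P$ and $w_0$, as required.
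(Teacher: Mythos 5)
Your proposal is correct and takes essentially the same route as the paper's proof: the canonical full-grid instance serves as the witness when no tiling exists, and when a tiling $f$ exists, any $I$ with $I \not\models q$ satisfies (Q1)/(Q2), so labeling each g-active node with $f$ at its (unique) position and closing under the non-disjunctive rules of $\Pi_{\mathrm{tree}}$ refutes $I \models \Pi$, with the jump-gadget analysis of $q_m$ as the combinatorial core in both treatments. One cosmetic remark: concluding $I_{\mathrm{full}} \not\models q$ uses the soundness direction of your Fact~A (that $q_m$ can match only corresponding leaves of successive counting trees, so $q$ fires only at genuine defects) rather than the headline implication ``$I \not\models q \Rightarrow$ (Q1)'', but your elaboration states exactly that property, matching the paper's appendix argument.
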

This finishes the proof of the lower bounds stated in
Theorem~\ref{thm:hardness1}.  Before proceeding, we note that the
lower bound can be adapted to important rewritability questions. A
query is \emph{FO-rewritable} if there is an equivalent first-order
query and \emph{(monadic) Datalog-rewritable} if there is an
equivalent (non-disjunctive) (monadic) Datalog query. FO-Rewritability
of a query is desirable since it allows to use conventional SQL
database systems for query answering, and likewise for
Datalog-rewritability and Datalog engines.  For this reason, FO- and
Datalog-rewritability have received a lot of attention. For example,
they have been studied for OMQs in
\cite{DBLP:journals/tods/BienvenuCLW14} and for CSPs in
\cite{DBLP:journals/siamcomp/FederV98,DBLP:journals/lmcs/LaroseLT07}.
Monadic Datalog is an interesting target as it constitutes an
extremely well-behaved fragment of Datalog.  It is open whether the
known decidability of FO- and (monadic) Datalog-rewritability
generalizes from CSPs to MMSNP. We observe here that these problems
are at least 2{\sc NExpTime}-hard. The proof is by a simple
modification of the reduction presented above.
\begin{restatable}{theorem}{THMrewrite}
\label{thm:rewrite}
  For MDDLog programs and the complements of MMSNP sentences,
  rewritability into FO, into monadic Datalog, and into Datalog are 2{\sc NExpTime}-hard.
\end{restatable}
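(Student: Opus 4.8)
The plan is to reduce the $2$-exp tiling problem to each of the three rewritability questions at once, reusing the Boolean MDDLog program $\Pi$ and the Boolean CQ $q$ built above, for which $\Pi \subseteq q$ holds iff $P$ and $w_0$ admit a tiling. The idea is to fuse $\Pi$ and $q$ with a \emph{fixed} Boolean query that is provably non-rewritable, arranged so that the fused query collapses to the innocuous CQ $q$ exactly when $\Pi \subseteq q$ and otherwise exposes the non-rewritable query on a suitable family of instances. As the fixed hard primitive I take non-$3$-colorability: I introduce a fresh binary EDB relation $G$ (the edge relation) and let $\Pi_3$ be the Boolean MDDLog program over $\{G\}$ that, as illustrated by the $3$-coloring example above, holds on a $\{G\}$-instance $H$ iff $H$ is not $3$-colorable. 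It is standard that non-$3$-colorability is definable neither in $\mathrm{FO}$ nor in Datalog, and hence not in monadic Datalog either, since the latter is contained in Datalog.

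I then assemble a single Boolean MDDLog program $\Pi'$ over the EDB schema $\Sbf_E \cup \{G\}$ from three ingredients with pairwise disjoint IDB relations: a copy of $\Pi$ whose goal is renamed to a fresh $0$-ary IDB $D$; a copy of $\Pi_3$ whose goal is renamed to a fresh $0$-ary IDB $N$; the goal rules encoding $q$ (one per disjunct); and the single rule $\mn{goal}() \leftarrow D \wedge N$. Because $D$ is derived only by the rules inherited from $\Pi$ and $N$ only by those inherited from $\Pi_3$, the models of $\Pi'$ project onto models of the two subprograms, so $D$ (resp.\ $N$) is a certain consequence exactly when $\Pi$ (resp.\ $\Pi_3$) holds, and $\mn{goal}() \leftarrow D \wedge N$ makes $\mn{goal}$ certain exactly when both are. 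Hence $\Pi'$ defines the Boolean query $q \vee (\Pi \wedge \Pi_3)$, where $\Pi$, $q$, $\Pi_3$ denote the queries they define over pairwise disjoint parts of the schema. The map from $(P,w_0)$ to $\Pi'$ is polynomial, and the claimed hardness for complements of MMSNP sentences follows from the polynomial-time translation between Boolean MDDLog and MMSNP recalled above.

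For correctness I argue two directions. If a tiling exists then $\Pi \subseteq q$, so $\Pi \wedge \Pi_3 \subseteq \Pi \subseteq q$ and hence $\Pi' \equiv q$; as a Boolean UCQ over the EDB schema, $q$ is simultaneously $\mathrm{FO}$-, monadic-Datalog-, and Datalog-rewritable, so $\Pi'$ is rewritable in all three senses. If no tiling exists, Lemma~\ref{lem:contlowercorr} provides the full-grid instance $I_{\mn{grid}}$, which has correct counters (so $I_{\mn{grid}} \not\models q$) and satisfies $I_{\mn{grid}} \models \Pi$. For any $\{G\}$-instance $H$ on fresh constants, the disjoint union $I_{\mn{grid}} \uplus H$ satisfies $\Pi$, falsifies $q$, and leaves $\Pi_3$ reading only $H$, so $\Pi'(I_{\mn{grid}} \uplus H)$ is true iff $H$ is not $3$-colorable. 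Thus on the family $\{\, I_{\mn{grid}} \uplus H \mid H \text{ a } \{G\}\text{-instance} \,\}$ the program $\Pi'$ computes non-$3$-colorability.

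The main obstacle is turning this last observation into genuine non-rewritability of $\Pi'$. This rests on a preservation lemma: attaching a fixed finite instance as a disjoint context preserves $\mathrm{FO}$-, monadic-Datalog-, and Datalog-definability (for $\mathrm{FO}$ one existentially guesses the context's elements and asserts its diagram; for (monadic) Datalog one partially evaluates the rewriting on the fixed context). Were $\Pi'$ rewritable in any of the three senses, non-$3$-colorability would inherit a rewriting, contradicting its non-definability. Note that $I_{\mn{grid}}$ is doubly exponential, but this is harmless: the argument is by contradiction and needs only the \emph{existence} of some rewriting of arbitrary size, while $\Pi'$ itself stays polynomial in the tiling input, so that $2$\NExpTime-hardness of tiling transfers to all three problems. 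The remaining effort is to make the preservation lemma and the (standard) non-definability of non-$3$-colorability precise; the rest is a routine reassembly of the gadgets already constructed for Theorem~\ref{thm:hardness1}.
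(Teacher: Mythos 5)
Your proof is correct and shares the skeleton of the paper's argument: reuse the program $\Pi$ and CQ $q$ from the containment lower bound, build a single program that collapses to the rewritable query $q$ exactly when a tiling exists and computes non-3-colorability relative to the fixed full-grid instance when none does, and refute Datalog-rewritability by partially evaluating a hypothetical rewriting on the fixed grid context, contradicting the Afrati--Cosmadakis--Yannakakis result (which is exactly the paper's $S_{P,t}$ construction). You differ in two genuine respects. First, the fusion mechanism: the paper re-heads the goal rules of $\Pi$ to a monadic IDB $A(x)$ and forces a 3-coloring of the $A$-marked elements over a fresh binary EDB $s$, so that the grid and the coloring gadget are interleaved; you instead keep a fully disjoint copy $\Pi_3$ of the non-3-colorability program over a fresh EDB $G$ and conjoin through nullary IDBs via $\mn{goal}() \leftarrow D \wedge N$. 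Your version is more modular, and the key semantic claim that $\Pi'$ defines $q \vee (\Pi \wedge \Pi_3)$ is sound: since the IDB schemas are disjoint and neither subprogram uses empty heads, a counter-model for either conjunct extends to a goal-free model of $\Pi'$ by saturating the other side's IDBs, and conversely restrictions of goal-free models yield counter-models. Second, the handling of the three formalisms: the paper invokes Rossman's theorem (MDDLog queries are preserved under homomorphisms, so FO-rewritability implies UCQ-rewritability, hence monadic Datalog-, hence Datalog-rewritability), so a single non-Datalog-rewritability argument covers all three; you prove the disjoint-context preservation lemma for each formalism separately and additionally use FO-undefinability of non-3-colorability. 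This buys independence from Rossman's theorem at the cost of three preservation arguments instead of one.

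Two small repairs. Your parenthetical justification of the FO case of the preservation lemma---``existentially guesses the context's elements and asserts its diagram''---is backwards: a sentence evaluated on $H$ alone cannot quantify over elements of $I_{\mn{grid}}$, which are absent from $H$'s domain. The correct (still standard) argument instantiates quantifiers over the fixed finite part by explicit case distinction, Feferman--Vaught style, replacing atoms that mention grid constants by truth values; alternatively, dispatch FO via Rossman as the paper does and skip this case. Also worth one line in a full write-up: the grid constants occur in no $G$-fact and every non-goal rule of $\Pi_3$ has a $G$-atom in its body, so they are inert for $\Pi_3$ and $\Pi_3(I_{\mn{grid}} \uplus H)$ indeed coincides with non-3-colorability of $H$.
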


\section{MDDLog and MMSNP: Upper Bounds}

The aim of this section is to establish the upper bounds stated in
Theorem~\ref{thm:hardness1}. It suffices to concentrate on MDDLog
since the result for MMSNP follows.  We first consider only Boolean
MDDLog programs and then show how to extend the upper bound to MDDLog
programs of any arity.  

Our main algorithm is essentially the one described in
\cite{DBLP:journals/siamcomp/FederV98}. Since the
constructions are described by Feder and Vardi only on an extremely
high level of abstraction and without providing any analysis of the
algorithm's running time, we give full details and proofs (in the
appendix).  The algorithm for deciding $\Pi_1 \subseteq \Pi_2$
proceeds in three steps. First, $\Pi_1$ and~$\Pi_2$ are converted into
a simplified form, then containment between the resulting programs
$\Pi^S_1$ and $\Pi^S_2$ is reduced to a certain emptiness problem, and
finally that problem is decided. A technical complication is posed by
the fact that the construction of $\Pi^S_1$ and $\Pi^S_2$ 
does not preserve containment in a strict sense.  In fact, $\Pi_1
\subseteq \Pi_2$ only implies $\Pi^S_1 \subseteq \Pi^S_2$ on instances
of a certain minimum girth. To address
this issue, we have to be careful about the girth in all three steps
and can finally resolve the problem in the last
step. 

We now define the notion of girth. For an $n$-ary relation symbol $S$,
$\mn{pos}(S)$ is $\{1, \ldots, n\}$. A finite structure $I$ has a
\emph{cycle} of length~$n$ if it contains distinct facts
$R_0(\vect{a}_0),\dots, R_{n-1}(\vect{a}_{n-1})$, $\abf_i=a_{i,1}
\cdots a_{i,m_i}$, and there are positions $p_i,p'_i \in
\mn{pos}(R_i)$, $0 \leq i < n$ such that:
\begin{itemize}
\item $p_i \neq p'_i$ for $1 \leq i \leq n$;
\item $a_{i, p'_i}=a_{i\oplus1, p_{i\oplus1}}$ for $0 \leq i < n$,
  where
  $\oplus$ denotes addition modulo $n$.
\end{itemize}
The \emph{girth} of $I$ is the length of the shortest cycle in it and
$\infty$ if $I$ has no cycle (in which case we say that $I$ is a \emph{tree}).

%
%
%
For MDDLog programs $\Pi_1,\Pi_2$ over the same EDB schema and $k \geq
0$, we write $\Pi_1 \subseteq_{>k} \Pi_2$ if $\Pi_1(I) \subseteq
\Pi_2(I)$ for all \Sbf-instances of girth exceeding~$k$.

Throughout the proof, we have to carefully analyze the running time of
the algorithm, considering various measures for MDDLog
programs. The \emph{size} of an MDDLog program $\Pi$, denoted $|\Pi|$,
is the number of symbols needed to write $\Pi$ where relation and
variable names are counted as having length one.  The \emph{rule size}
of an MDDLog program is the maximum size of a rule in $\Pi$.  The
\emph{atom width} (resp.\ \emph{variable width}) of $\Pi$ is the
maximum number of atoms in any rule body (resp.\ variables in any
rule) in $\Pi$.

  

\subsubsection*{From Unrestricted to Simple Programs}

An MDDLog program $\Pi^S$ is \emph{simple} if it satisfies the
following conditions:
\begin{enumerate}

\item 
every rule in $\Pi^S$ comprises at most one EDB atom and this atom
contains all variables of the rule body, each variable exactly once;

\item rules without an EDB atom contain at most a single variable.

\end{enumerate} 
The conversion to simple form changes the EDB schema and thus the
semantics of the involved queries, but it (almost) preserves
containment, as detailed by the next theorem. The theorem is implicit
in \cite{DBLP:journals/siamcomp/FederV98} and our contribution is to
analyze the size of the constructed MDDLog programs and to
provide detailed proofs. The same applies to the other theorems stated
in this section.
\begin{restatable}{theorem}{THMsimplify}\label{thm:simplify}
\label{thm:simplify}
Let $\Pi_1,\Pi_2$ be Boolean MDDLog programs over EDB schema
$\Sbf_E$. Then one can construct simple Boolean MDDLog programs
$\Pi_1^S,\Pi_2^S$ over EDB schema $\Sbf'_E$ such that
\begin{enumerate}

\item $\Pi_1 \not\subseteq \Pi_2$ implies $\Pi^S_1 \not\subseteq \Pi^S_2$;
 
\item $\Pi^S_1 \not\subseteq_{>w} \Pi^S_2$ implies $\Pi_1
  \not\subseteq_{>w} \Pi_2$
 
\end{enumerate}
where $w$ is the atom width of $\Pi_1 \cup \Pi_2$. Moreover,  if
$r$ is the number of rules in $\Pi_1 \cup \Pi_2$ and $s$ the rule
size, then
  %
  \begin{enumerate}

  \setcounter{enumi}{2}

  %
  %

  \item $|\Pi_i^S| \leq p(r \cdot 2^s)$;


   \item the variable width of 
     $\Pi_i^S$ is bounded by that of $\Pi_i$;


  \item $|\Sbf'_E| \leq p(r \cdot 2^s)$;

  \end{enumerate}
  where $p$ is a polynomial.  The construction takes time polynomial
  in $|\Pi^S_1 \cup \Pi^S_2|$.
\end{restatable}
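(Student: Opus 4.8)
The plan is to construct the simple programs $\Pi_1^S, \Pi_2^S$ by a two-stage transformation that first decouples the EDB atoms of each rule body from one another and then bundles the combined EDB information into a single fresh relation. The key observation is that a rule body is a conjunctive query over the EDB schema, and I want to replace it with a query that mentions only one EDB atom whose variables are exactly the body variables, each occurring once. To do this I would introduce, over a new EDB schema $\Sbf'_E$, a relation $R_\tau$ for each \emph{type} $\tau$ describing how a tuple of elements can be connected by the old EDB relations. Concretely, a type is a conjunction of $\Sbf_E$-atoms over a fixed set of variables (a ``diagram''), and $R_\tau$ has arity equal to the number of those variables. The intended reading is that an $\Sbf'_E$-instance $I'$ encodes an $\Sbf_E$-instance $I$ by recording, in each fact $R_\tau(\vect a)$, that the tuple $\vect a$ realizes the local EDB pattern $\tau$. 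The number of distinct types is bounded by the number of isomorphism types of conjunctions of at most $w$ atoms over $\Sbf_E$ using at most the variable width many variables, which is $p(r \cdot 2^s)$ as claimed, since each rule contributes at most $2^s$-many such patterns; this yields the bounds in points~3 and~5, while point~4 follows because the new rules reuse exactly the body variables of the old rules and introduce no new ones.

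With the encoding fixed, the translation of rules is purely syntactic: each original rule body $R_1(\vect y_1) \wedge \cdots \wedge R_n(\vect y_n)$ is replaced by the single EDB atom $R_\tau(\vect z)$ where $\tau$ is the type of that body and $\vect z$ lists its variables once each, leaving the (monadic IDB) atoms and the head unchanged. Rules without an EDB atom already contain at most one variable after a routine normalization, satisfying condition~2 of simplicity. I would define a back-and-forth correspondence between $\Sbf_E$-instances $I$ and their canonical $\Sbf'_E$-encodings $\mr{enc}(I)$: every model of $\Pi_i$ over $I$ corresponds to a model of $\Pi_i^S$ over $\mr{enc}(I)$ and vice versa, because the IDB relations and derivations are untouched and the EDB atom $R_\tau(\vect z)$ fires exactly when the corresponding old body matched. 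This correspondence immediately gives point~1: a witness instance $I$ for $\Pi_1 \not\subseteq \Pi_2$ yields the witness $\mr{enc}(I)$ for $\Pi_1^S \not\subseteq \Pi_2^S$.

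The genuinely delicate direction is point~2, and this is where the girth condition enters. An arbitrary $\Sbf'_E$-instance $I'$ need not be of the form $\mr{enc}(I)$: its $R_\tau$-facts may be mutually inconsistent, asserting overlapping but contradictory EDB patterns on shared elements. I would therefore define a \emph{decoding} $\mr{dec}(I')$ that reads off an $\Sbf_E$-instance from the types recorded in $I'$, and the crux is to show that when $I'$ has girth exceeding $w$, the decoding is faithful enough that a counterexample to $\Pi_1^S \subseteq_{>w} \Pi_2^S$ on $I'$ pulls back to a counterexample to $\Pi_1 \subseteq_{>w} \Pi_2$ on $\mr{dec}(I')$. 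The point is that each $R_\tau$-fact spans at most $w$ elements and contributes a bounded local EDB pattern; a large girth guarantees that these local patterns attach along trees and cannot ``fold back'' to create spurious cyclic interactions, so distinct facts interact only in the controlled way that the encoding anticipates. Bounding the girth of $\mr{dec}(I')$ below in terms of that of $I'$, and verifying that derivations of \mn{goal} transfer correctly across the decoding, is the main obstacle, since one must rule out exactly the kind of short-cycle interference that the $w$-girth hypothesis is designed to forbid.

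<br>

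Finally I would collect the size bookkeeping: the polynomial bounds in points~3--5 follow from the type count computed above, the preservation of variable width is immediate from the construction, and the running time is polynomial in the output size because the translation is a direct rewriting of rules once the type relations are enumerated.
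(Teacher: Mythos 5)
There is a genuine gap, and it sits exactly where you flag ``the main obstacle'': your construction is missing the preprocessing that makes the girth argument work, and without it Point~2 is false for your translation. If you replace each rule body wholesale by a single atom $R_\tau(\vect z)$, then a rule of $\Pi_2$ whose body is connected but not \emph{biconnected} (say $r(x,y) \wedge r(y,z)$) can match in the decoded instance $\mathit{dec}(I')$ with its image spread over two different $R_\tau$-facts that merely share the element at the cut variable $y$. High girth does not rule this out: tree-like gluings of facts have infinite girth, so no girth hypothesis localizes such a match inside one fact. Your simple program then has no rule that can fire on this configuration, so $I'$ can satisfy all rules of $\Pi^S_2$ (without $\mn{goal}$) while $\mathit{dec}(I')$ violates $\Pi_2$. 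The paper's construction preempts this with two normalization steps you omit: it first closes the programs under consistent variable identification, and then splits every rule into \emph{biconnected} pieces joined by fresh monadic/nullary IDB relations. Only after that does girth exceeding the atom width $w$ guarantee that the image of each (biconnected, at most $w$-atom) piece lies inside a single new EDB fact; the cross-fact interaction is then handled by a bottom-up pass over the tree of components using the fresh IDB markers.

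A second missing ingredient is the closure under \emph{implications between the new EDB relations}: whenever an old body $q_1$ maps injectively into the pattern $q_2$ of another new relation, the simple program must also contain the variant of the rule with $R_{q_2}$ in place of $R_{q_1}$ (rule ($*$) in the paper; cf.\ the $R_{q_1}/R_{q_2}$ example in the main text with $q_1 = A(x_1) \wedge q_2$). This is needed because in an arbitrary high-girth instance the fact covering a biconnected component typically carries a type strictly richer than the component itself, and with only the exact-type rule the program again fails to fire where the original program would. Your encoding/decoding correspondence and the size bookkeeping (Points~3--5) are fine and match the paper's, but Points~1 and especially~2 require the identification closure, the biconnected decomposition with auxiliary IDBs, and the implication rules; as written, your proposal names the decoding step as the crux but supplies no argument for it, and the argument cannot be supplied for the unnormalized translation.
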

A detailed proof of Theorem~\ref{thm:simplify} is given in the
appendix.  Here, we only sketch the construction, which consists of
several steps. We concentrate on a single Boolean MDDLog program
$\Pi$.  In the first step, we extend $\Pi$ with all rules that can be
obtained from a rule in $\Pi$ by consistently identifying variables.
We then split up each rule in $\Pi$ into multiple rules by introducing
fresh IDB relations whenever
this is possible. After this second step, we obtain a
program which satisfies the following conditions:
\begin{enumerate}

\item[(i)] all rule bodies are biconnected,
that is, when any single variable is removed from the body (by
deleting all atoms that contain it), then the resulting rule body is
still connected; 

\item[(ii)] if $R(x,\dots,x)$ occurs in a rule body with $R$ EDB, then
  the body contains no other EDB atoms.

\end{enumerate}
In the final step, we transform every rule as follows: we replace all
EDB atoms in the rule body by a single EDB atom that uses a fresh EDB
relation which represents the conjunction of all atoms replaced.
Additionally, we need to take care of implications between the new EDB
relations, which gives rise to additional rules. The last step of the
conversion is the most important one, and it is the reason for why we
can only use instances of a certain girth in Point~2 of
Theorem~\ref{thm:simplify}. Assume, for example, that, before the last
step, the program had contained the following rules, where $A$ and $r$
are EDB relations:
$$
\begin{array}{r@{\;}c@{\;}l}
  P(x_3) 
&\leftarrow&
  A(x_1) \wedge r(x_1,x_2) \wedge r(x_2,x_3) \wedge r(x_3,x_1)
\\[\myeqnsep]
\mn{goal}()
&\leftarrow& 
  r(x_1,x_2) \wedge r(x_2,x_3) \wedge r(x_3,x_1)
  \, \wedge \\[\myeqnsep]
  && P(x_1)   \wedge P(x_2)   \wedge P(x_3)
\end{array}
$$
A new ternary EDB relation $R_{q_2}$ is introduced for the EDB body
atoms of the lower rule, where $q_2 = r(x_1,x_2) \wedge r(x_2,x_3)
\wedge r(x_3,x_1)$, and a new ternary EDB relation $R_{q_1}$ is
introduced for the upper rule, $q_1 = A(x_1) \wedge q_2$. Then the
rules are replaced with
$$
\begin{array}{r@{\;}c@{\;}l}
P(x_3) 
&\leftarrow& 
  R_{q_1}(x_1,x_2,x_3) 
\\[\myeqnsep]
\mn{goal}() 
&\leftarrow& 
  R_{q_2}(x_1,x_2,x_3)    \wedge P(x_1)   \wedge P(x_2)   \wedge
  P(x_3)
 \\[\myeqnsep]
\mn{goal}() 
&\leftarrow& 
  R_{q_1}(x_1,x_2,x_3)    \wedge P(x_1)   \wedge P(x_2)   \wedge
  P(x_3)
\end{array}
$$
Note that $q_1 \subseteq q_2$, which results in two copies of the
goal rule to be generated. To understand the issues with girth,
consider the $\Sbf'_E$-instance $I$ defined by
$$
  R_{q_1}(a,a',c'), R_{q_1}(b,b',a'), R_{q_1}(c,c',b').
$$
The goal rules from the simplified program do not apply. But
when translating into an $\Sbf_E$-instance $J$ in the obvious way, the
goal rule of the original program does apply. The intuitive
reason is that, when we translate $J$ back to $I$, we get additional
facts $R_{q_2}(a',b',c')$, $R_{q_2}(b',c',a')$, $R_{q_2}(c',a',b')$
that are `missed' in $I$. Such effects can only happen on instances
whose girth is at most $w$, such as $I$. 

\subsubsection*{From Containment to Relativized Emptiness}

A \emph{disjointness constraint} is a rule of the form
$\bot \leftarrow P_1(\vect{x}) \wedge \cdots \wedge P_n(\vect{x})$ where all relations are of the same arity and at most unary. Let
$\Pi$ be a Boolean MDDLog program over EDB schema $\Sbf_E$ and $D$ a
set of disjointness constraints over $\Sbf_E$. We say that $\Pi$ is
\emph{semi-simple w.r.t.~}$D$ if $\Pi$ is simple when all relations
that occur in~$D$ are viewed as IDB relations. We say that $\Pi$ is
\emph{empty w.r.t.~}$D$ if for all $\Sbf_E$-instances~$I$ with $I
\models D$, we have $I \not\models \Pi$. The problem of relativized
emptiness is to decide, given a Boolean MDDLog program $\Pi$ and a set
of disjointness constraints $D$ such that $\Pi$ is semi-simple w.r.t.\
$D$, whether $\Pi$ is empty w.r.t.~$D$.
\begin{theorem}
\label{thm:toempty} 
Let $\Pi_1,\Pi_2$ be simple Boolean MDDLog programs over EDB schema
$\Sbf_E$. Then one can construct a Boolean MDDLog program $\Pi$ over EDB
schema $\Sbf'_E$ and a set of disjointness constraints $D$ over
$\Sbf'_E$ such that $\Pi$ is semi-simple w.r.t.~$D$ and
\begin{enumerate}
  
\item if $\Pi_1 \not\subseteq \Pi_2$, then $\Pi$ is non-empty w.r.t.\ $D$;

\item if $\Pi$ is non-empty w.r.t.\ $D$ on instances of girth $>g$,
  for some $g>0$,  then $\Pi_1 \not\subseteq_{>g} \Pi_2$;


\end{enumerate}
Moreover, 
\begin{enumerate}
\setcounter{enumi}{2}


 \item $|\Pi| \leq |\Pi_1| \cdot 2^{|\Sbf_{I,2}|\cdot v_1}$, $|D| \leq 
   \Omc(|\Sbf_{I,2}|)$;

\item  the variable width of $\Pi \cup D$ is bounded by
the variable width of $\Pi_1 \cup \Pi_2$; 

\item $|\Sbf'_E| \leq |\Sbf_E|+|\Pi_2|$.

\end{enumerate}
where $v_1$ is the variable width of $\Pi_1$ and $\Sbf_{I,2}$ is the
IDB schema of $\Pi_2$.  The construction takes time polynomial in
$|\Pi \cup D|$.
\end{theorem}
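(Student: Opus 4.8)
The plan is to characterize $\Pi_1 \not\subseteq \Pi_2$ by a witness instance and then internalize, into the EDB, the part of the witness that concerns $\Pi_2$. Since both programs are Boolean, $\Pi_1 \not\subseteq \Pi_2$ holds iff there is an $\Sbf_E$-instance $I$ with $I \models \Pi_1$ and $I \not\models \Pi_2$; and since $\Pi_2$ is monadic, $I \not\models \Pi_2$ means that there is a model $J \supseteq I$ of $\Pi_2$ that extends $I$ by a labelling of $\adom(I)$ with the (monadic) IDB relations of $\Pi_2$ and in which $\mn{goal}$ is false. The idea is to guess this labelling as part of the input: I would set $\Sbf'_E = \Sbf_E \cup \Sbf_{I,2}$, adding for each monadic IDB of $\Pi_2$ also a complementary relation, so that an $\Sbf'_E$-instance $I'$ simultaneously encodes $I$ and a candidate $\Pi_2$-labelling $\mu$. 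The relations coming from $\Sbf_{I,2}$ are exactly those that occur in $D$, hence the ones viewed as IDB when checking semi-simplicity.

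Next I would split the two remaining requirements on $I'$ according to where each can be enforced. The requirement that $\mu$ be a goal-free model of $\Pi_2$ decomposes, because $\Pi_2$ is simple, into (a) unary consistency and single-variable conditions (complements not both present, and the single-variable goal and derivation rules of $\Pi_2$), which I put into $D$ as disjointness constraints, and (b) the conditions attached to the unique EDB atom of each $\Pi_2$-rule, which cannot be expressed as unary disjointness constraints and are therefore folded into the rules of $\Pi$; this is precisely why $\Pi$ is only semi-simple, and not simple, w.r.t.\ $D$. The requirement $I \models \Pi_1$ is naturally the statement that $\Pi_1$, run on $I$, forces $\mn{goal}$ in every model, so I let $\Pi$ be obtained from $\Pi_1$ with $\mn{goal}$ inherited from $\Pi_1$. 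To interface $\Pi_1$'s rules with the internalized labelling and with $D$, each rule of $\Pi_1$ is relativized by annotating each of its (at most $v_1$) variables with one of the $2^{|\Sbf_{I,2}|}$ possible $\Pi_2$-types; this product construction is the source of the $|\Pi_1| \cdot 2^{|\Sbf_{I,2}| \cdot v_1}$ bound, while it leaves the variable width unchanged, keeps $|D|$ linear in $|\Sbf_{I,2}|$, and yields $|\Sbf'_E| \le |\Sbf_E| + |\Pi_2|$.

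For correctness I would argue the two directions separately. For Point~1, given a witness $(I,J)$, I build $I'$ from $I$ together with the complete $\Pi_2$-labelling read off from $J$; completeness makes $\neg A$ coincide with the complementary relation, so one checks directly that $I' \models D$ (no goal rule of $\Pi_2$ fires and complements are consistent, since $J$ is goal-free) and that $I' \models \Pi$ (since $I \models \Pi_1$ and the relativized rules agree with $\Pi_1$ on the labelled instance). For Point~2, from an $I'$ of girth $>g$ with $I' \models D$ and $I' \models \Pi$, I read off the $\Sbf_E$-reduct $I$, whose girth is still $>g$ since dropping facts only destroys cycles, conclude $I \models \Pi_1$ from $\Pi$ forcing $\mn{goal}$, and extend the partial labelling $\mu$ recorded in $I'$ to a genuine goal-free model $J$ of $\Pi_2$, whence $I \not\models \Pi_2$ and thus $\Pi_1 \not\subseteq_{>g} \Pi_2$.

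I expect the main obstacle to be exactly this last completion step in Point~2: the constraints in $D$ guarantee only ``at most one'' of each complementary pair, so $\mu$ may be partial, and completing it to a full model of $\Pi_2$ must be done without ever triggering a goal rule or violating a derivation rule of $\Pi_2$. This is where the girth hypothesis is used, since short cycles in $I'$ could force conflicting completions, and it is the reason the guarantee in Point~2 is relative to girth rather than unconditional. The remaining size and variable-width bounds are then read off by inspection of the construction.
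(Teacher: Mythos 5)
Your setup (adding $\Sbf_{I,2}$ and complementary copies to the EDB schema, complementarity constraints in $D$, annotating each $\Pi_1$-rule with a full $\Pi_2$-type on every variable) matches the paper's construction, and your Point~1 argument is essentially the paper's. But there is a genuine gap in Point~2, at exactly the spot you flag as the ``main obstacle'': your plan to keep the witness instance fixed and \emph{complete} the recorded labelling to a goal-free model of $\Pi_2$ does not work, and no girth hypothesis rescues it. First, the step you describe as ``folded into the rules of $\Pi$'' must be made precise; in the paper it is a \emph{filtering} step: one deletes every annotated rule whose body contains, under some substitution $\sigma$, the body $\sigma(q)$ of a $\Pi_2$-rule together with $\overline{P_i}(\sigma(\vect{y}_i))$ for all its head disjuncts (and every rule body carries $\overline{\mn{goal}_2}()$, never $\mn{goal}_2()$). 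This only guarantees that the labelling in a non-emptiness witness $I'$ is consistent with $\Pi_2$ \emph{locally, where rules of $\Pi$ fire}. Elsewhere $I'$ may contain an EDB fact $A(\vect{a})$ at which a $\Pi_2$-rule is outright violated: body true, all head disjuncts explicitly carrying the $\overline{\cdot}$-label. Completing the labelling cannot fix this --- the only repair by \emph{adding} IDB facts is to assert some head atom $Q(a)$ against its recorded complement, and that can cascade through other $\Pi_2$-rules and force $\mn{goal}_2()$, in which case the $\Sbf_E$-reduct is not a counterexample at all. The paper's proof instead repairs the \emph{instance}: it exhaustively \emph{deletes} each offending EDB fact $A(h(\vect{x}))$ and proves that $I \models \Pi$ is preserved under these deletions; only then does the (unchanged) labelling become a goal-free model of $\Pi_2$ over the repaired reduct.

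This deletion argument is also where girth actually enters, and your intuition (``short cycles could force conflicting completions'') misidentifies it. Girth $>1$ means no fact repeats a constant; since $\Pi$ is semi-simple, the unique EDB atom of a rule body contains all body variables, each exactly once, so any homomorphism $h'$ from a rule body of $\Pi$ whose EDB atom lands on the deleted fact must be \emph{injective}. One can then pull the violating substitution back along $h'$ and conclude that this rule would have been removed by the filtering step --- contradiction --- so no surviving rule of $\Pi$ depends on the deleted fact, and the repair is sound. Without this, Point~2 does not go through. Two smaller deviations: putting the EDB-free single-variable rules of $\Pi_2$ into $D$ as disjointness constraints is legal syntactically but inflates $|D|$ up to $\Omc(|\Pi_2|\cdot|\Sbf_{I,2}|)$, violating the stated bound $|D| \leq \Omc(|\Sbf_{I,2}|)$ (the paper keeps $D$ to the $|\Sbf_{I,2}|$ complementarity rules and filters those $\Pi_2$-rules in $\Pi$ as well); and your claim that the reduct's girth is ``still $>g$ since dropping facts only destroys cycles'' is correct but is needed again, silently, after the deletions of the repair step.
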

Note that, in Point~2 of Theorem~\ref{thm:toempty}, girth one
instances are excluded. 

To prove Theorem~\ref{thm:toempty}, let $\Pi_1,\Pi_2$ be simple
Boolean MDDLog programs over EDB schema $\Sbf_E$. For $i \in \{1,2\}$,
let $\Sbf_{I,i}$ be the set of IDB relations in $\Pi_i$ with goal
relations $\mn{goal}_i \in \Sbf_{I,i}$ and assume w.l.o.g.\ that
$\Sbf_{I,1} \cap \Sbf_{I,2} = \emptyset$. Set $\Sbf'_{E} := \Sbf_E
\cup \Sbf_{I,2} \cup \{ \overline{P} \mid P \in \Sbf_{I,2} \}$.
%
%
%
%
The MDDLog program $\Pi$ is constructed in two steps. We first add to
$\Pi$ every rule that can be obtained from a rule $\rho$ in $\Pi_1$ by
extending the rule body with
\begin{itemize}

\item $P(x)$ or $\overline{P}(x)$, for every variable $x$ in $\rho$
  and every unary $P \in \Sbf_{I,2}$, and 

\item $P()$ or $\overline{P}()$ for every nullary $P \in \Sbf_{I,2}$;
  for $P=\mn{goal}_2$, we always include $\overline{P}()$ but never
  $P()$.

\end{itemize}
In the second step of the construction, we remove from $\Pi$ every
rule $\rho$ whose body being true implies that a rule from $\Pi_2$ is
violated, that is, there is a rule whose body is the CQ $q(\vect{x})$
and with head $P_1(\vect{y}_1) \vee \cdots \vee P_n(\vect{y}_n)$ and a
variable substitution $\sigma$ such that\footnote{Of course, each
  $\vect{y}_i$
consists of either zero or one variable.}
\begin{itemize}

\item $\sigma(q)$ is a subset of the body of $\rho$ and

\item $\overline{P_i}(\sigma(\vect{y}_i))$ is in the body of $\rho$,
  for $1 \leq i \leq n$.

\end{itemize}
%
%
The goal relation of $\Pi$ is $\mn{goal}_1()$. The set of disjointness
constraints $D$ then consists of all rules
$\bot \leftarrow P(x) \wedge \overline{P}(x)$ for each unary
$P \in \Sbf_{I,2}$ and $\bot \leftarrow P() \wedge \overline{P}()$
for each nullary $P \in \Sbf_{I,2}$. It is not hard to verify that
$\Pi$ and $D$ satisfy the size bounds from
Theorem~\ref{thm:toempty}. 
We show in the appendix that $\Pi$
satisfies Points~1 and~2 of Theorem~\ref{thm:toempty}.

\subsubsection*{Deciding Relativized Emptiness}

We now show how to decide emptiness of an MDDLog program $\Pi$ 
w.r.t.\ a set of disjointness constraints $D$ assuming that $\Pi$ is
semi-simple w.r.t.\ $D$.
%
%
\begin{theorem}
\label{thm:emptinessCompl} 
Given a Boolean MDDLog program $\Pi$ over EDB schema $\Sbf_E$ and a set of
disjointness constraints $D$ over $\Sbf_E$ such that $\Pi$ is semi-simple
w.r.t.\ $D$, one can decide non-deterministically in time
$\Omc(|\Pi|^3) \cdot 2^{\Omc(|D|\cdot v)}$ whether $\Pi$ is empty
w.r.t.~$D$, where $v$ is the variable width of~$\Pi$.
%
\end{theorem}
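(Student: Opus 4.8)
The plan is to reduce relativized emptiness to a homomorphism question over finitely many fixed templates, and then to decide that question by a type-elimination procedure whose running time is governed exactly by the claimed bound.

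First I would reformulate emptiness combinatorially. For a Boolean program, $I \models \Pi$ means that $\mn{goal}()$ lies in every model of $\Pi$ extending~$I$; equivalently, there is \emph{no} model of $\Pi$ over $I$ in which $\mn{goal}()$ is false. Call such a model, if it exists, a \emph{goal-avoiding model}: it is a truth assignment $\nu$ to the nullary IDB relations (with $\mn{goal}$ set to false) together with an assignment of unary IDB relations to $\adom(I)$ that satisfies every rule while never triggering a goal rule. The crucial point is that, since $\Pi$ is semi-simple \wrt~$D$, every rule is \emph{local}: once the relations of $D$ are read as colours, each rule constrains only the single genuine EDB fact in its body together with the unary colours of that fact's elements and the global assignment $\nu$. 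Consequently, for each consistent $\nu$ the goal-avoiding colourings of $I$ are exactly the homomorphisms from $I$ into a finite template $B_\nu$ over $\Sbf_E$, whose elements are the admissible element colours and whose facts record the locally admissible fact colourings. Hence $I \not\models \Pi$ iff $I \rightarrow B_\nu$ for some $\nu$, and therefore $\Pi$ is non-empty \wrt~$D$ iff some $D$-consistent $\Sbf_E$-instance $I$ maps homomorphically into \emph{none} of the $B_\nu$. I stress that this is the opposite of exhibiting a goal-free model: a witness for non-emptiness is a $D$-consistent instance on which \emph{no} goal-avoiding colouring exists at all.

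Next I would show that such a witness, if one exists, can be taken tree-shaped, by unravelling an arbitrary witness into an instance of arbitrarily large girth; unravelling preserves $D$-consistency and also the absence of any homomorphism into any $B_\nu$, since a homomorphism out of the unravelling would project back to one out of the original. This reduces the question to a type-elimination (equivalently, a tree-automaton emptiness) computation. A \emph{type} is a colouring of the $\le v$ variables of a rule by the $D$-relevant unary predicates, of which there are at most $2^{\Omc(|D| \cdot v)}$; the genuine IDB colours and the derivation of $\mn{goal}$ are not part of the type but are propagated by an inner fixpoint. I would repeatedly discard types that cannot be glued into a $D$-consistent tree on which goal is forced, where the per-round consistency test propagates the derived unary IDBs and detects triggered goal rules by a transitive-closure-style computation running in time $\Omc(|\Pi|^3)$. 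At the fixpoint, $\Pi$ is non-empty \wrt~$D$ iff a forcing configuration can still be assembled. The overall procedure is deterministic and runs in time $\Omc(|\Pi|^3) \cdot 2^{\Omc(|D| \cdot v)}$, and hence in particular yields a non-deterministic decision procedure of the claimed complexity.

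I expect the correctness of the local-to-global step to be the main obstacle: proving that local type consistency along the tree genuinely matches the existence (resp.\ non-existence) of a forcing instance. Two points need care. First, because the disjunctive rule heads make goal-avoidance a disjunctive choice, ``no goal-avoiding colouring'' is a priori a complementation of a projection, whose naive automaton determinisation would cost a further exponential; the tree reduction is precisely what lets me keep the state space at $2^{\Omc(|D| \cdot v)}$, and I must verify that the global nullary predicates (chosen uniformly for the whole instance) are absorbed into the elimination without inflating this count. Second, the unravelling must respect the girth bookkeeping used in the rest of the pipeline: I must check that raising the girth neither creates nor destroys goal-avoiding colourings, so that the procedure faithfully decides emptiness and, combined with Theorems~\ref{thm:simplify} and~\ref{thm:toempty}, delivers the girth-restricted containment conclusions those results require. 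Establishing these equivalences, rather than the routine complexity accounting, is where the real work lies.
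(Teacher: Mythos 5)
Your opening reformulation is fine: reading the $D$-relations as colours and packaging goal-avoiding extensions as homomorphisms into finite templates $B_\nu$ is essentially the paper's Lemma~\ref{lem:NFtoCSP} (its templates $T_\delta$), though the paper uses that lemma only later, for the girth result. The fatal step is the unravelling. Homomorphisms compose in the wrong direction for your purposes: the unravelling $I'$ of a witness $I$ comes with a projection homomorphism $I' \rightarrow I$, so any homomorphism $I \rightarrow B_\nu$ lifts to one from $I'$, but a homomorphism $I' \rightarrow B_\nu$ does \emph{not} ``project back'' to one from $I$. Unravelling therefore preserves the \emph{existence} of homomorphisms into the templates, not their absence, which is exactly the property a non-emptiness witness must have. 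Concretely, if the program expresses non-2-colourability (template $K_2$), an odd cycle is a witness with $I \not\rightarrow K_2$, yet its unravelling is a path and maps into $K_2$. The paper itself stresses this: Example~1 exhibits an MDDLog program whose witnesses are necessarily cyclic, and the introduction names the failure of tree-shaped witnesses as the central technical obstacle of MDDLog containment. The type-elimination/tree-automaton procedure you build on top of the tree reduction is therefore unsound: it would wrongly declare $\Pi$ empty whenever all witnesses are cyclic.

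The paper's actual proof runs the homomorphism in the opposite direction and needs no tree structure at all. For each 0-type $\theta$ it builds a single canonical instance $K_\theta$ whose elements are the 1-types over the $D$-relations, with every relation in $\Sbf_E \setminus \Sbf_D$ interpreted maximally; any $D$-consistent instance $I$ of 0-type $\theta$ maps into $K_\theta$ via $a \mapsto t_a$, and since truth of MDDLog programs is preserved \emph{forward} along homomorphisms (disjunctive chase), $I \models \Pi$ implies $K_\theta \models \Pi$ (Lemma~\ref{lem:KthetaAreEnough}). Emptiness thus reduces to checking $K_\theta \not\models \Pi$ for at most $2^{|D|}$ instances of size $2^{\Omc(|D|)}$ each, by nondeterministically guessing an IDB extension avoiding $\mn{goal}()$ and verifying all rules; enumerating the at most $2^{|D| \cdot v}$ candidate homomorphisms per rule body yields exactly the bound $\Omc(|\Pi|^3) \cdot 2^{\Omc(|D| \cdot v)}$. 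The girth concerns you raise are deliberately kept out of this theorem and handled afterwards in Lemma~\ref{thm:emptinessGirth} --- and precisely because unravelling cannot do that job, the paper invokes Erd\H{o}s's construction (Lemma~\ref{lem:erdoes}), which produces instances of arbitrarily large \emph{finite} girth whose homomorphism behaviour with respect to small templates is preserved, something a tree unravelling does not provide.
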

Let $\Pi$ be a Boolean MDDLog program over EDB schema $\Sbf_E$ and let
$D$ be a set of disjointness constraints over $\Sbf_E$ such that $\Pi$
is semi-simple w.r.t.\ $D$. To prove Theorem~\ref{thm:emptinessCompl},
we show how to construct a finite set of $\Sbf_E$-instances satisfying
$D$ such that $\Pi$ is empty w.r.t.\ $D$ if and only if it is empty in
the constructed set of instances. Let $\Sbf_D$ be the set of all EDB
relations that occur in $D$. For $i \in \{0,1\}$, an \emph{i-type} 
is a set $t$ of $i$-ary relation symbols from $\Sbf_D$ such that $t$
does not contain all EDB relations that co-occur in a disjointness
rule in $D$.  The \emph{0-type of an instance} $I$ is the set $\theta$
of all nullary $P \in \Sbf_D$ with $P() \in I$. For each constant $a$
of $I$, we use $t_a$ to denote the \emph{1-type that $a$ has in} $I$,
that is, $t_a$ contains all unary $P \in \Sbf_D$ with $P(a) \in I$.

We build an $\Sbf_E$-instance $K_\theta$ for each 0-type $\theta$. The
elements of $K_\theta$ are exactly the 1-types and $K_\theta$ consists
of the following facts:
\begin{itemize}

\item $P(t)$ for each 1-type $t$ and each $P \in t$;

\item $R(t_1,\dots,t_n)$ for each relation $R \in \Sbf_E \setminus
  \Sbf_D$
  and all 1-types $t_1,\dots,t_n$;

\item $P()$ for each nullary $P \in \theta$.
  
\end{itemize}
Note that, by construction, $K_\theta$ is an $\Sbf_E$-instance that
satisfies all constraints in~$D$.
%
%
%
\begin{restatable}{lemma}{LEMKthetaAreEnough}\label{lem:KthetaAreEnough}
$\Pi$ is empty w.r.t.\ $D$ iff $K_\theta \not\models \Pi$ for all
0-types $\theta$.
%
%
%
\end{restatable}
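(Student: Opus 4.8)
The plan is to prove the two directions separately, with essentially all of the work lying in the ``if'' direction.

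\smallskip\noindent\emph{The ``only if'' direction.}
Suppose $\Pi$ is empty w.r.t.\ $D$. As observed right after its construction, each $K_\theta$ is an $\Sbf_E$-instance satisfying all constraints in $D$: no $1$-type contains all relations that co-occur in a unary disjointness rule, and $\theta$ contains no full set of relations from a nullary one. Applying emptiness to the instance $K_\theta$ therefore gives $K_\theta \not\models \Pi$ for every $0$-type $\theta$, so this direction is immediate.

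\smallskip\noindent\emph{The ``if'' direction.}
Assume $K_\theta \not\models \Pi$ for all $\theta$ and take an arbitrary $\Sbf_E$-instance $I$ with $I \models D$; I must show $I \not\models \Pi$, that is, exhibit a model $J \in \mn{Mod}(\Pi)$ with $I \subseteq J$ and $\mn{goal}() \notin J$. Let $\theta$ be the $0$-type of $I$ and define $h\colon \adom(I) \to \adom(K_\theta)$ by $h(a)=t_a$. Because $I \models D$, each $t_a$ is a legal $1$-type and $\theta$ a legal $0$-type, so $h$ is well defined. I first check that $h$ is a homomorphism from $I$ into $K_\theta$: unary and nullary $\Sbf_D$-facts of $I$ are preserved by the very definitions of $t_a$ and $\theta$, and every fact $R(a_1,\dots,a_n)$ with $R \in \Sbf_E \setminus \Sbf_D$ is sent to $R(t_{a_1},\dots,t_{a_n})$, which lies in $K_\theta$ since $K_\theta$ contains \emph{all} such tuples over $1$-types. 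Now fix a model $J_\theta \in \mn{Mod}(\Pi)$ with $K_\theta \subseteq J_\theta$ and $\mn{goal}() \notin J_\theta$, which exists as $K_\theta \not\models \Pi$. I define $J$ on domain $\adom(I)$ by retaining all EDB facts of $I$ and pulling the IDB assignment of $J_\theta$ back along $h$: for every unary IDB relation $P$ put $P(a) \in J$ iff $P(t_a) \in J_\theta$, and for every nullary IDB relation $P$ (including $\mn{goal}$) put $P() \in J$ iff $P() \in J_\theta$. In particular $\mn{goal}() \notin J$.

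It remains to verify $J \models \Pi$, and here semi-simplicity is what makes rule satisfaction transfer between $J$ and $J_\theta$ along $h$. Consider a rule $\rho$ and a substitution $\sigma$ into $\adom(I)$ that satisfies the body of $\rho$ in $J$, and set $\sigma' = h \circ \sigma$. Since $\Pi$ is semi-simple w.r.t.\ $D$, the body of $\rho$ contains at most one atom over $\Sbf_E \setminus \Sbf_D$, and this atom, if present, carries all body variables, each exactly once; every remaining body atom is unary or nullary over the IDB relations together with $\Sbf_D$. Using the homomorphism property on the single $\Sbf_E \setminus \Sbf_D$-atom (whose image lies in $K_\theta$ by genericity) and the defining equivalences of $J$, $t_a$, and $\theta$ on the unary and nullary atoms, one obtains that $\sigma'$ satisfies the body of $\rho$ in $J_\theta$; the case where $\rho$ has no $\Sbf_E \setminus \Sbf_D$-atom is identical, as $\rho$ then has at most one variable. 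Since $J_\theta \models \Pi$, the head holds under $\sigma'$, say $S_j(\sigma'(\vect{x}_j)) \in J_\theta$, and by the pull-back definition of $J$ this yields $S_j(\sigma(\vect{x}_j)) \in J$, so $J$ satisfies the head under $\sigma$. For goal rules the same transfer shows that their bodies are never satisfiable in $J$ (otherwise $\mn{goal}() \in J_\theta$, contradicting the choice of $J_\theta$), which is consistent with $\mn{goal}() \notin J$. Hence $J \models \Pi$ with $\mn{goal}() \notin J$, giving $I \not\models \Pi$.

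\smallskip
The main obstacle is precisely this model-transfer step: one must argue carefully that the single generic EDB atom permitted by semi-simplicity, together with the presence of all $\Sbf_E \setminus \Sbf_D$-tuples in $K_\theta$, makes satisfaction of each rule body equivalent on both sides of $h$, so that pulling a goal-free model of $\Pi$ over $K_\theta$ back along $h$ produces a goal-free model of $\Pi$ over $I$.
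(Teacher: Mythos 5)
Your proof is correct and follows essentially the same route as the paper: both hinge on the observation that $h(a)=t_a$ is a homomorphism from $I$ to $K_\theta$ (using that $I\models D$ makes each $t_a$ and $\theta$ legal types and that $K_\theta$ contains all tuples over $\Sbf_E\setminus\Sbf_D$), the paper then simply citing preservation of Boolean MDDLog truth under homomorphisms as well known, whereas you prove that fact inline by pulling a goal-free model of $\Pi$ over $K_\theta$ back along $h$. One small remark: this pull-back transfer does not actually require semi-simplicity, as your closing comment suggests, but only that every non-\mn{goal} IDB relation is monadic (and \mn{goal} nullary), which is precisely why the paper can invoke the preservation fact for arbitrary MDDLog programs.
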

By Lemma~\ref{lem:KthetaAreEnough}, we can decide emptiness of $\Pi$
by constructing all instances $K_\theta$ and then checking whether
$K_\theta \not\models \Pi$. The latter is done by guessing an
extension $K'_\theta$ of $K_\theta$ to the IDB relations in $\Pi$ that
does not contain the goal relation, and then verifying by an iteration
over all possible homomorphisms from rule bodies in $\Pi$ to
$K'_\theta$ that all rules in $\Pi$ are satisfied in $K'_\theta$. 
\begin{restatable}{lemma}{LEMruntime}
  The algorithm for deciding relativized emptiness runs in time
  $\Omc(|\Pi|^3) \cdot 2^{\Omc(|D|\cdot v)}$.
\end{restatable}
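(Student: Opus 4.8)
The plan is to directly instrument the decision procedure justified by Lemma~\ref{lem:KthetaAreEnough}: enumerate all $0$-types $\theta$, build each instance $K_\theta$, and check $K_\theta \not\models \Pi$ by guessing an IDB-extension $K'_\theta$ without the goal relation and verifying that every rule of $\Pi$ is satisfied in it. The whole analysis rests on one combinatorial observation. An $i$-type is a set of $i$-ary relations drawn from $\Sbf_D$, so the number of $0$-types and the number $N$ of $1$-types are both bounded by $2^{|\Sbf_D|} \le 2^{\Omc(|D|)}$. This is the only place where $|D|$ enters the bound, and the factor $2^{\Omc(|D|)}$ coming from the outer enumeration over $0$-types will be absorbed into the larger factor derived below.

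First I would bound the size of a single $K_\theta$, and hence of any guessed $K'_\theta$. Its domain is the set of $1$-types, of size $N \le 2^{\Omc(|D|)}$. The unary facts $P(t)$ and the nullary facts contribute at most $N\cdot|\Sbf_D| + |\Sbf_D|$ facts. The dominating term comes from the facts $R(t_1,\dots,t_n)$ with $R \in \Sbf_E \setminus \Sbf_D$: here I use semi-simplicity, which forces every rule to contain at most one atom over $\Sbf_E\setminus\Sbf_D$ and this atom to list all body variables, each exactly once; consequently every such $R$ has arity at most the variable width $v$. Each such relation therefore contributes at most $N^{v} \le 2^{\Omc(|D|\cdot v)}$ facts, and summing over the at most $|\Pi|$ relations in $\Sbf_E$ yields $|K_\theta| \le \Omc(|\Pi|)\cdot 2^{\Omc(|D|\cdot v)}$. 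Adding the guessed unary and nullary IDB facts (at most $N\cdot|\Sbf_I| + |\Sbf_I|$ of them, where $\Sbf_I$ is the IDB schema of $\Pi$) does not change the bound, so $|K'_\theta| \le \Omc(|\Pi|)\cdot 2^{\Omc(|D|\cdot v)}$ as well. This is exactly where the product $|D|\cdot v$ appears in the exponent.

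It then remains to bound the cost of verifying that a guessed $K'_\theta$ is a model of $\Pi$ not containing the goal. I would process the rules one at a time; there are at most $|\Pi|$ of them. For a rule whose body contains an atom over $\Sbf_E \setminus \Sbf_D$, every homomorphism of the body into $K'_\theta$ is determined by the image of that single atom, since it mentions all body variables exactly once; hence the number of homomorphisms to inspect equals the number of facts of the corresponding relation, at most $N^{v} \le 2^{\Omc(|D|\cdot v)}$. A rule without such an atom has at most one variable and thus at most $N \le 2^{\Omc(|D|\cdot v)}$ homomorphisms. For each homomorphism, one checks that the remaining (at most unary) body atoms hold and that some head disjunct is satisfied; with the at most $|\Pi|$ body atoms each tested against $K'_\theta$ in time polynomial in $|\Pi|$, this costs $\Omc(|\Pi|^2)$ per homomorphism. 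Collecting the factors --- at most $|\Pi|$ rules, each inducing $2^{\Omc(|D|\cdot v)}$ homomorphisms checked in time $\Omc(|\Pi|^2)$ --- gives $\Omc(|\Pi|^3)\cdot 2^{\Omc(|D|\cdot v)}$ per $0$-type. Multiplying by the $2^{\Omc(|D|)}$ $0$-types is absorbed since $v \ge 1$, yielding the claimed bound $\Omc(|\Pi|^3)\cdot 2^{\Omc(|D|\cdot v)}$.

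The main obstacle is purely quantitative: ensuring the exponent is exactly $\Omc(|D|\cdot v)$ rather than something larger. The two places that could inflate it are the enumeration of high-arity EDB facts and the enumeration of homomorphisms, and in both cases the crucial input is that semi-simplicity caps the relevant arities and the number of body variables by $v$, turning what could be a $|\Pi|$-fold product over the $N$ $1$-types into $N^{v}$. One has to be slightly careful that neither the guessing of the IDB-extension (which ranges only over unary and nullary relations, contributing a mere $2^{\Omc(|D|)}$) nor the outer loop over $0$-types pushes the exponent beyond $\Omc(|D|\cdot v)$; both are dominated once $v \ge 1$. The remaining work --- pinning down the polynomial factor as $\Omc(|\Pi|^3)$ --- is routine bookkeeping over the number of rules, the body atoms, and the per-atom lookups.
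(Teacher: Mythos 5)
Your proposal is correct and follows essentially the same route as the paper: enumerate $0$-types, build $K_\theta$ over the $1$-types, guess the (unary and nullary) IDB extension $K'_\theta$, and verify all rules, with the exponent $|D|\cdot v$ arising from mapping at most $v$ body variables into the $2^{\Omc(|D|)}$ elements and the outer factors ($0$-types, guessing) being absorbed. The only differences are cosmetic bookkeeping: the paper avoids materializing the trivially-interpreted relations in $\Sbf_E \setminus \Sbf_D$ altogether and enumerates all $2^{|D|\cdot v}$ candidate variable assignments per rule, whereas you materialize those relations (legitimately, using semi-simplicity to cap their arity at $v$) and enumerate homomorphisms through the facts of the single EDB atom --- both yield the stated $\Omc(|\Pi|^3)\cdot 2^{\Omc(|D|\cdot v)}$ bound.
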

%
%
%
%
%

We still have to address the girth restrictions in
Theorems~\ref{thm:simplify} and~\ref{thm:toempty}, which are not
reflected in Theorem~\ref{thm:emptinessCompl}.  In fact, it suffices
to observe that relativized emptiness is independent of the girth of
witnessing structures. This is made precise by the following result.
\begin{restatable}{lemma}{THMemptinessGirth}\label{thm:emptinessGirth}
For every Boolean MDDLog program $\Pi$ over EDB schema $\Sbf_E$ and set of
disjointness constraints $D$ over $\Sbf_E$ such that $\Pi$ is
semi-simple w.r.t.\ $D$, the following are equivalent for any $g \geq
0$:
\begin{enumerate}

\item  $\Pi$ is empty regarding $D$ and 

\item $\Pi$ is empty regarding $D$ and instances of girth
  exceeding~$g$.

\end{enumerate}
\end{restatable}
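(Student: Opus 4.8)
The plan is to prove the two implications separately. The direction (1)$\Rightarrow$(2) is immediate: the $\Sbf_E$-instances of girth exceeding $g$ that satisfy $D$ form a subclass of all $\Sbf_E$-instances satisfying $D$, so emptiness on the larger class restricts to emptiness on the smaller one. For (2)$\Rightarrow$(1) I would argue by contraposition: assuming $\Pi$ is not empty \relativeto $D$, I produce a witnessing instance of girth exceeding $g$. So fix an $\Sbf_E$-instance $I$ with $I \models D$ and $I \models \Pi$ (one may equally well take the canonical witness $K_\theta$ supplied by Lemma~\ref{lem:KthetaAreEnough}).

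The first and conceptually central step is to recast ``$I \models \Pi$'' as the non-existence of certain homomorphisms. Since $\Pi$ is semi-simple \wrt $D$, every rule body consists of at most one EDB atom from $\Sbf_E \setminus \Sbf_D$ (which contains all rule variables, each exactly once), together with unary atoms---either unary IDB atoms or the at-most-unary relations of $\Sbf_D$ viewed as labels---and nullary atoms. Hence satisfaction of a rule is a purely local, per-fact condition. I would exploit this to show that $I \not\models \Pi$ holds iff there is a goal-free choice $N$ of the nullary IDB relations together with an assignment of a set of unary IDB relations (a \emph{color}) to each element that is closed under all rules and avoids $\mn{goal}$; and, crucially, that for a fixed $N$ the existence of such an assignment is exactly the existence of a homomorphism $I \to T^N$ into a finite template $T^N$ over $\Sbf_E$. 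Here the domain of $T^N$ consists of the colors, a color-tuple is placed in relation $R$ iff it satisfies every rule whose EDB atom uses $R$ (given $N$), the unary relations of $\Sbf_E$ are interpreted so that homomorphisms respect the labels already present in $I$, and colors forcing $\mn{goal}$ are excluded. Consequently $I \models \Pi$ is equivalent to $I \not\to T^N$ for each of the finitely many goal-free choices $N$, every template $T^N$ having size bounded by a function of $\Pi$ alone.

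The second step applies the sparse incomparability lemma of Ne\v{s}et\v{r}il and R\"odl, the same tool underlying the treatment of girth in \cite{DBLP:journals/siamcomp/FederV98}. Let $k = \max_N |T^N|$. The lemma yields a finite $\Sbf_E$-instance $B$ of girth exceeding $g$ together with a homomorphism $B \to I$ such that $B$ and $I$ are homomorphically indistinguishable by all structures of size at most $k$; in particular $B \to T^N \iff I \to T^N$ for every $N$. Since $I \not\to T^N$ for all goal-free $N$, the same holds for $B$, and the characterization of the first step gives $B \models \Pi$. Moreover $B \models D$, because a disjointness constraint is a negative per-element condition preserved backwards along the homomorphism $B \to I$ and $I \models D$. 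Thus $B$ is an instance of girth exceeding $g$ that satisfies $D$ and $\Pi$, contradicting (2) and completing the contrapositive.

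I expect the main obstacle to be the first step rather than the second. Establishing the homomorphism characterization requires care with the disjunctive heads (each becomes the constraint that at least one disjunct holds on the color-tuple of a single fact), with rules lacking an EDB atom (which by simplicity constrain a single element, or the nullary part $N$, in isolation), and with threading the fixed unary labels of $I$ through the templates so that homomorphisms respect them; verifying that closure under rules is genuinely captured as a local, template-definable condition---rather than a global fixpoint---is precisely where the simplicity of $\Pi$ \wrt $D$ is essential. Once this reformulation is in place, invoking sparse incomparability and checking the girth and the preservation of $D$ are routine.
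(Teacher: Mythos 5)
Your proposal is correct and follows essentially the same route as the paper: your homomorphism characterization of $I \not\models \Pi$ via finite templates built from colors/types is precisely the paper's Lemma~\ref{lem:NFtoCSP} (there indexed by goal-free 0-types $\delta$ over $\Sbf_I \cup \Sbf_D$ rather than your nullary choices $N$, an immaterial repackaging), and your second step is the paper's Lemma~\ref{lem:erdoes} (the Erd\H{o}s-style sparse incomparability lemma as adapted by Feder and Vardi), combined with the same observations that girth transfers and that $D$-violations map forward along $B \to I$. You also correctly locate the crux where the paper invests its effort, namely that semi-simplicity makes rule satisfaction a local, per-fact condition expressible by a template.
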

The proof of Lemma~\ref{thm:emptinessGirth} uses a translation of
semi-simple MDDLog programs with disjointness constraints into a
constraint satisfaction problem (CSP) and invokes a combinatorial
lemma by Feder and Vardi (and, originally, Erd\H{o}s), to transform
instances into instances of high girth while preserving certain
homomorphisms.

\subsubsection*{Deriving Upper Bounds} 
We exploit the results just obtained to derive upper complexity
bounds, starting with Boolean MDDLog programs and MMSNP sentences.  In
the following theorem, note that for deciding $\Pi_1 \subseteq \Pi_2$,
the contribution of $\Pi_2$ to the complexity is exponentially larger
than that of $\Pi_1$.
\begin{restatable}{theorem}{THMmainupper}\label{thm:mainupper} 
Containment between Boolean MDDLog programs and between MMSNP
sentences is
 in {\sc 2NExpTime}.
More precisely, for Boolean MDDLog programs $\Pi_1$ and $\Pi_2$,
it can be decided non-deterministically in time $2^{2^{p(|\Pi_2| \cdot
    \mn{log}|\Pi_1|)}}$ whether $\Pi_1 \subseteq \Pi_2$, $p$ 
a polynomial.
\end{restatable}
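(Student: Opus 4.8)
The plan is to decide $\Pi_1 \subseteq \Pi_2$ by composing the three constructions established above and then running the relativized-emptiness algorithm. First I would apply Theorem~\ref{thm:simplify} to the pair $(\Pi_1,\Pi_2)$, obtaining simple Boolean MDDLog programs $\Pi_1^S,\Pi_2^S$ over a common EDB schema $\Sbf'_E$. Next I would apply Theorem~\ref{thm:toempty} to $(\Pi_1^S,\Pi_2^S)$ to obtain a Boolean MDDLog program $\Pi$ and a set of disjointness constraints $D$ with $\Pi$ semi-simple w.r.t.\ $D$. Finally I would run the non-deterministic algorithm of Theorem~\ref{thm:emptinessCompl}, which by Lemma~\ref{lem:KthetaAreEnough} tests emptiness of $\Pi$ w.r.t.\ $D$ on the canonical instances $K_\theta$. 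The correctness claim to be proved is the equivalence
\[
\Pi_1 \subseteq \Pi_2 \quad\Longleftrightarrow\quad \Pi \text{ is empty w.r.t.\ } D,
\]
after which the complexity bound follows from a size analysis of the chain.

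The correctness equivalence is the conceptual heart, and reconciling the girth restrictions is the main obstacle. Let $w$ be the atom width of $\Pi_1\cup\Pi_2$. The direction $\Pi_1\not\subseteq\Pi_2 \Rightarrow \Pi$ non-empty is immediate and girth-free: Point~1 of Theorem~\ref{thm:simplify} gives $\Pi_1^S\not\subseteq\Pi_2^S$, and Point~1 of Theorem~\ref{thm:toempty} then gives non-emptiness of $\Pi$ w.r.t.\ $D$. The converse is delicate because neither construction preserves containment outright---only containment relative to girth. Here I would start from non-emptiness of $\Pi$ w.r.t.\ $D$, use Lemma~\ref{thm:emptinessGirth} with $g=w$ to upgrade it to non-emptiness on instances of girth exceeding $w$, invoke Point~2 of Theorem~\ref{thm:toempty} (instantiated at $g=w$) to get $\Pi_1^S\not\subseteq_{>w}\Pi_2^S$, and finally apply Point~2 of Theorem~\ref{thm:simplify} to conclude $\Pi_1\not\subseteq_{>w}\Pi_2$, hence $\Pi_1\not\subseteq\Pi_2$. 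It is precisely the girth-independence of relativized emptiness (Lemma~\ref{thm:emptinessGirth}) that bridges the two girth-restricted implications and makes the chain sound.

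It then remains to verify the time bound $2^{2^{p(|\Pi_2|\cdot\mn{log}|\Pi_1|)}}$ by tracking sizes, paying attention to the asymmetry between $\Pi_1$ and $\Pi_2$. Theorem~\ref{thm:simplify} inflates each program only single-exponentially, so the IDB schema $\Sbf_{I,2}$ of $\Pi_2^S$ has size at most $2^{\mathrm{poly}(|\Pi_2|)}$, while the variable width of $\Pi_1^S$ stays bounded by that of $\Pi_1$. The decisive step is the bound $|\Pi|\le|\Pi_1^S|\cdot 2^{|\Sbf_{I,2}|\cdot v_1}$ of Theorem~\ref{thm:toempty}: here $\Pi_2$ enters doubly-exponentially through $2^{|\Sbf_{I,2}|}$, whereas $\Pi_1$ contributes only a single-exponential multiplicative factor together with the linear width $v_1$, so that writing $|\Sbf_{I,2}|\cdot v_1 \le 2^{\mathrm{poly}(|\Pi_2|)+\log|\Pi_1|}$ shows its entire contribution collapses into the term $\log|\Pi_1|$ one exponential level down. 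A routine estimate then yields $|\Pi|,|D|\le 2^{2^{p(|\Pi_2|\,\mn{log}|\Pi_1|)}}$, with the variable width of $\Pi\cup D$ bounded by that of $\Pi_1\cup\Pi_2$; substituting into $\Omc(|\Pi|^3)\cdot 2^{\Omc(|D|\cdot v)}$ from Theorem~\ref{thm:emptinessCompl} stays within the same double-exponential bound. Since the two preprocessing steps are deterministic and run in time polynomial in their (doubly-exponential) outputs, and only the emptiness test is non-deterministic---guessing, for each of the boundedly many $0$-types $\theta$, a goal-free IDB extension of $K_\theta$---the overall procedure is non-deterministic within the claimed bound. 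The statement for MMSNP sentences follows since, as observed in the preliminaries, MMSNP containment is polynomially interreducible with Boolean MDDLog containment.
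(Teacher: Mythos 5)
Your proposal matches the paper's proof essentially step for step: the same three-stage pipeline (joint simplification via Theorem~\ref{thm:simplify}, reduction to relativized emptiness via Theorem~\ref{thm:toempty}, the non-deterministic emptiness test of Theorem~\ref{thm:emptinessCompl}), the same use of the girth-independence of emptiness (Lemma~\ref{thm:emptinessGirth}, instantiated at $g=w$) to bridge the two girth-restricted backward implications, and the same size bookkeeping ($|\Pi^S_i|\le 2^{p(|\Pi_i|)}$, $|\Pi|\le 2^{2^{p(|\Pi_2|\cdot \mn{log}|\Pi_1|)}}$, $|D|\le 2^{p(|\Pi_2|)}$, variable width preserved) yielding the stated bound. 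The only cosmetic divergence is that the paper sets $w=2$ when the atom width is one, a safeguard your instantiation $g=w>0$ already satisfies, so nothing is missing.
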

We now extend Theorem~\ref{thm:mainupper} to MDDLog programs of
unrestricted arity. Since this is easier to do when constants can be
used in place of variables in rules, we actually generalize
Theorem~\ref{thm:mainupper} by allowing both constants in rules and
unrestricted arity. For clarity, we speak about MDDLog$^c$ programs
whenever we allow constants in rules. First, we show how to (Turing)
reduce containment between MDDLog$^c$ programs of unrestricted arity
to containment between Boolean MDDLog$^c$ programs. The idea 
essentially is to replace answer variables with fresh constants.

Let $\Pi_1$, $\Pi_2$ be MDDLog$^c$ programs of arity $k$ 
and let \Cbf be the set of constants in $\Pi_1 \cup \Pi_2$, extended
with $k$ fresh constants. We define Boolean MDDLog$^c$ programs
$\Pi^\abf_1$, $\Pi^\abf_2$ for each tuple $\abf$ over \Cbf of
arity~$k$. If $\abf=(a_1,\dots,a_k)$, then $\Pi^\abf_i$ is obtained from
$\Pi_i$ by modifying each goal rule $\rho = \mn{goal}(\vect{x})
\leftarrow q$ with $\vect{x}=(x_1,\dots,x_k)$ as follows:
\begin{itemize}

\item if there are $i,j$ such that $x_i=x_j$ and $a_i \neq a_j$, then
  discard $\rho$;

\item otherwise, replace $\rho$ with $\mn{goal}() \leftarrow q'$ where
  $q'$ is obtained from $q$ by replacing each $x_i$ with $a_i$. 

\end{itemize}
In the appendix, we show the following. 
\begin{restatable}{lemma}{LEMconstredok}\label{lem:constredok}
  $\Pi_1 \subseteq \Pi_2$ iff $\Pi^\abf_1 \subseteq \Pi^\abf_2$ for all
  $\abf \in \Cbf^k$.
\end{restatable}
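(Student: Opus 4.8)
The plan is to reduce the biconditional to a single \emph{substitution claim}. The programs $\Pi_i$ and $\Pi^\abf_i$ share the same EDB schema and the same non-goal rules, differing only in their goal rules, which in $\Pi^\abf_i$ arise by substituting the constants $\abf$ for the answer variables. I would therefore first isolate the following statement: for every $\abf \in \Cbf^k$ and every $\Sbf_E$-instance $I$ with $\abf \in \adom(I)^k$,
\[
  \abf \in \Pi_i(I) \quad\text{if and only if}\quad I \models \Pi^\abf_i .
\]
Given this claim, the lemma reduces to transporting a single witness instance between the $k$-ary and the Boolean setting.

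To prove the claim I would exploit that $\mn{goal}$ occurs in no rule body, so in either program the goal facts of a model are fully determined by its non-goal part. Fix a model $M$ of the (common) non-goal rules with $I \subseteq M$. A goal rule $\mn{goal}(\vect{x}) \leftarrow q$ of $\Pi_i$ derives $\mn{goal}(\abf)$ in $M$ iff there is a homomorphism $h \colon q \to M$ with $h(\vect{x}) = \abf$. If $x_i = x_j$ but $a_i \neq a_j$ for some $i,j$, no such $h$ can exist, matching precisely the discarding of this rule in $\Pi^\abf_i$; otherwise, a homomorphism of $q$ fixing $h(\vect{x}) = \abf$ is the same thing as a homomorphism of $q' = q[\vect{x}/\abf]$, since homomorphisms fix constants. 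Hence the surviving Boolean rule $\mn{goal}() \leftarrow q'$ derives $\mn{goal}()$ in $M$ exactly when the original rule derives $\mn{goal}(\abf)$. Quantifying over all models extending $I$---where it suffices to consider models of active domain $\adom(I)$---then yields the claim.

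For the direction ``$\Pi^\abf_1 \not\subseteq \Pi^\abf_2$ for some $\abf$ implies $\Pi_1 \not\subseteq \Pi_2$'', I would take a witness $I$ with $I \models \Pi^\abf_1$ and $I \not\models \Pi^\abf_2$. Since every answer variable occurs in the body of its goal rule by rule safety, firing a surviving goal rule forces all constants of $\abf$ into $\adom(I)$, so $\abf \in \adom(I)^k$ and the substitution claim applies to both components, giving $\abf \in \Pi_1(I) \setminus \Pi_2(I)$. For the converse, suppose $\Pi_1 \not\subseteq \Pi_2$ is witnessed by an instance $I$ and a tuple $\vect{b} \in \adom(I)^k$. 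This tuple has at most $k$ distinct entries; I would fix every entry that is a constant of $\Pi_1 \cup \Pi_2$ and injectively rename the remaining (anonymous) entries to the $k$ fresh constants of $\Cbf$, extending this to a bijective renaming $\pi$ of $\adom(I)$ that fixes all program constants (after first moving the fresh constants out of $\adom(I)$ if necessary). As $\pi$ is an isomorphism fixing every constant occurring in $\Pi_1 \cup \Pi_2$, it preserves the semantics of both programs, so $\abf := \pi(\vect{b}) \in \Cbf^k$ satisfies $\abf \in \Pi_1(\pi(I)) \setminus \Pi_2(\pi(I))$, and the substitution claim turns $\pi(I)$ into a witness for $\Pi^\abf_1 \not\subseteq \Pi^\abf_2$.

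The technical heart, and the step I expect to require the most care, is the bookkeeping around this renaming and the active domain: one must check that the $k$ fresh constants always suffice to name the anonymous witness entries, that $\pi$ fixes program constants and is thus semantics-preserving, and that restricting attention to models of active domain $\adom(I)$ is harmless for the certain-answer semantics. The substitution claim itself is then routine, once the correctness of discarding rules with conflicting variable identifications has been verified as above.
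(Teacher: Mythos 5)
Your overall architecture is the same as the paper's: the paper also proves the lemma by (a) a model-level correspondence between derivations of $\mn{goal}(\abf)$ under $\Pi_i$ and derivations of $\mn{goal}()$ under $\Pi^\abf_i$ (your ``substitution claim'', which the paper inlines rather than isolating: an extension of $I$ satisfying the rules of $\Pi^\abf_i$ without $\mn{goal}()$ is converted into one satisfying the rules of $\Pi_i$ without $\mn{goal}(\abf)$, and conversely, using exactly your observation that a homomorphism with $h(\vect{x})=\abf$ is the same thing as a homomorphism of $q[\vect{x}/\abf]$ and that discarded rules can never produce $\mn{goal}(\abf)$), and (b) for the forward direction, the same two-stage renaming: first vacate the $k$ fresh constants from $\adom(I)$, then injectively rename the non-program entries of the witness tuple to them. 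So far your proposal and the paper's proof coincide.

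There is, however, one step in your backward direction that does not hold at the lemma's stated level of generality: the claim that ``firing a surviving goal rule forces all constants of $\abf$ into $\adom(I)$.'' This inference is valid only when constants do not occur in rule \emph{heads}, but MDDLog$^c$ allows them there (the paper's subsequent constant-elimination construction explicitly speaks of occurrences of constants in heads). With head constants, a certain IDB fact about a constant can arise without that constant occurring in the input: over EDB schema $\{A\}$, take $\Pi_1$ with rules $P(b) \leftarrow A(x)$ and $\mn{goal}(x) \leftarrow P(x)$, and $I=\{A(c)\}$ with $c \neq b$. The surviving Boolean rule $\mn{goal}() \leftarrow P(b)$ fires in every model extending $I$, so $I \models \Pi^{(b)}_1$, yet $b \notin \adom(I)$ --- the atom witnessing $b$ is a derived IDB fact, not an EDB fact of $I$, so rule safety gives you nothing. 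Consequently your substitution claim, stated with the hypothesis $\abf \in \adom(I)^k$, cannot be invoked as you do. To be fair, the paper's own one-sentence treatment of this direction is silent on the same point; the clean repair is either to restrict attention to programs whose heads are constant-free, or to adopt the (evidently intended) convention that certain answers range over $\adom(I) \cup \Cbf$ and restate the substitution claim without the active-domain hypothesis --- under the literal $\Pi(I) \subseteq \adom(I)^k$ semantics the example above in fact makes the equivalence itself delicate, not just your proof of it.
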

This provides the desired Turing reduction to the Boolean case, with
constants. Note that the size of $\Pi^\abf_i$ is bounded by that of
$\Pi_i$, and likewise for all other relevant measures. The number of
required containment tests is bounded by $2^{|\Pi_1 \cup \Pi_2|^2}$, a
factor that is absorbed by the bounds in Theorem~\ref{thm:mainupper}.

It remains to reduce containment between Boolean MDDLog$^c$ programs
to containment between Boolean MDDLog programs. The general idea is to
replace constants with fresh monadic EDB relations. Of course, we have
to be careful because the extension of these fresh relations in an
instance need not be a singleton set. Let $\Pi_1$, $\Pi_2$ be Boolean
MDDLog$^c$ programs over EDB schema $\Sbf_E$ and let \Cbf be the set
of constants in $\Pi_1 \cup \Pi_2$. The EDB schema $\Sbf'_E$ is
obtained by extending $\Sbf_E$ with a monadic relation $R_a$ for each
$a \in \Cbf$. 
For
$i \in \{1,2\}$, the Boolean MDDLog program $\Pi'_i$ over EDB schema
$\Sbf'_E$ contains all rules that can be obtained from a rule $\rho$
from~$\Pi_i$ by choosing a partial function $\delta$ that maps the terms
(variables and constants) in $\rho$ to the relations in $\Sbf'_E
\setminus \Sbf_E$ such that each constant $a$ is mapped to $R_a$ and
then
\begin{enumerate}

\item replacing every occurrence of a term $t \in \mn{dom}(\delta)$
in the body of $\rho$ with a fresh variable and every
  occurrence of $t$ in the head of $\rho$ with one of the variables
  introduced for $t$ in the rule body;

%

\item adding $R_a(x)$ to the rule body whenever some occurrence of
  a variable $x_0$ in the original rule has been replaced with $x$
  and $\delta(x_0)=R_a$.



\end{enumerate}
For example, the rule 
$P_1(y) \vee P_2(y) \leftarrow r(x,y,y) \wedge s(y,z)
$ 
in
$\Pi_i$ gives rise, among others, to the following rule in~$\Pi'_i$:
$$
\begin{array}{r@{\;}c@{\;}l}
\!\!\!P_1(y_3) \vee P_2(y_1) &\leftarrow& r(x_1,y_1,y_2) \wedge s(y_3,z) \wedge R_{a_1}(x_1) \, \wedge \\[\myeqnsep]
&& 
\!\!\!\!\!
R_{a_2}(y_1) \wedge R_{a_2}(y_2) \wedge R_{a_2}(y_3).
\end{array}
$$
The above rule treats the case where the variable $x$ from the
original rule is mapped to the constant $a_1$, $y$ to $a_2$, and $z$
not to any constant in \Cbf. Note that the original variables $x$ and
$y$ have been duplicated because $R_{a_1}$ and $R_{a_2}$ need not be
singletons while $a_1$ and $a_2$ denote a single object. So
intuitively, $\Pi'_i$ treats its input instance $I$ as if it was the
quotient $I'$ of $I$ obtained by identifying all $a_1$, $a_2$ with
$R_b(a_1),R_b(a_2) \in I$ for some $b \in \Cbf$. In addition to the
above rules, $\Pi'_2$ also contains $\mn{goal}() \leftarrow R_{a_1}(x)
\wedge R_{a_2}(x)$ for all distinct $a_1,a_2 \in \Cbf$.
\begin{restatable}{lemma}{LEMconstredtwook}\label{lem:constredtwook}
  $\Pi_1 \subseteq \Pi_2$ iff $\Pi'_1 \subseteq \Pi'_2$.
\end{restatable}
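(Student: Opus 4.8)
The plan is to reduce the entire statement to a single semantic correspondence between instances over $\Sbf_E$ and over $\Sbf'_E$. For an $\Sbf_E$-instance $I$, let $\mn{mark}(I)$ be the $\Sbf'_E$-instance obtained by adding the fact $R_b(b)$ for every constant $b \in \Cbf \cap \adom(I)$. Conversely, call an $\Sbf'_E$-instance $I$ \emph{conflict-free} if no element $a$ satisfies both $R_{b_1}(a)$ and $R_{b_2}(a)$ for distinct $b_1,b_2 \in \Cbf$, and for conflict-free $I$ let $\mn{quot}(I)$ be the $\Sbf_E$-instance obtained by applying the map $\pi$ that sends every $R_b$-marked element to the constant $b$ and fixes all other elements; thus $\mn{quot}(I)$ consists exactly of the $\pi$-images of the $\Sbf_E$-facts of $I$, and the $R_b$ are discarded. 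Clearly $\mn{mark}(I)$ is conflict-free and $\mn{quot}(\mn{mark}(I))=I$. The key claim is:

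\begin{claim}
For every conflict-free $\Sbf'_E$-instance $I$ and $i \in \{1,2\}$, we have $I \models \Pi'_i$ iff $\mn{quot}(I) \models \Pi_i$.
\end{claim}

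Granting the claim, both directions follow at once. For the direction $\Pi'_1 \subseteq \Pi'_2 \Rightarrow \Pi_1 \subseteq \Pi_2$, given an $\Sbf_E$-instance $I$ with $I \models \Pi_1$, the claim yields $\mn{mark}(I) \models \Pi'_1$, hence $\mn{mark}(I) \models \Pi'_2$ by assumption, hence $I \models \Pi_2$ by the claim again. For $\Pi_1 \subseteq \Pi_2 \Rightarrow \Pi'_1 \subseteq \Pi'_2$, given an $\Sbf'_E$-instance $I$ with $I \models \Pi'_1$: if $I$ is not conflict-free, then the rule $\mn{goal}() \leftarrow R_{a_1}(x) \wedge R_{a_2}(x)$ of $\Pi'_2$ forces $I \models \Pi'_2$; otherwise the claim gives $\mn{quot}(I) \models \Pi_1$, so $\mn{quot}(I) \models \Pi_2$ by assumption, and the claim once more gives $I \models \Pi'_2$.

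To prove the claim I would use the witness characterisation of Boolean containment: $I \not\models \Pi$ iff $I$ has a goal-free expansion by the IDB relations that satisfies all rules of $\Pi$. Note that $\pi$ is a homomorphism from the $\Sbf_E$-reduct of $I$ onto $\mn{quot}(I)$. For the direction $\mn{quot}(I) \not\models \Pi_i \Rightarrow I \not\models \Pi'_i$ (the easy one), I pull a goal-free model $J$ of $\Pi_i$ back along $\pi$: set $P(a) \in J'$ iff $P(\pi(a)) \in J$ for each monadic IDB $P$, and copy nullary IDBs verbatim. Any substitution satisfying the body of a $\Pi'_i$-rule in $J'$ maps each marker variable into an $R_b$-marked element, so composing with $\pi$ yields a substitution witnessing the corresponding original rule (with its constants placed by $\delta$) in $J$; since $J$ is a model, the head holds in $J$ and pulls back to $J'$. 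Conflict-freeness makes every $\Pi'_2$-conflict rule vacuous, so $J'$ stays goal-free.

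The main obstacle is the converse direction $I \not\models \Pi'_i \Rightarrow \mn{quot}(I) \not\models \Pi_i$, where a goal-free model $J'$ of $\Pi'_i$ over $I$ must be pushed down to $\mn{quot}(I)$, whose constant elements each collapse an entire $R_b$-class. Here the definition of the IDBs on these collapsed elements is delicate: I set $P(b) \in J$ iff $P(a) \in J'$ for \emph{some} $R_b$-marked $a$. Then, whenever a rule $\rho$ of $\Pi_i$ fires in $J$ under a substitution $\sigma$, I lift $\sigma$ to the $\Pi'_i$-rule determined by the partial function $\delta$ with $\delta(t)=R_{\sigma(t)}$ exactly when $\sigma(t)\in\Cbf$: each EDB occurrence of a term is assigned the representative supplied by a witnessing $\Sbf_E$-fact of $I$, while each IDB occurrence is assigned a representative supplied by the existential definition of $P(b)$. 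This is precisely why the construction duplicates variables: distinct occurrences of the same term may legitimately require distinct $R_b$-representatives, which a single undivided variable could not meet simultaneously. The head of the matching $\Pi'_i$-rule then holds in $J'$, and since the head-variable chosen for $t$ maps to an $R_b$-marked element, the existential definition transports the head atom back to $J$; thus $J$ is a goal-free model of $\Pi_i$ over $\mn{quot}(I)$. Checking that the assignment of fresh variables is consistent (each occurrence lies in a single body atom, so no variable is pinned twice) and that $\mn{goal}$ is never produced is then routine.
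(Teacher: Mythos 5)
Your overall route---marking constants on the $\Sbf_E$-side, quotienting marked classes on the $\Sbf'_E$-side, and transferring goal-free IDB extensions in both directions---is exactly the paper's proof, and your reading of the occurrence-wise fresh variables (distinct occurrences of a term may legitimately need distinct $R_b$-representatives) is correct. However, your central Claim is false as stated, because the map $\pi$ mishandles elements of $\adom(I)$ whose names happen to lie in $\Cbf$ but which are not marked. Concretely: let $\Pi_2$ contain the rule $\mn{goal}() \leftarrow A(a)$, so that the only goal rule of $\Pi'_2$ is $\mn{goal}() \leftarrow A(x) \wedge R_a(x)$, and take $I=\{A(a)\}$ with no marker facts. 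Then $I$ is conflict-free, $\mn{quot}(I)=\{A(a)\} \models \Pi_2$ since $\pi$ fixes the unmarked element $a$, yet $I \not\models \Pi'_2$ because no element is $R_a$-marked. A second artifact: with $\Pi_2$ containing $\mn{goal}() \leftarrow A(a)\wedge B(a)$ and $I=\{A(a),B(c),R_a(c)\}$, your $\pi$ merges the unmarked element named $a$ with the $R_a$-class, so $\mn{quot}(I)\models \Pi_2$ while $I\not\models \Pi'_2$. The failure surfaces precisely at the step you call routine in the hard direction: the witnessing $\Sbf_E$-fact of $I$ behind a fact of $\mn{quot}(I)$ about the constant $b$ need not involve an $R_b$-marked element, because the unmarked element literally named $b$ is also a $\pi$-preimage of $b$; the $R_b$-atoms of the lifted rule then have no satisfying assignment, and the pushed-down extension $J$ is not a model of $\Pi_i$ after all.

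The repair is short but must be stated: since $\Pi'_1$ and $\Pi'_2$ mention no constants, their satisfaction is invariant under injective renaming of $\adom(I)$, so in the direction from $\Pi'$-containment witnesses you may assume w.l.o.g.\ that $\adom(I) \cap \Cbf = \emptyset$ before quotienting. Then $\pi^{-1}(b)$ consists exactly of the $R_b$-marked elements for every $b \in \Cbf$, both counterexamples disappear, and your two transfer arguments (which are otherwise sound, including the consistency check for the occurrence variables) go through. Note that the marking direction is unaffected, since in $\mn{mark}(I)$ every $\Cbf$-named element carries its own marker and nothing else is marked; it is only your treatment of arbitrary conflict-free $\Sbf'_E$-instances---hence the implication $\Pi_1 \subseteq \Pi_2 \Rightarrow \Pi'_1 \subseteq \Pi'_2$---that needs the renaming. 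For comparison, the paper's quotient relates only marked elements (so it avoids your merging artifact) but likewise leaves the handling of $\Cbf$-named unmarked elements implicit behind ``it can be verified''; your more explicit definition makes the missing normalization visible, and without it the stated Claim is simply wrong.
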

It can be verified that $|\Pi'_i| \leq 2^{|\Pi'_i|^2}$ and that the
rule size of $\Pi'_i$ is bounded by twice the rule size of
$\Pi_i$. Because of the latter, the simplification of the programs
$\Pi'_i$ according to Theorem~\ref{thm:simplify} yields programs
whose size is still bounded by $2^{p(|\Pi_i|)}$, as in the proof of
Theorem~\ref{thm:mainupper}, and whose variable width is bounded by
twice the variable width of $\Pi_i$. It is thus easy to check that we
obtain the same overall bounds as stated in
Theorem~\ref{thm:mainupper}.
\begin{theorem}
\label{thm:mainupperPLUS} 
Containment between MDDLog programs of any arity and with constants 
is
in {\sc 2NExpTime}.  More precisely, for 
programs $\Pi_1$ and $\Pi_2$, it can be decided non-deterministically 
in time $2^{2^{p(|\Pi_2| \cdot \mn{log}|\Pi_1|)}}$ whether $\Pi_1 
\subseteq \Pi_2$, $p$ a polynomial. 
\end{theorem}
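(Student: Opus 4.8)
The plan is to assemble the statement from the two reductions just established together with the Boolean bound of Theorem~\ref{thm:mainupper}; the only genuine work is to check that composing them does not push the running time past the claimed double exponential. Given MDDLog$^c$ programs $\Pi_1,\Pi_2$ of arity $k$, I would first invoke Lemma~\ref{lem:constredok} to replace the single question $\Pi_1 \subseteq \Pi_2$ by the family of Boolean MDDLog$^c$ tests $\Pi^\abf_1 \subseteq \Pi^\abf_2$, one for each $\abf \in \Cbf^k$, and accept iff all of them succeed. Since $|\Cbf| \le |\Pi_1 \cup \Pi_2|$ the number of tests is at most $2^{|\Pi_1 \cup \Pi_2|^2}$, and each $\Pi^\abf_i$ arises by a purely syntactic rewriting of goal rules, so its size, rule size, and variable width are all bounded by those of $\Pi_i$.

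Second, for each fixed $\abf$ I would apply Lemma~\ref{lem:constredtwook} to turn the Boolean MDDLog$^c$ test $\Pi^\abf_1 \subseteq \Pi^\abf_2$ into an equivalent Boolean MDDLog test $\widehat\Pi^\abf_1 \subseteq \widehat\Pi^\abf_2$, eliminating constants in favour of fresh monadic EDB relations $R_a$ and the additional right-hand goal rule $\mn{goal}() \leftarrow R_{a_1}(x) \wedge R_{a_2}(x)$ that forces the $R_a$ to be pairwise disjoint. The decisive quantitative point, already noted before the theorem, is that this step duplicates each term at most once: although the number of rules of $\widehat\Pi^\abf_i$ grows to a single exponential $2^{|\Pi_i|^2}$ (one rule per choice of the partial map $\delta$), its rule size is at most twice that of $\Pi^\abf_i$ and its variable width at most twice the variable width of $\Pi^\abf_i$. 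Each resulting pair $(\widehat\Pi^\abf_1, \widehat\Pi^\abf_2)$ is then handed to the Boolean algorithm of Theorem~\ref{thm:mainupper}.

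The complexity accounting is then a matter of tracing the bound through the simplification of Theorem~\ref{thm:simplify} that underlies Theorem~\ref{thm:mainupper}. Because constant-elimination only doubles the rule size, simplifying $\widehat\Pi^\abf_i$ still produces a simple program of size $2^{p(|\Pi_i|)}$ and variable width at most twice that of $\Pi_i$, exactly the regime analysed for the Boolean case; hence each individual test runs non-deterministically in time $2^{2^{p(|\Pi_2| \cdot \mn{log}|\Pi_1|)}}$, and in particular the asymmetry in which $\Pi_2$ costs a full exponential more than $\Pi_1$ is preserved. The outer loop over the $\abf \in \Cbf^k$ contributes only the single-exponential factor $2^{|\Pi_1 \cup \Pi_2|^2}$, which is absorbed into the double-exponential running time of a single test, so the overall bound is again $2^{2^{p(|\Pi_2| \cdot \mn{log}|\Pi_1|)}}$, as claimed.

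I expect the main obstacle to lie precisely in this bookkeeping rather than in any new combinatorial idea: one must verify that neither the arity-elimination (which spawns exponentially many tests but keeps each of the original size) nor the constant-elimination (which blows $|\widehat\Pi^\abf_i|$ up to a single exponential in $|\Pi_i|$) ever leaves the double-exponential envelope, and above all that the delicate $\Pi_1$-versus-$\Pi_2$ asymmetry of Theorem~\ref{thm:mainupper} survives both reductions intact. With all three ingredients already in place, the argument ultimately reduces to the observation that the reductions compose and that their blow-ups are all either constant-factor (rule size, variable width) or single-exponential (number of rules, number of tests).
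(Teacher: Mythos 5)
Your proposal is correct and follows essentially the same route as the paper: the Turing reduction of Lemma~\ref{lem:constredok} to Boolean MDDLog$^c$ tests, the constant-elimination of Lemma~\ref{lem:constredtwook}, and the observation that rule size and variable width at most double, so the simplification of Theorem~\ref{thm:simplify} stays within the $2^{p(|\Pi_i|)}$ regime of Theorem~\ref{thm:mainupper} while the single-exponential number of tests is absorbed. The bookkeeping you flag as the main obstacle is exactly the content of the paper's argument.
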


\section{Ontology-Mediated Queries}

We now consider containment between ontology-mediated queries based on
description logics, which we introduce next.

An \emph{ontology-mediated query (OMQ)} over a schema $\Sbf_E$ is
a triple $(\Tmc,\Sbf_E,q)$, where \Tmc is a TBox formulated in a
description logic and $q$ is a query over the schema $\Sbf_E \cup
\mn{sig}(\Tmc)$, with $\mn{sig}(\Tmc)$ the set of relation symbols
used in \Tmc. 
The TBox can introduce symbols that are not in $\Sbf_E$, which allows
it to enrich the schema of the query~$q$. As the TBox language, we use
the description logic \ALC, its extension $\mathcal{ALCI}$ with
inverse roles, and the further extension $\mathcal{SHI}$ of
$\mathcal{ALCI}$ with transitive roles and role hierarchies.  Since
all these logics admit only unary and binary relations, we assume that
these are the only allowed arities in schemas throughout the
section. As the actual query language, we use UCQs and CQs. The OMQ
languages that these choices give rise to are denoted with
$(\ALC,\text{UCQ})$, $(\ALCI,\text{UCQ})$,
$(\mathcal{SHI},\text{UCQ})$, and so on. In OMQs $(\Tmc,\Sbf_E,q)$
from $(\mathcal{SHI},\text{UCQ})$, we disallow superroles of
transitive roles in $q$; it is known that allowing transitive roles in
the query poses serious additional complications, which are outside
the scope of this paper, see e.g.\
\cite{DBLP:conf/dlog/BienvenuELOS10,DBLP:conf/icalp/GottlobPT13}.  The
semantics of an OMQ is given in terms of \emph{certain answers}.  We
refer to the appendix for further details and only give an example of
an OMQ from $(\ALC,\text{UCQ})$.
\begin{example}
Let the OMQ $Q=(\Tmc,\Sbf_E,q)$ be given by
$$
\begin{array}{r@{\;}c@{\;}l}
  \Tmc &=& \{ \; \exists \mn{manages} . \mn{Project} 
           \sqsubseteq 
           \mn{Manager}, \\[\myeqnsep]
  && \ \; \,  \mn{Employee} \sqsubseteq \mn{Male} \sqcup \mn{Female}
     \; \} \\[\myeqnsep]
  \Sbf_E &=& \{ \mn{Employee},\mn{Project},\mn{Male},\mn{Female},\mn{manages} \} \\[\myeqnsep]
  q(x) &=& \mn{Manager}(x) \wedge \mn{Female}(x)
\end{array}
$$
On the $\Sbf_E$-instance
$$
\begin{array}{l}
  \mn{manages}(e_1,e_2), \mn{Female}(e_1), \mn{Project}(e_2), \\[\myeqnsep]
  \mn{manages}(e'_1,e_2), \mn{Employee}(e'_1),
\end{array}
$$
%
 the only certain answer to $Q$ is $e_1$.
\end{example}
Let $Q_i=(\Tmc_i,\Sbf_E,q_i)$, $i \in \{1,2\}$. Then $Q_1$ \emph{is
  contained in} $Q_2$, written $Q_1 \subseteq Q_2$, if for every
$\Sbf_E$-instance~$I$, the certain answers to $Q_1$ on $I$ are a
subset of the certain answers to $Q_2$ on $I$. The query containment
problems between OMQs considered in \cite{DBLP:conf/kr/BienvenuLW12}
are closely related to ours, but concern different (weaker) OMQ
languages. One difference in setup is that, there, the definition of
``contained in'' does not refer to all $\Sbf_E$-instances $I$, but
only to those that are consistent with both $\Tmc_1$ and $\Tmc_2$. Our
results apply to both notions of consistency. In fact, we show in the
appendix that consistent containment between OMQs can be reduced in
polynomial time to unrestricted containment as studied in this paper,
and in our lower bound we use TBoxes that are consistent w.r.t.\ all
instances. We use $|\Tmc|$ and $|q|$ to denote the size of a TBox
\Tmc and a query $q$, defined as for MDDLog programs.

The following is the main result on OMQs established in this paper.
It solves an open problem from \cite{DBLP:conf/kr/BienvenuLW12}.
\begin{theorem}
\label{thm:alcicompl}
  The following containment problems are 2\NExpTime-complete:
  \begin{enumerate}

  \item of an $(\mathcal{ALC},\text{UCQ})$-OMQ in a CQ;

  \item of an $(\mathcal{ALC},\text{UCQ})$-OMQ in an
    $(\mathcal{ALC},\text{UCQ})$-OMQ;

  \item of an $(\mathcal{ALCI},\text{CQ})$-OMQ in an $(\mathcal{ALCI},\text{CQ})$-OMQ;

  \item of a $(\mathcal{SHI},\text{UCQ})$-OMQ in a
    $(\mathcal{SHI},\text{UCQ})$-OMQ.

  \end{enumerate}
The lower bounds apply already when the TBoxes of the two OMQs are 
identical. 
\end{theorem}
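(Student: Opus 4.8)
The plan is to obtain the upper bounds (Points~1--4) by translating each OMQ into an equivalent MDDLog program and invoking Theorem~\ref{thm:mainupperPLUS}, and to obtain the lower bounds by recycling the tiling reduction behind Theorem~\ref{thm:hardness1} for the UCQ cases together with a dedicated inverse-role construction for the CQ case. For the upper bound I would use the OMQ-to-MDDLog translation of \cite{DBLP:journals/tods/BienvenuCLW14}, but arrange and analyse it so that the ontology and the actual query contribute to the size of the resulting program $\Pi_i$ in sharply different ways: the TBox $\Tmc_i$ yields only polynomially many monadic IDB relations (one per subconcept, with disjunctive completion rules enforcing consistent types), whereas the UCQ $q_i$ is captured by goal rules obtained from its tree-shaped matches and therefore produces only single-exponentially many rules, each of size polynomial in $|q_i|+|\Tmc_i|$, while the variable width stays polynomial in $|q_i|$. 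Feeding these parameters through the three steps behind Theorem~\ref{thm:mainupperPLUS}---the $2^s$ of simplification (Theorem~\ref{thm:simplify}), the $2^{|\Sbf_{I,2}|\cdot v_1}$ of the reduction to relativized emptiness (Theorem~\ref{thm:toempty}), and the $2^{O(|D|\cdot v)}$ of Theorem~\ref{thm:emptinessCompl}---the key observation is that $|\Sbf_{I,2}|$ and $|D|$ stay single exponential in $|q_i|$ but only polynomial in $|\Tmc_i|$, so the overall running time is $2^{2^{O(|q_i|)}\cdot\mathrm{poly}(|\Tmc_i|)}$, i.e.\ double exponential in the query and single exponential in the ontology. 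For $\mathcal{SHI}$ I would additionally absorb role hierarchies and transitive roles into the translation, relying on the exclusion of superroles of transitive roles from the query to preserve the tree-model property the translation needs. Since every CQ is a UCQ and every bare (U)CQ is an OMQ with empty TBox, all four membership claims follow.

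For the lower bounds in the UCQ cases I would reuse the program $\Pi$ and query $q$ of Theorem~\ref{thm:hardness1}. By the equivalence between Boolean MDDLog programs and Boolean $(\mathcal{ALC},\text{UCQ})$-OMQs from \cite{DBLP:journals/tods/BienvenuCLW14}, $\Pi$ is equivalent to an OMQ $(\Tmc_\Pi,\Sbf_E,Q_\Pi)$ of polynomial size, its disjunctive tile-type guesses becoming \ALC disjunctions and its goal rules the UCQ $Q_\Pi$; taking $Q_1=(\Tmc_\Pi,\Sbf_E,Q_\Pi)$ and the bare CQ $q$ as the right-hand side then gives $Q_1\subseteq q$ iff $\Pi\subseteq q$, settling Point~1. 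To obtain hardness with \emph{identical} TBoxes (Point~2 and, since $\mathcal{ALC}\subseteq\mathcal{SHI}$, also Point~4), I would keep the same $\Tmc_\Pi$ on both sides and set $Q_2=(\Tmc_\Pi,\Sbf_E,q)$. As $q$ mentions only the EDB relations $r$, \mn{jump}, and the $B_j$, on which every model of $(\Tmc_\Pi,I)$ agrees with $I$, the certain answers of $Q_2$ coincide with the matches of $q$ in $I$; hence $Q_1\subseteq Q_2$ iff $\Pi\subseteq q$ iff there is a tiling for $P,w_0$, giving 2\NExpTime-hardness for OMQ-in-OMQ with identical TBoxes.

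The genuinely different case is Point~3, where the right-hand query must be a single connected CQ rather than a UCQ: the query $q$ of Theorem~\ref{thm:hardness1} is an essential union, and collapsing its disjuncts into one CQ requires navigating the counting trees both downwards and upwards, which is exactly why inverse roles (\ALCI) enter. I would therefore replace $q$ by the CQ construction of \cite{DBLP:conf/cade/Lutz08}, whose inverse-role navigation lets one connected CQ test the same leaf-correspondence condition~(Q1) that the UCQ tested, again placing all disjunctive machinery in a shared TBox (now formulated in \ALCI) so that $Q_1$ and $Q_2$ differ only in their CQ queries and hardness holds with identical TBoxes. I expect the main obstacle to be twofold: on the upper-bound side, the careful bookkeeping that keeps the query at the double-exponential level and the ontology at the single-exponential level throughout the three reduction steps; and on the lower-bound side, verifying that the single CQ of Point~3 captures~(Q1) on \emph{all} instances rather than only on the intended grid-shaped ones---precisely the kind of subtlety that forced the UCQ-to-CQ passage in the MDDLog reduction into the appendix.
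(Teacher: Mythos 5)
Your overall route for the upper bounds---translate each OMQ into MDDLog with careful parameter bookkeeping and then push rule size, variable width, IDB schema size and $|D|$ through Theorems~\ref{thm:simplify}, \ref{thm:toempty} and~\ref{thm:emptinessCompl} rather than black-boxing Theorem~\ref{thm:mainupperPLUS}---is exactly the paper's (Theorems~\ref{thm:thatreductionagain} and~\ref{thm:ALCI}), and your final bound ``double exponential in the queries, single exponential in the TBoxes'' is the right one. However, your size accounting of the translation is wrong in a way that hides the one genuinely delicate step. In the paper's construction the IDB schema is \emph{single exponential} in $|q|$, not polynomial: it contains a relation $P_C$ for every tree-shaped query in $\mn{con}(q_0)$ and a nullary $P_q$ for every $q \in \Gamma$, not just one relation per subconcept of $\Tmc$. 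Moreover the number of rules is \emph{double} exponential, not single exponential: the rules $\bot \leftarrow \delta(x)$ range over all non-realizable diagrams built from the exponentially many $P_C$; only the rule size stays small (bounded by $|q|$, doubling after answer-variable elimination). With double-exponentially many rules, the simplification of Theorem~\ref{thm:simplify} threatens to introduce one fresh IDB relation per rule, which would make $|\Sbf_{I,2}|$ double exponential and push the factor $2^{|\Sbf_{I,2}|\cdot v_1}$ of Theorem~\ref{thm:toempty} to triple exponential. The paper's proof of Theorem~\ref{thm:ALCI} spends its main effort precisely here: the offending type-3 rules have the restricted shape $\bot \leftarrow P_{C_1}(x) \wedge \cdots \wedge P_{C_n}(x)$, are already biconnected, and after the constants-elimination step split into only single-exponentially many distinct bodies $Q_i() \leftarrow P_C(x) \wedge R_a(x)$, so the IDB schema of $\Pi^S_i$ stays of size $2^{p(|q_i| \cdot \mn{log}|\Tmc_i|)}$. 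Your proposal, resting on the incorrect single-exponential rule count, never confronts this step; as stated the bookkeeping would not survive the actual translation.

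On the lower bounds, Points~1, 2 and~4 are handled essentially as in the paper (your bare-CQ trick for identical TBoxes in Point~2 is a reasonable direct variant of the paper's appeal to Theorem~3 of Bienvenu, Lutz and Wolter 2012). But for Point~3 your diagnosis is off: the right-hand query of the MDDLog reduction is \emph{already} a single CQ (Theorem~\ref{thm:hardness1}, Point~1, via the bit-gadget modification in the appendix), so ``collapsing an essential union'' is not the obstacle and is not why inverse roles enter. The real problem is on the left: in an $(\mathcal{ALCI},\text{CQ})$-OMQ all of the grid and counting machinery must sit in the TBox, and the MDDLog rules used before have non-tree-shaped bodies---e.g.\ $\mn{left}(x) \leftarrow r(x,y) \wedge r(y,z) \wedge \mn{jump}(x,z)$---that no $\mathcal{ALCI}$ concept inclusion can express. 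The paper therefore rebuilds the entire encoding: neighboring nodes are linked by the symmetric composition $r^-;r$ (this is where inverse roles are actually used), counting trees come in three cyclically alternating types to orient the copying of counter values despite that symmetry, the reduction targets the 2-exp \emph{torus} (not square) tiling problem because border grid nodes would otherwise lack the successors needed to recognize g-active nodes, uniqueness of labelings is enforced through $\bot$-inclusions in the second TBox (inconsistent instances trivially satisfy $Q_2$), and the single CQ is assembled from $m$ component queries sharing $x,x'$ plus an extra component with $R_0/R_1$-labeled roots and modified leaf gadgets. None of these ingredients appears in your sketch, and without them the Point-3 argument does not go through; you correctly flag the all-instances verification issue, but the redesign it forces is much more extensive than substituting the queries of Lutz 2008 into the old reduction.
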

We start with the lower bounds.  For Point~1 and~2 of
Theorem~\ref{thm:alcicompl}, we make use of the lower bound that we
have already obtained for MDDLog. It was observed in
\cite{DBLP:conf/pods/BienvenuCLW13,DBLP:journals/tods/BienvenuCLW14}
that both $(\ALC,\text{UCQ})$ and $(\mathcal{SHI},\text{UCQ})$ have
the same expressive power as MDDLog restricted to unary and binary EDB
relations. In fact, every such MDDLog program can be translated into
an equivalent OMQ from $(\ALC,\text{UCQ})$ in polynomial time.  Thus,
the lower bounds in Point~1 and~2 of Theorem~\ref{thm:alcicompl} are a
consequence of those in Theorem~\ref{thm:hardness1}.  

The lower bound stated in Point~3 of Theorem~\ref{thm:alcicompl} is
proved by a non-trivial reduction of the 2-exp torus tiling problem.
Compared to the reduction that we have used for MDDLog, some major
changes are required. In particular, the queries used there do not
seem to be suitable for this case, and thus we replace them by a
different set of queries originally introduced in
\cite{DBLP:conf/cade/Lutz08}. Details are in the appendix. It can be
shown exactly as in the proof of Theorem~3 in
\cite{DBLP:conf/kr/BienvenuLW12} that, in the lower bounds in Points~2
and~3, we can assume the TBoxes of the two OMQs to be identical.

We note in passing that we again obtain corresponding lower bounds
for rewritability.
%
\begin{restatable}{theorem}{THMrewriteOMQ}\label{thm:rewriteOMQ}
In $(\mathcal{ALC},\text{UCQ})$ and $(\mathcal{ALCI},\text{CQ})$,
rewritability into FO, into monadic Datalog, and into Datalog is 2{\sc 
  NExpTime}-hard. 
\end{restatable}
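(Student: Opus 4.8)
The plan is to mirror the rewritability argument already sketched for MDDLog in Theorem~\ref{thm:rewrite}, transferring it across the translation from MDDLog to OMQs. Since the lower bounds in Points~1 and~2 of Theorem~\ref{thm:alcicompl} are obtained by translating the hard MDDLog instances into equivalent $(\mathcal{ALC},\text{UCQ})$-OMQs in polynomial time, and since an equivalent OMQ is FO-rewritable (respectively (monadic) Datalog-rewritable) exactly when the original MDDLog program is, the $2\NExpTime$-hardness of rewritability for MDDLog should carry over directly to $(\mathcal{ALC},\text{UCQ})$. Concretely, the first thing I would do is recall from the proof of Theorem~\ref{thm:rewrite} the precise modification of the tiling reduction: there the instance $\Pi$ and query $q$ of Lemma~\ref{lem:contlowercorr} are combined so that rewritability of a single MDDLog program encodes the (non-)existence of a tiling, rather than a containment between two programs.

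The main step is then to argue that this modified reduction produces an MDDLog program whose EDB relations are only unary and binary. Inspecting the reduction, the EDB schema is $\Sbf_E=\{r,\mn{jump},B_1,B_2,\overline{B}_1,\overline{B}_2\}$, with $r,\mn{jump}$ binary and the rest monadic, so the restriction is already met; this is exactly the restriction under which the cited expressibility results of \cite{DBLP:conf/pods/BienvenuCLW13,DBLP:journals/tods/BienvenuCLW14} guarantee a polynomial-time translation into an equivalent $(\mathcal{ALC},\text{UCQ})$-OMQ. I would therefore apply that translation to the single program arising in the modified reduction, obtaining an $(\mathcal{ALC},\text{UCQ})$-OMQ $Q$ that is equivalent as a Boolean (or unary) query, and observe that equivalence of queries preserves each of the three rewritability notions verbatim, since FO, monadic Datalog, and Datalog are all semantic properties of the defined query. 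This yields the $(\mathcal{ALC},\text{UCQ})$ part of the theorem.

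For the $(\mathcal{ALCI},\text{CQ})$ part, the translation route is not available, since the hardness there (Point~3 of Theorem~\ref{thm:alcicompl}) rests on a separate reduction using the queries of \cite{DBLP:conf/cade/Lutz08} rather than on a translation from MDDLog. Here the plan is to adapt that dedicated reduction in the same spirit as the rewritability modification for MDDLog: arrange the single OMQ so that a rewriting into the target class would let one decide, within the promised complexity, whether a tiling exists, contradicting $2\NExpTime$-hardness of the tiling problem unless $\FO$ (respectively monadic Datalog, Datalog) rewritings require double-exponential effort to test. The hard part will be verifying that the $(\mathcal{ALCI},\text{CQ})$ reduction can be recast as a rewritability question at all, since the CQ-only restriction and the use of inverse roles make the query geometry more rigid than in the UCQ case; I expect the bookkeeping needed to ensure that a putative low-complexity rewriting would collapse the tiling problem to be the principal technical obstacle, even though, as in Theorem~\ref{thm:rewrite}, the underlying idea is a routine modification of the containment reduction. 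Full details are deferred to the appendix.
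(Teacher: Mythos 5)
Your treatment of the $(\ALC,\text{UCQ})$ half is exactly the paper's: Theorem~\ref{thm:rewrite} already produces a single MDDLog program over a unary/binary EDB schema (note that the rewritability reduction extends the schema of Lemma~\ref{lem:contlowercorr} with a fresh binary relation $s$, and the CQ version adds binary relations $\mn{jump}_1,\mn{jump}_2$, so the schema is still unary/binary), the polynomial translation into an equivalent $(\ALC,\text{UCQ})$-OMQ preserves the defined query, and rewritability into FO, monadic Datalog, and Datalog are properties of that query, so the hardness transfers. This half is correct and matches the paper.

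The $(\ALCI,\text{CQ})$ half, however, is a plan rather than a proof, and the plan as stated misdescribes the logical shape of the argument. You write that a rewriting ``would let one decide, within the promised complexity, whether a tiling exists \dots unless rewritings require double-exponential effort to test''; but the reduction has nothing to do with the cost of testing rewritings. As in Theorem~\ref{thm:rewrite}, one must construct in polynomial time a \emph{single} OMQ $Q$ with a two-sided property: if a tiling exists then $Q$ is FO-rewritable, and if no tiling exists then $Q$ is not even Datalog-rewritable; Rossman's theorem (FO-rewritable implies UCQ-rewritable, hence monadic Datalog-rewritable, hence Datalog-rewritable) then makes this one construction cover all three notions simultaneously. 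The missing idea is how to realize this with a single CQ, and this is precisely where the paper does real work: it exploits that in its containment reduction the second TBox $\Tmc_2$ consists only of inclusions $C \sqsubseteq \bot$ with $C$ an \ELI-concept, sets $q = q_2$, takes $\Tmc = \Tmc_1 \cup \Tmc_2$, and encodes the 3-colorability gadget as concept inclusions $\exists u \,.\, D \sqsubseteq R \sqcup G \sqcup B$ (with $u$ a fresh role linking every graph vertex to the grid component, since $D$ is only derivable there), $C_1 \sqcap C_2 \sqsubseteq C_{q_2}$ for distinct $C_1,C_2 \in \{R,G,B\}$, and $C \sqcap \exists s \,.\, C \sqsubseteq C_{q_2}$, where $C_{q_2}$ is an \ELI-concept whose nonemptiness in any model forces a match of $q_2$ --- this device substitutes for the extra goal rule/UCQ disjunct that is unavailable when the query must be a single CQ. With this, a tiling yields the UCQ-rewriting consisting of $q_2$ together with $\exists x\, C(x)$ for each $C \sqsubseteq \bot$ in $\Tmc_2$, while the absence of a tiling makes $Q$ express non-3-colorability on the instances $I^+_G$, which is not Datalog-expressible. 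None of this appears in your proposal, and since you yourself flag as open whether the $(\ALCI,\text{CQ})$ reduction ``can be recast as a rewritability question at all,'' the second half of the theorem remains unproved.
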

Now for the upper bounds in Theorem~\ref{thm:alcicompl}.  The
translation of OMQs into MDDLog programs is more involved than the
converse direction, a naive attempt resulting in an MDDLog program
with existential quantifiers in the rule heads. We next analyze
the blowups involved. The construction used in the proof of the
following theorem is a refinement of a construction from
\cite{DBLP:conf/pods/BienvenuCLW13}, resulting in improved bounds.
\begin{restatable}{theorem}{THMthatreductionagain}\label{thm:thatreductionagain} 
For every OMQ $Q=(\Tmc,\Sbf_E,q)$ from $(\mathcal{SHI},\text{UCQ})$, one can
construct an equivalent MDDLog program $\Pi$ such that
  \begin{enumerate}

  \item $|\Pi| \leq 2^{2^{p(|q| \cdot \mn{log}|\Tmc|)}}$;

  \item the IDB schema of $\Pi$ is of size $2^{p(|q| \cdot \mn{log}|\Tmc|)}$;

  \item the rule size of $\Pi$ is bounded by $|q|$

  \end{enumerate}
  where $p$ is a polynomial. The construction takes time polynomial
in $|\Pi|$.
\end{restatable}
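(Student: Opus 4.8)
The plan is to refine the OMQ-to-MDDLog translation of \cite{DBLP:conf/pods/BienvenuCLW13} so that the dependence on $|\Tmc|$ sits in the \emph{base} of the exponentials rather than in the exponent, by exploiting that answering a UCQ over a $\mathcal{SHI}$ TBox requires tracking only \emph{query-relevant} type information and not full concept types. First I would normalize $\Tmc$ so that every axiom is simple (a bounded number of concept names and a single role), introducing fresh concept names to break up large disjunctions and conjunctions; this is standard for $\mathcal{SHI}$ and enlarges the set $\mn{cl}(\Tmc)$ of relevant concepts only linearly, so $\log|\Tmc|$ is unchanged up to an additive constant and is absorbed by the polynomial $p$. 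Role hierarchies are eliminated by saturating role inclusions, and each transitive role $r$ is handled by propagation axioms, so that it suffices to reason over a forest-shaped canonical model whose anonymous part is a tree with respect to the remaining skeleton roles. I would then invoke the forest-model property of $\mathcal{SHI}$: every model of $\Tmc$ together with an $\Sbf_E$-instance $I$ can be unraveled into one consisting of the named individuals of $I$ with trees of bounded outdegree attached, without changing which tuples are certain answers to $q$.

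Next I would characterize certain answers combinatorially. In a forest model, a match of a CQ from $q$ decomposes into a part landing on named individuals and maximal connected \emph{tree parts} that descend into the anonymous trees; since each tree part uses at most $|q|$ variables, it can be summarized by a \emph{tree witness}, a tree-shaped query fragment in which every one of its $\le|q|$ variables is annotated with a single concept from $\mn{cl}(\Tmc)$ recording the minimal information needed to graft the fragment onto a node of a given type and to propagate it downward. This is the crucial step for Point~2: there are at most $|\Tmc|^{O(|q|)}=2^{p(|q|\cdot\log|\Tmc|)}$ such annotated fragments, so taking the tree witnesses (together with one monadic predicate per concept name of the normalized TBox, of which there are only polynomially many) as the IDB relations bounds the IDB schema as required. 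I would precompute, by a fixed model-checking test on the canonical tree, which tree witnesses are realized at a node carrying a given concept; the soundness and completeness of this bounded summary is what must be established.

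Finally I would assemble $\Pi$ so that its semantics---an answer must lie in $\mn{goal}$ in \emph{every} model of $\Pi$ extending $I$---captures the universal ``in every model'' quantification of certain answers. Disjunctive guessing of concept membership is driven by the \emph{normalized} axioms, so the guessing and consistency rules have constant size; the tree-witness predicates are derived bottom-up by rules that mirror the tree-shaped query fragments and hence have size at most $|q|$; and $\mn{goal}$ is triggered by rules combining a named-part match of some disjunct of $q$ with tree-witness predicates at the boundary variables, again mirroring the query and so of size $\le|q|$. This yields Point~3, and since every rule uses at most $|q|$ atoms over the IDB schema of size $2^{p(|q|\log|\Tmc|)}$, a routine count of the possible rules bounds $|\Pi|$ by $2^{2^{p(|q|\log|\Tmc|)}}$, giving Point~1; the whole construction is plainly polynomial in $|\Pi|$. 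The main obstacle is the correctness of the query-relevant summaries under $\mathcal{SHI}$: one must prove that annotating each tree-witness variable with a \emph{single} concept, rather than a full type, loses no information for UCQ entailment, and that transitive roles---which let a single query atom match a long $r$-path inside a tree---are faithfully captured by the saturated propagation axioms together with the tree-witness bookkeeping. Showing that these bounded summaries are both sound and complete for the certain-answer semantics is the technical heart of the argument.
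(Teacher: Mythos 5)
There is a genuine gap, and it sits exactly where you placed your ``technical heart.'' Your plan imports a tree-witness/canonical-model technique from Horn DLs into a non-Horn setting where its prerequisites fail. First, $\mathcal{SHI}$ TBoxes contain disjunction, so there is no single canonical tree to ``precompute, by a fixed model-checking test,'' which tree witnesses are realized at a node carrying a given concept: which fragments can be grafted below an anonymous element depends on the full set of concepts it satisfies \emph{and} on disjunctive choices resolved globally across the model, so realization is not a function of a single concept at all. Second, and relatedly, your central summarization claim --- that annotating each tree-witness variable with one concept from the closure loses no information for UCQ entailment --- is not just unproven but dubious as stated: the family of types below which a given fragment is realizable need not be characterized by any single concept of the closure, and repairing this by reverting to full types inflates the IDB schema to $2^{\Omc(|\Tmc|)}$ (times query-dependent factors), destroying Point~2. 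Normalization does not rescue this, since fresh names only abbreviate Boolean structure. (Your worry about transitive roles letting one query atom match a long $r$-path is, by contrast, moot: in $(\mathcal{SHI},\text{UCQ})$ the setting disallows superroles of transitive roles in $q$, which is precisely what makes tree-shaped fragments sufficient.)

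The paper's construction circumvents your lemma rather than proving it. It keeps \emph{full} type information but represents it distributively: for every $C \in \mn{sub}(\Tmc) \cup \mn{con}(q)$ --- where $\mn{con}(q)$ is the set of tree-shaped queries obtained from CQs of $q$ by quantifying answer variables, identifying variables, and taking subqueries, read as $\mathcal{ALCI}$-concepts --- there are independent monadic IDB relations $P_C$ and $\overline{P}_C$ (plus nullary $P_{q'},\overline{P}_{q'}$ for the Boolean tree-shaped queries). This yields an IDB schema of size $2^{p(|q|\cdot \mn{log}|\Tmc|)}$ with no lossy summary: each element still carries a complete type, merely spread across many predicates, and the refinement over the earlier construction of Bienvenu et al.\ is exactly to replace one predicate $P_t$ per \emph{set} $t$ of concepts by one pair of predicates per concept. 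Coherence of the independent guesses is then enforced \emph{semantically}, via rules $\bot \leftarrow \delta$ for non-realizable diagrams (single-variable ones over the $P_C$, and connected ones with at most two variables and three atoms), and answers via rules $\mn{goal}(\vect{x}') \leftarrow \delta(\vect{x})$ for every diagram with at most $|q|$ variable occurrences that semantically implies $q$; realizability and implication are delegated to 2{\sc ExpTime} decision procedures, which is absorbed into ``time polynomial in $|\Pi|$,'' and correctness is inherited from the earlier proof rather than from a new tree-witness calculus. If you want to salvage your route, the soundness/completeness of a single-concept tree-witness calculus for $\mathcal{SHI}$ is the statement you would have to prove, and the disjunction-induced absence of canonical models is the obstacle your current sketch does not address.
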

%
%
%
%
We now use Theorem~\ref{thm:thatreductionagain} to derive an upper
complexity bound for containment in
$(\mathcal{SHI},\text{UCQ})$. While there are double exponential
blowups both in Theorem~\ref{thm:mainupperPLUS} and in
Theorem~\ref{thm:thatreductionagain}, a careful analysis reveals that
they do not add up and, overall, still give rise to a 2\NExpTime upper
bound. In contrast to Theorem~\ref{thm:mainupper}, though, we only get
an algorithm whose running time is double exponential in both inputs
$(\Tmc_1,\Sbf_E,q_1)$ and $(\Tmc_2,\Sbf_E,q_2)$. However, it is double
exponential only in the size of the actual queries $q_1$ and $q_2$
while being only single exponential in the size of the TBoxes $\Tmc_1$
and~$\Tmc_2$. This is good news since the size of $q_1$ and $q_2$ is
typically very small compared to the sizes of $\Tmc_1$
and~$\Tmc_2$. For this reason, it can even be reasonable to assume
that the sizes of $q_1$ and $q_2$ are constant, in the same way in
which the size of the query is assumed to be constant in classical
data complexity. Note that, under this assumption, we obtain a {\sc
  NExpTime} upper bound for containment.
\begin{restatable}{theorem}{THMALCI}\label{thm:ALCI}
Containment between OMQs from $(\mathcal{SHI},\text{UCQ})$ is
in
{\sc 2NExpTime}.  More precisely, for OMQs
$Q_1=(\Tmc_1,\Sbf_E,q_1)$ and $Q_2=(\Tmc_2,\Sbf_E,q_2)$, it can be decided
non-deterministically in time 
$2^{2^{p(|q_1|\cdot|q_2|\cdot\mn{log}|\Tmc_1|\cdot\mn{log}|\Tmc_2|)}}$
whether $Q_1 \subseteq Q_2$, $p$ a
polynomial.
\end{restatable}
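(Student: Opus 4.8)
The plan is to reduce containment of $(\mathcal{SHI},\text{UCQ})$-OMQs to containment of MDDLog programs and then invoke Theorem~\ref{thm:mainupperPLUS}, while re-running — rather than black-boxing — its running-time analysis, so that the two double-exponential blowups do not compound into a triple exponential. First I would apply Theorem~\ref{thm:thatreductionagain} to translate $Q_1$ and $Q_2$ into equivalent MDDLog programs $\Pi_1$ and $\Pi_2$. Since the translation preserves the defined query, we have $Q_1 \subseteq Q_2$ iff $\Pi_1 \subseteq \Pi_2$; and because the $q_i$ may carry answer variables, the $\Pi_i$ may be non-Boolean, so it is the unrestricted-arity procedure behind Theorem~\ref{thm:mainupperPLUS} (via the reductions of Lemmas~\ref{lem:constredok} and~\ref{lem:constredtwook}) that we apply.

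The naive composition fails: Theorem~\ref{thm:mainupperPLUS} decides containment in time $2^{2^{p(|\Pi_2| \cdot \mn{log}|\Pi_1|)}}$, and substituting the double-exponential bound $|\Pi_i| \leq 2^{2^{p(|q_i| \cdot \mn{log}|\Tmc_i|)}}$ from Theorem~\ref{thm:thatreductionagain} would place a double exponential inside the outer double exponential, giving a triple exponential. The way out is to observe that the containment algorithm is double-exponentially sensitive only to a few fine-grained parameters of the $\Pi_i$, all of which stay small after the translation. Concretely, Theorem~\ref{thm:thatreductionagain} guarantees that the rule size of $\Pi_i$ is bounded by $|q_i|$ (hence its variable width is at most $|q_i|$ as well) and that the IDB schema of $\Pi_i$ has size only $2^{p(|q_i| \cdot \mn{log}|\Tmc_i|)}$, i.e.\ single exponential, even though $|\Pi_i|$ itself is double exponential. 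I would then retrace the three steps of the MDDLog procedure with these refined measures: after simplification (Theorem~\ref{thm:simplify}) the variable width is preserved and thus stays $\leq |q_i|$; the reduction to relativized emptiness (Theorem~\ref{thm:toempty}) produces disjointness constraints $D$ with $|D| = \Omc(|\Sbf_{I,2}^S|)$, where $\Sbf_{I,2}^S$ is the IDB schema of the simplified $\Pi_2$; and the emptiness test (Theorem~\ref{thm:emptinessCompl}) runs in time $\Omc(|\Pi|^3) \cdot 2^{\Omc(|D| \cdot v)}$ with $v$ the variable width. If $|\Sbf_{I,2}^S|$ remains single exponential and $v \leq \max(|q_1|,|q_2|)$, then the exponent $|D| \cdot v$ is single exponential in $|q_1| \cdot |q_2| \cdot \mn{log}|\Tmc_2|$, so the factor $2^{\Omc(|D| \cdot v)}$ contributes only a double exponential, while the polynomial-in-$|\Pi|$ factor, being at most a polynomial of a double-exponential quantity, is absorbed. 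Collecting the contributions of both queries yields the stated bound $2^{2^{p(|q_1| \cdot |q_2| \cdot \mn{log}|\Tmc_1| \cdot \mn{log}|\Tmc_2|)}}$.

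The main obstacle I anticipate is controlling the IDB-schema size through simplification. Theorem~\ref{thm:simplify} bounds $|\Pi_i^S|$ and $|\Sbf'_E|$ by $p(r \cdot 2^s)$ with $r$ the number of rules (double exponential here) and $s$ the rule size (linear in $|q_i|$), but it does not by itself bound the number of fresh IDB relations introduced when splitting rule bodies into biconnected pieces. I would need to verify that these fresh IDBs are parametrized only by sub-bodies of individual rules, so that their count is governed by the rule size and variable width (polynomial in $|q_i|$) together with the original single-exponential IDB schema, rather than by the double-exponential number of rules. Establishing that $|\Sbf_{I,2}^S|$ stays single exponential in $|q_2| \cdot \mn{log}|\Tmc_2|$ is precisely what prevents the emptiness step from blowing up to a triple exponential, and it is the crux of why the two double exponentials do not add up. The remaining girth bookkeeping inherited from Theorems~\ref{thm:simplify} and~\ref{thm:toempty} is discharged exactly as in Theorem~\ref{thm:mainupperPLUS} via Lemma~\ref{thm:emptinessGirth}, and requires no change.
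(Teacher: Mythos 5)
Your proposal follows essentially the same route as the paper's proof: translate via Theorem~\ref{thm:thatreductionagain}, eliminate answer variables as in Theorem~\ref{thm:mainupperPLUS} (which at most doubles rule size and leaves the IDB schema unchanged), and then retrace the three-step containment procedure tracking the fine-grained parameters---variable width $\leq 2|q_i|$ and single-exponential IDB schema---so that the exponent $|D|\cdot v$ in Theorem~\ref{thm:emptinessCompl} stays single exponential while the polynomial factor in the double-exponential $|\Pi|$ is absorbed, exactly as the paper argues. The one obligation you defer, controlling the number of fresh IDB relations created by biconnected splitting, is discharged in the paper precisely along the lines you anticipate: the double-exponentially many type-3 rules $\bot \leftarrow P_{C_1}(x)\wedge\cdots\wedge P_{C_n}(x)$ are already biconnected (single variable) and need no splitting in the Boolean case, and after the constant-removal step the split pieces take the form $Q() \leftarrow P_C(x)\wedge R_a(x)$, whose bodies range over only $2^{p(|q_i|\cdot\mn{log}|\Tmc_i|)}$ possibilities, so the fresh IDBs are shared across rules rather than counted per rule.
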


\section{Outlook}

There are several interesting questions left open. One is whether
decidability of containment in MMSNP generalizes to GMSNP, where IDB
relations can have any arity and rules must be frontier-guarded
\cite{DBLP:journals/tods/BienvenuCLW14} or even to frontier-guarded
disjunctive TGDs, which are the extension of GMSNP with
existential quantification in the rule head
\cite{DBLP:conf/ijcai/BourhisMP13}. We remark that an extension of
Theorem~\ref{thm:thatreductionagain} to frontier-\emph{one}
disjunctive TGDs (where rule body and head share only a single
variable) 
seems not too hard.

Other open problems concern containment between OMQs. In
particular, it would be good to know the complexity of containment in
$(\ALC,\text{CQ})$ which must lie between \NExpTime and 2\NExpTime.
Note that our first lower bound crucially relies on \emph{unions} of
conjunctive queries to be evailable, and the second one on inverse
roles. It is known that adding inverse roles to \ALC tends to increase
the complexity of querying-related problems
\cite{DBLP:conf/cade/Lutz08}, so the complexity of containment in
$(\ALC,\text{CQ})$ might indeed be lower than 2\NExpTime.  It would
also be interesting to study containment for OMQs from
$(\ALCI,\text{UCQ})$ where the actual query is connected and has at
least one answer variable. In the case of query answering, such a
(practically very relevant) assumption causes the complexity to drop
\cite{DBLP:conf/cade/Lutz08}. Is this also the case for containment?

\smallskip
\noindent {\bf Acknowledgements.} Bourhis was supported by
CPER Data and Lutz was supported by  ERC grant 647289 CODA. 



\cleardoublepage

\appendix

\section{MDDLog Hardness: Missing Details}
\label{app:mddloglower}

\begin{figure}[t!]
  \begin{center}
    \framebox[1\columnwidth]{
 \footnotesize
\begin{tikzpicture}[->,font=\scriptsize]
 \footnotesize
\node (a) at (0,0.75) {$a$};
\node(b) at (-0.5,0) {$b$};
\node(c) at (0.5,0) {$c$};
\node (b1) at (-1.25,-0.75) {$b_1$};
\node (b2) at  (-1.25,-1.5) {$b_2$};
\node (c1) at (1.25,-0.75) {$c_1$};
\draw (a) -- (b) node[midway,above,left] {$r$}; 
\draw (b1) -- (b2)  node[midway,above,left]  {$r$};  
\draw (a) -- (c) node[midway,above,right] {$r$}; 
\draw (b)-- (b1) node[midway,above,left] {$r$}; 
\draw (c) -- (c1) node[midway,above,right] {$r$}; 
\draw (b) .. controls +(up:1cm) and +(left:1cm) .. (b2) node[midway,above,left] {$jump$}; 
\draw (c) .. controls +(up:1cm) and +(right:1cm) .. (c1) node[midway,above,right] {$jump$};
\fill[black!10!white](0.5,-0.1) -- (0.25,-0.75) -- (0.75,-0.75)-- cycle;
\fill[black!10!white](-0.5,-0.1) -- (-0.25,-0.75) -- (-0.75,-0.75)-- cycle;
\end{tikzpicture}
}
    \caption{The counting gadgets.}
    \label{fig:gadgets}
  \end{center}
\end{figure}
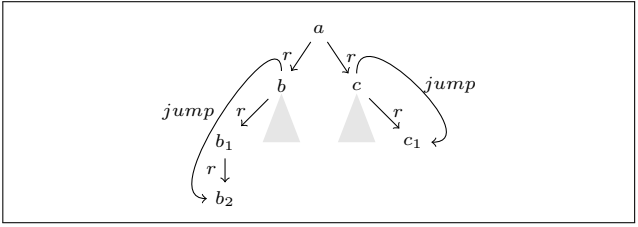
We first repeat the details of the construction of a UCQ which
achieves (Q1) along with additional information, then prove
Lemma~\ref{lem:contlowercorr}, and subsequently describe how the UCQ
can be replaced by a CQ. Finally, we prove
Theorem~\ref{thm:rewrite}. Since the constructed queries will make use
of the counting gadgets in counting trees, we show these gadgets again
in Figure~\ref{fig:gadgets}. There, $a$ is a node in a counting tree,
$b$ is its left successor, and $c$ is its right successor.

To define the UCQ that achieves (Q1), set $q_0(x,y)=r(x_0,y_0)$ and,
for $0 < i \leq m$,
$$
\begin{array}{r@{}l}
q_i(x_i,&y_i) = \exists x_{i-1} \exists y_{i-1}
\exists z_{i,0} \cdots \exists z_{i,i+2}
\exists z'_{i,1} \cdots \exists z'_{i,i+3} \\[\myeqnsep]
  & q_{i-1}(x_{i-1},y_{i-1}) \wedge r(x_{i-1},x_i) \wedge r(y_{i-1},y_i)  \, \wedge \\[\myeqnsep]
  & \mn{jump}(x_i,z_{i,i+2}) \wedge \mn{jump}(y_i,z'_{i,i+3})   \, \wedge \\[\myeqnsep]
  & r(z_{i,0},z_{i,1}) \wedge \cdots \wedge r(z_{i,i+1},z_{i,i+2})   \, \wedge \\[\myeqnsep]
  & r(z_{i,0},z'_{i,1}) \wedge r(z_{i,1},z'_{i,2}) \wedge \cdots \wedge r(z'_{i,i+2},z'_{i,i+3})
\end{array}
$$
The idea is that $I \models q_m[a,b]$ if $a$ and $b$ are leafs in
successive counting trees that are at the same leaf position, that is,
(i)~the roots of the trees are connected by the relation $r$ and
(ii)~$a$ can be reached from the root of the first tree by following
the same sequence of left and right successors that one also needs to
follow to reach $b$ from the root of the second tree. In fact, the
$r$-atom in $q_0$ corresponds to the move from the root of one
counting tree to the root of a successive tree, the atoms
$r(x_{i-1},x_i)$ and $r(y_{i-1},y_i)$ in $q_i$ correspond to moving
down the $i$-th step in both trees, and the remaining atoms in $q_i$
make sure that both of these steps are to a left successor or to a
right successor.

To understand the latter, note that \mn{jump} is the relation used in
the navigation gadgets attached to tree nodes. The variable $z_{i,i+2}$
can only be mapped to the target of the \mn{jump} relation in the
navigation gadget at~$x_i$, and likewise for $z'_{i,i+3}$ and the
target of the \mn{jump} relation in the navigation gadget at $y_i$.
Note that there must be a $z_{i,0}$ from which $z_{i,i+2}$ can be
reached along an $r$-path of length $i+2$ and from which $z'_{i,i+3}$
can be reached along an $r$-path of length $i+3$.  If $x_i$ and $y_i$
are both left successors, then this $z_{i,0}$ is the root of the first
counting tree. If $x_i$ and $y_i$ are both right successors, then
$z_{i,0}$ is the $r$-predecessor of the root of the first counting
tree, which must exist at all relevant nodes: at grid nodes because of
the self step nodes and at horizontal/vertical step nodes because they
have a grid node as $r$-predecessor.  If $x_i$ is a left successor and
$y_i$ a right successor or vice versa, then there is no target for
$z_{i,0}$ because this target would have to reach the root of the
first counting tree on a path of length one and on a path of length
zero, but there is no reflexive loop at the root of counting
trees (only a length two loop via self step nodes).

It is now easy to define the desired UCQ:
$$
\begin{array}{r@{\,}c@{\,}l}
q&=&\exists x_m \exists y_m \, q_m(x_m,y_m) \wedge B_1(x_m) \wedge
\overline{B}_2(y_m) \\[\myeqnsep]
&& \vee \, \exists x_m \exists y_m \, q_m(x_m,y_m) \wedge \overline{B}_1(x_m) \wedge
B_2(y_m)
\end{array}
$$
The first CQ in this UCQ is displayed in Figure~\ref{fig:mainquery}.  
\begin{figure}[t!]
  \begin{center}
    \framebox[1\columnwidth]{
\begin{tikzpicture}[->,font=\scriptsize]
\node (x0) at (0,0) {$x_0$};
\node (x1) at (-2.4,-1) {$x_1$} ;
\node 	(x2) at (-2.4,-2) {$x_2$} ;
\node 	(xm1) at (-2.4,-3) {$x_{m-1}$} ;
\node (xm) at (-2.4,-4.5) {$x_m$} ;

\node(xb) at (-2.8, -4.6) {$B_1$};
\node(xb) at (2.8, -4.6) {$\overline{B}_2$};

\node (u1) at (-1,-1.5) {$z_{1,3}$};
\node (u2) at (-1,-2.5) {$z_{2,4}$} ;
\node 	(um1) at (-1,-3.5) {$z_{m-1,m+1}$} ;
\node 	(um) at (-1,-5) {$z_{m,m+2}$} ;

\node (y0) at (1,-0.3) {$y_0$};
\node (y1) at (2.4,-1) {$y_1$} ;
\node 	(y2) at (2.4,-2) {$y_2$} ;
\node 	(ym1) at (2.4,-3) {$y_{m-1}$} ;
\node 	(ym) at (2.4,-4.5) {$y_m$} ;

\node (v1) at (1,-1.5){$z_{1,4}$};
\node (v2) at (1,-2.5) {$z_{2,5}$} ;
\node 	(vm1) at (1,-3.5) {$z_{m-1,m+2}$} ;
\node 	(vm) at (1,-5) {$z_{m,m+3}$}  ;

\node (z1) at (0,-0.5) {$z_{1,0}$};
\node (z2) at (0,-1.5) {$z_{2,0}$} ;
\node 	(zm1) at (0,-2.5) {$z_{m-1,0}$} ;
\node 	(zm) at (0,-4) {$z_{m,0}$} ;

\draw (x0) to[bend right]   node[midway,left]  {$r$} (x1); 
\draw (x1) -- (x2)  node[midway,left]  {$r$}; 
\draw [dashed] (x2) -- (xm1); 
\draw (xm1) -- (xm)  node[midway,left]  {$r$}; 

\draw (x0) to  node[midway,right,above]  {$r$} (y0) ; 
\draw (y0) to[bend left]  node[midway,right]  {$r$} (y1) ; 
\draw (y1) -- (y2)  node[midway,right]  {$r$}; 
\draw [dashed] (y2) -- (ym1); 
\draw (ym1) -- (ym)  node[midway,right]  {$r$}; 

\draw (x1) -- (u1)  node[sloped, midway,above]  {$jump$}; 
\draw (x2) -- (u2)  node[sloped, midway,above]  {$jump$};  
\draw (xm1) -- (um1)  node[sloped, midway,above]  {$jump$}; 
\draw (xm) -- (um)  node[sloped, midway,above]  {$jump$}; 

\draw (y1) -- (v1)  node[sloped, midway,above]  {$jump$}; 
\draw (y2) -- (v2)  node[sloped, midway,above]  {$jump$};  
\draw (ym1) -- (vm1)  node[sloped, midway,above]  {$jump$}; 
\draw (ym) -- (vm)  node[sloped, midway,above]  {$jump$};

\draw (z1) -- (u1)  node[sloped,midway,above]  {$r^3$}; 
\draw (z2) -- (u2)  node[sloped,midway,above]  {$r^4$}; 
\draw (zm1) -- (um1) node[sloped,midway,above]  {$r^{m+1}$}	; 
\draw (zm) -- (um)  node[sloped,midway,above]  {$r^{m+2}$}; 

\draw (z1) -- (v1)  node[sloped,midway,above]  {$r^4$}; 
\draw (z2) -- (v2)  node[sloped,midway,above]  {$r^5$}; 
\draw (zm1) -- (vm1) node[sloped,midway,above]  {$r^{m+2}$}	; 
\draw (zm) -- (vm)  node[sloped,midway,above]  {$r^{m+3}$}; 
\end{tikzpicture}
}
    \caption{The first CQ in $q$ (again).}
    \label{fig:mainquery}
  \end{center}
\end{figure}
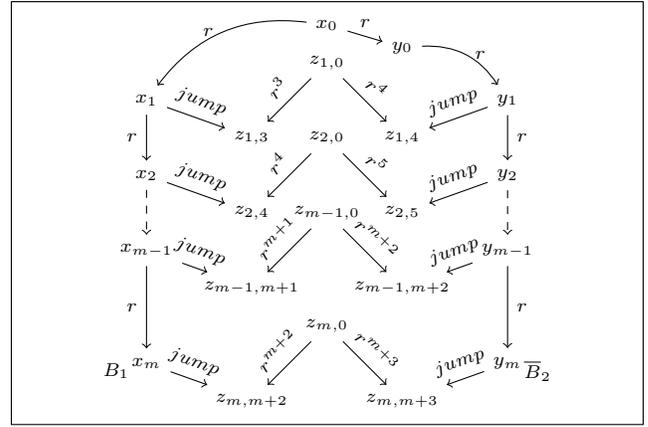
As required, it evaluates to true on an instance if there are
successive counting trees (whose roots have an $r$-predecessor and)
which contain two leafs at the same position that are labeled
differently regarding $B_1,\overline{B}_1$ and $B_2,\overline{B}_2$.
This finishes the construction for the case of UCQs. We first
establish correctness and then show how to replace the UCQ with a CQ.

\noindent 
\LEMcontlowercorr*

\noindent
\begin{proof}
  (sketch)
  Assume first that there is no tiling for $P$ and $w_0$. Let $I$ be
  the instance that represents the $2^{2^n} \times 2^{2^n}$-grid with counting trees
  in the way described above.  It can be verified that
  $I \not\models q$.  We aim to show that $I \models \Pi$ and thus $I$
  witnesses $\Pi \not\subseteq q$. Assume to the contrary that
  $I \not\models \Pi$.  Then there is an extension $J$ of $I$ that
  satisfies all rules in $\Pi$, but does not contain $\mn{goal}()$.
  In particular, $J$ must contain at least one atom $T_i(c)$ for each
  constant $c$ with $\mn{gactive}(c) \in J$, thus we can choose a
  concrete $T_i(c)$ for each such $c$.  Since none of the goal rules
  in $\Pi$ applies, these chosen atoms must represent a tiling for $P$
  and $w_0$. We have thus obtained a contradiction to the assumption
  that no such tiling exists.

\smallskip

Now assume that there is a tiling $f$ for $P$ and $w_0$. Take an
instance $I$ with $I \models \Pi$. Assume to the contrary of what is
to be shown that $I \not \models q$. Then $I$ satisfies
Conditions~(Q1) and~(Q2).  Extend $I$ to a new instance $J$ as
follows. Since $I$ satisfies (Q2), every g-active constant $c$ in $I$
is associated with a unique counter value, thus with a unique
horizontal position $x \in \{0,\dots,2^{2^n}-1\}$ and a unique
vertical position $y \in \{0,\dots,2^{2^n}-1\}$. Include $T_i(c) \in
J$ if $f(x,y)=T_i$ and then exhaustively apply all non-disjunctive
rules from $\Pi_{\mn{tree}}$. One can verify that $J$ satisfies all
rules in $\Pi$ while making the goal relation false, in contradiction
to $I \models \Pi$. In particular, satisfaction of (Q1) and the way
in which we have added facts $T_i(c)$ to $J$ imply that none of
the goal rules that check for a tiling defect applies.
\end{proof}
To replace the UCQ $q$ by a CQ, we again use a coding trick from
\cite{DBLP:conf/mfcs/BjorklundMS08}. 
The basic
idea is to replace $B_1,\overline{B}_1$ and $B_2,\overline{B}_2$ with
suitable \emph{bit gadgets} and then to use a construction that is
very similar to the one used above for ensuring that we consistenly
follow left successors or right successors in corresponding steps of
the navigation in the two involved trees.

We replace $B_1(x)$ with the following \emph{bit one gadget}:
$$
\begin{array}{c}
  r(x,x_1) \wedge r(x_1,x_2) \wedge r(x_2,x_3) \wedge r(x_3,x_4)
   \, \wedge \\[\myeqnsep]
  \mn{jump_1}(x,x_1) \wedge \mn{jump_1}(x,x_4)
\end{array}
$$
where $\mn{jump}_1$ is a fresh EDB relation and $\overline{B}_1(x)$
with the following \emph{bit zero gadget}:
$$
\begin{array}{c}
  r(x,x_1) \wedge r(x_1,x_2) \wedge r(x_2,x_3) \wedge r(x_3,x_4)
   \, \wedge \\[\myeqnsep]
  \mn{jump_1}(x,x_2) \wedge \mn{jump_1}(x,x_3).
\end{array}
$$
$B_2$ and $\overline{B}_2$ are replaced with corresponding gadgets in
which only $\mn{jump}_1$ is replaced with $\mn{jump}_2$. The existence
of these bit gadgets needs to be verified in the rules of
$\Pi_\mn{tree}$ that ensure the existence of counting trees.  In
addition to that, we require one further modification to
$\Pi_\mn{tree}$: the self step loops of length two at each grid node
are replaced with self step nodes of length four. All three
intermediate nodes on these loops behave exactly like a self step node
before. We then replace the above UCQ by
$$
\begin{array}{r@{\,}c@{\,}l}
q&=&\exists x_m \exists y_m \exists z_0 \cdots \exists z_{m+2}
\exists z'_1 \cdots \exists z_{m+5} \\[\myeqnsep]
&& q_m(x_m,y_m) \, \wedge \\[\myeqnsep]
&& \mn{jump}_1(x_m,z_{m+2}) \wedge \mn{jump}_2(y_m,z'_{m+5})   \, \wedge \\[\myeqnsep]
&& r(z_{0},z_{1}) \wedge \cdots \wedge r(z_{m+1},z_{m+2})   \, \wedge \\[\myeqnsep]
 && r(z_{0},z'_{1}) \wedge r(z_{1},z'_{2}) \wedge \cdots \wedge
    r(z'_{m+4},z'_{m+5}) 
\end{array}
$$
Note that the $x_m$ and $y_m$ must be leafs in successive counting
trees with the same leaf position. Additionally, the variable
$z_{m+2}$ can only be mapped to a target of the $\mn{jump}_1$ relation
in the bit gadget at~$x_m$, and likewise for $z'_{m+5}$ and a target
of the $\mn{jump}_2$ relation in the bit gadget at $y_m$.  There must
also be a $z_{0}$ from which $z_{m+2}$ can be reached along an
$r$-path of length $m+2$ and from which $z'_{m+5}$ can be reached
along an $r$-path of length $m+5$ (thus the difference in lengths is
three).  If the bit value at $x_m$ is zero and the bit value at $y_m$
is one, then this $z_{0}$ is the root of the first counting tree. If
the bit value at $x_m$ is one and the bit value at $y_m$ is zero, the
we can use for $z_{0}$ an $r$-predecessor of the root of the first
counting tree. It can be verified that when the bit values at $x_m$
and $y_m$ are identical, then the possible target for $z_0$ must have
$r$-paths to the root of the first counting tree of length $i$ and $j$
steps, for some
$$
(i,j) \in \{ (0,2), (0,3), (1,0), (1,3) \}.
$$
However, since we have extended the length of self loops at grid nodes
from two to four, there is no such target.  This finishes the
construction of the CQ $q$ and establishes the lower bounds stated in
Theorem~\ref{thm:hardness1}.

\medskip
We now come to the proof of Theorem~\ref{thm:rewrite}.

\noindent
\THMrewrite*

\noindent
\begin{proof}
  It suffices to consider Boolean MDDLog programs.  First note that,
  by Rossman's theorem, any such program that is rewritable into FO is
  rewritable into a UCQ. Consequently, FO-rewritability implies
  monadic Datalog-rewritability implies Datalog-rewritability. Based
  on this observation, we deal with all three kinds of rewritability
  in a single proof: we show that from a 2-exp \torus tiling problem
  $P$ and an input $w_0$ to $P$, we can construct in polynomial time
  a Boolean MDDLog program $\Pi'$ such that 
  \begin{enumerate}

  \item if there is a tiling for $P$ and $w_0$, then $\Pi'$ is
    FO-rewritable;

  \item if there is no tiling for $P$ and $w_0$, then $\Pi'$ is
    not Datalog-rewritable.

  \end{enumerate}
  Reconsider the reduction of the 2-exp square tiling problem to
  MDDLog containment given above. Given a 2-exp \torus tiling problem
  $P$ and an input $w_0$ to $P$, we have shown how to construct a
  Boolean MDDLog program~$\Pi$ and a Boolean CQ $q$ such that
  $\Pi \subseteq q$ iff there is a tiling for $P$ and $w_0$. Let
  $\Sbf_E$ be the EDB schema of $\Pi$ and $q$. To obtain the desired
  program $\Pi'$, we modify $\Pi$ as follows:
  \begin{enumerate}

\item in every goal rule,  change the head to $A(x)$;

\item add $\mn{goal}() \leftarrow q$,

\item add $R(x) \vee G(x) \vee B(x) \leftarrow A(x)$,
  $
    \mn{goal}() \leftarrow C_1(x) \wedge C_2(x) 
  $ for all  distinct  $C_1,C_2 \in \{R,G,B\}$,
  and $\mn{goal}() \leftarrow C(x) \wedge s(x,y) \wedge C(y)$
  for all $C \in \{R,G,B\}$
\end{enumerate}
where $s$ is a fresh EDB relation and $A,R,G,B$ are IDB
relations. Let $\Sbf'_E = \Sbf_E \cup \{s\}$. We now show that
$\Pi'$ satisfies Points~1 and~2 above.

For Point~1, assume that there is a tiling for $P$ and~$w_0$.  Then
$\Pi \subseteq q$. We claim that $q$ is a rewriting of $\Pi'$. By
construction of $\Pi'$, $I \models q$ clearly implies $I
\models \Pi'$ for all $\Sbf'_E$-instances $I$. For the converse,
let $I \models \Pi'$. First assume $I \not\models \Pi$. Then
there is an extension $J$ of $I$ to the IDB relations in $\Pi'$
such that the extension of $A$ is empty. Consequently,
we must have $I \models \Pi'$ because $I \models q$ and we
are done. Now assume $I \models \Pi$. Since $\Pi \subseteq q$,
this implies $I \models q$ as desired.

For Point~2, assume there is no tiling for $P$ and $w_0$. Then
$\Pi \not\subseteq q$. Given an undirected graph $G=(V,E)$, let the
instance $I^+_G$ be defined as the disjoint union of the instance $I_0$
which represents the $2^{2^n}$-grid plus counting gadgets and the
instance $I_G$ that contains the fact $s(v_1,v_2)$ for every
$\{v_1,v_2\} \in V$.

Since there is no tiling for $P$ and $w_0$, we have $I_0 \models \Pi$
and thus $I^+_G \models \Pi$. By construction of $\Pi'$ and since
$\Pi \not\models q$, this implies that $I_G \models \Pi'$ iff $G$ is
not 3-colorable. Assume to the contrary of what is to be shown that
there is a Datalog-rewriting $\Gamma$ of $\Pi'$. It is not difficult
to modify $\Gamma$ so that its EDB schema is $\Sbf''_E = \{ s \}$ and
on any $\Sbf''_E$-instance $I_G$ representing an undirected graph $G$,
the modified program $\Gamma'$ yields the same result that $\Gamma$
yields on $I^+_G$. We only sketch the idea: for every $n$-ary IDB
relation $S$ of $\Gamma$, all sets of positions
$P=\{i_1,\dots,i_k \}\subseteq \{ 1,\dots,n\}$, and all tuples
$t=(d_1,\dots,d_k)$ of elements of $I_0$, introduce a fresh $n-k$-ary
IDB $S_{P,t}$. Intuitively, $S_{P,t}$ is used to represent facts
$S(d_1,\dots,d_k)$ where every position $i_j \in P$ is mapped to the
element $d_j$ (which is not part of the input instance) and every 
position that is not in $P$ is mapped to an element of the input
instance $I_G$. It remains to introduce additional versions of each
rule that use the new predicates, in all possible combinations.

We have thus shown that non-3-colorability of graphs can be expressed
in Datalog, which is not the case
\cite{DBLP:journals/jcss/AfratiCY95}.
\end{proof}

\section{MDDLog Upper Bound: Missing Details}

\subsubsection*{From Unrestricted to Simple Programs}

\noindent
\THMsimplify*

\medskip
\noindent
To prove Theorem~\ref{thm:simplify}, we first concentrate on a single
Boolean MDDLog program $\Pi$ 
over EDB schema $\Sbf_E$. We first construct from $\Pi$ an equivalent
MDDLog program $\Pi'$ such that the following conditions are satisfied:
\begin{enumerate}

\item[(i)] all rule bodies are biconnected,
that is, when any single variable is removed from the body (by
deleting all atoms that contain it), then the resulting rule body is
still connected; 

\item[(ii)] if $R(x,\dots,x)$ occurs in a rule body with $R$ EDB, then
  the body contains no other EDB atoms.

\end{enumerate}
To construct $\Pi'$, we first extend $\Pi$ with all rules that can be
obtained from a rule in $\Pi$ by consistently identifying
variables and then exhaustively apply the following
rules:
\begin{itemize}

\item replace every rule
  $p(\vect{y}) \leftarrow q_1(\vect{x}_1) \wedge q_2(\vect{x}_2)$
  where $\vect{x}_1$ and $\vect{x}_2$ share exactly one variable $x$
  but both contain also other variables with the rules
$p_1(\vect{y}_1) \vee Q(x) \leftarrow q_1(\vect{x}_1)$ and
  $p_2(\vect{y}_2) \leftarrow Q(x) \wedge q_2(\vect{x}_2)$, where $Q$
  is a fresh monadic IDB relation and $p_i(\vect{y}_i)$ is the
  restriction of $p(\vect{y})$ to atoms that are nullary or contain a
  variable from $q_i$, $i \in \{1,2\}$;

\item replace every rule
  $p(\vect{y}) \leftarrow q_1(\vect{x}_1) \wedge q_2(\vect{x}_2)$
  where $\vect{x}_1$ and $\vect{x}_2$ share no variables and are both
  non-empty with the rules $p_1(\vect{y}_1) \vee Q() \leftarrow
  q_1(\vect{x}_1)$ and $p_2(\vect{y}_2) \leftarrow
  Q() \wedge q_2(\vect{x}_2)$, where
  $Q()$ is a fresh nullary IDB relation and the $p_i(\vect{y}_i)$ are
  as above;

\item replace every rule
  $p(\vect{y}) \leftarrow R(x,\dots,x) \wedge q(\vect{x})$ where $R$
  is an EDB relation and $q$ contains at least one EDB atom and the
  variable $x$, with the rules $Q(x) \leftarrow R(x,\dots,x)$ and
  $p(\vect{y}) \leftarrow Q(x) \wedge q(\vect{x})$, where $Q$ is a
  fresh monadic IDB relation.
  
\end{itemize}
It is easy to see that the program $\Pi'$ is equivalent to the
original program $\Pi$.  We next construct from $\Pi'$ the desired
simplification $\Pi^S$ of $\Pi$. The intuition is that in every rule
of $\Pi'$, we replace all EDB atoms in the rule body by a single EDB
atom that uses a fresh EDB relation which represents the conjunction
of all atoms replaced.  We also need to take care of implications
between the new EDB
relations. 
In the 
following, we make the construction precise. 

For every conjunctive query $q(\vect{x})$ and schema \Sbf, we use
$q(\vect{x})|_\Sbf$ to denote the restriction of $q(\vect{x})$
to $\Sbf$-atoms. The EDB schema $\Sbf'_E$ of $\Pi^S$ consists of
the 
relations $R_{q(\vect{x})|_{\Sbf_E}}$, 
$p(\vect{y}) \leftarrow
q(\vect{x})$ a rule in $\Pi'$; the
arity of $R_{q(\vect{x})|_{\Sbf_E}}$ is the number of variables in
$q(\vect{x})$ (equivalently: in
$q(\vect{x})|_{\Sbf_E}$). 
%
%

Let $\Sbf_I$ be the IDB schema of~$\Pi'$. The program $\Pi^S$ consists
of the following rules:
\begin{itemize}

\item[($*$)] whenever $p(\vect{y}) \leftarrow
q_1(\vect{x}_1)$ is a rule in $\Pi'$, $R_{q_2(\vect{x}_2)}$ an
  EDB relation in $\Sbf'_E$, and $h: \vect{x}_1 \rightarrow
  \vect{x}_2$ an injective homomorphism from
  $q_1(\vect{x}_1)$ to $q_2(\vect{x}_2)$, then $\Pi^S$
  contains the rule $p(\vect{y}) \leftarrow h^{-1}(R_{q_2}(\vect{x}_2)) \wedge
  q_1(\vect{x}_1)|_{\Sbf_I}$


\end{itemize}
%
where $h^{-1}(R_{q_2}(\vect{x}_2))$ denotes the result of replacing in
$R_{q_2}(\vect{x}_2)$ every variable $x$ with $y$ if $x=h(y)$ and
every variable that does not occur in the range of $h$ with a fresh
variable.  The case where $q_1(\vect{x}_1)$ is identical to
$q_2(\vect{x}_2)$ and $h$ is the identity corresponds to adapting
rules in $\Pi'$ to the new EDB signature and the other cases take care
of implications between EDB relations, as announced.

The last step of the conversion just described is the most important
one, and it is the reason for why we can only use instances of a
certain girth in Point~2 of Theorem~\ref{thm:simplify}.  An example
illustarting this step and the issue with girth can be found in the
main part of the paper. We next analyze the size of the constructed 
program $\Pi^S$.
\begin{lemma}
\label{lem:sizelem1}
Let $r$ be the number of rules in $\Pi$ and $s$ the rule size of
$\Pi$. Then there is a polynomial $p$ such that
  \begin{enumerate}

  \item  $|\Pi^S| \leq p(r \cdot 2^s)$; 

   \item the variable width of 
     $\Pi^S$ is bounded by that of $\Pi$;

  \item $|\Sbf'_E| \leq p(r \cdot 2^s)$. 

  \end{enumerate}
\end{lemma}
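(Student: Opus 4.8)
The plan is to track the three stages of the construction in turn---the variable identification, the exhaustive splitting that yields $\Pi'$, and the final $(*)$-step that introduces the fresh EDB relations $R_{q(\vect{x})|_{\Sbf_E}}$---and, at each stage, to bound how the number of rules grows, how rule size and variable width change, and how many fresh EDB relations are created. Write $v \leq s$ for the variable width of $\Pi$. I would dispose of Point~2 first, both because it is the cleanest and because it is used implicitly in the counting: identifying variables can only decrease the number of variables in a body, and each of the three splitting transformations replaces a rule by rules whose variable sets are subsets of the original one (the fresh atoms $Q(x)$ and $Q()$ reuse an existing variable or none), so the variable width of $\Pi'$ is at most $v$.

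For Points~1 and~3 I would count as follows. The variable-identification step replaces each rule by at most one rule per equivalence relation on its (at most $s$) variables, i.e.\ by at most the number of partitions of an $s$-element set, which is single exponential in $s$. The splitting phase is applied exhaustively: each application strictly decreases a suitable measure on the body (its number of EDB atoms, resp.\ of shared variables), so it terminates after $\mathrm{poly}(s)$ steps per rule and multiplies the rule count only by $\mathrm{poly}(s)$ while introducing fresh IDB relations only. Hence the number of rules of $\Pi'$, and therefore $|\Sbf'_E|$ (which has one relation of arity at most $s$ per rule of $\Pi'$), is bounded by $p(r \cdot 2^s)$, giving Point~3. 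For Point~1, each rule of $\Pi^S$ is determined by a triple consisting of a rule of $\Pi'$, a target relation $R_{q_2} \in \Sbf'_E$, and an injective homomorphism $h$ from the current body into $q_2(\vect{x}_2)$; the number of such homomorphisms is bounded by the number of injective maps between variable sets of size at most $s$, again single exponential in $s$. Since each resulting rule has size $O(s)$, we obtain $|\Pi^S| \leq p(|\Pi'| \cdot |\Sbf'_E| \cdot 2^s)$, which composes to a bound of the form $p(r \cdot 2^s)$, establishing Point~1.

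The main obstacle lies not in the gross counting but in the variable-width bound for the $(*)$-step. The new rule $p(\vect{y}) \leftarrow h^{-1}(R_{q_2}(\vect{x}_2)) \wedge q_1(\vect{x}_1)|_{\Sbf_I}$ reintroduces, through $h^{-1}$, a \emph{fresh} variable for every variable of $q_2$ outside the range of $h$, so one must check that these fresh variables do not push the width above $v$. The point is that $h$ is injective with domain the variable set of $q_1$, so its range accounts for exactly $|\mathrm{var}(q_1)|$ of the variables of $q_2$; the remaining $|\mathrm{var}(q_2)| - |\mathrm{var}(q_1)|$ fresh variables, together with the $|\mathrm{var}(q_1)|$ variables that $h^{-1}$ maps back, sum to exactly $|\mathrm{var}(q_2)| \leq v$ distinct variables. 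Thus every $(*)$-rule has width bounded by the variable count of the rule of $\Pi'$ that defined its target relation, hence by $v$, completing Point~2. The only further care needed is to verify that the two exponential blow-ups (partitions, then injective homomorphisms) \emph{telescope} into a single exponential rather than compounding; this is immediate once one observes that both are taken over objects of size bounded by $s$ and enter the final bound only as a product, so that the composition with the polynomial $p$ absorbs them into $p(r \cdot 2^s)$.
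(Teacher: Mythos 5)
Your proof is correct and takes essentially the same route as the paper's: you charge the identification step a factor single-exponential in $s$ (the paper uses $k!\le v!$ where you count partitions), charge the exhaustive splitting a polynomial factor bounded by the number of body atoms, observe that $\Sbf'_E$ contains one relation per rule of $\Pi'$ (Point~3), and count the rules of $\Pi^S$ by triples consisting of a rule of $\Pi'$, a target relation $R_{q_2}$, and an injective homomorphism $h$ (Point~1). If anything you are more careful than the paper, which bounds the number of rules of $\Pi^S$ merely quadratically in the number of rules of $\Pi'$ (silently absorbing the choices of $h$ that you count explicitly) and dismisses Point~2 as ``very easy to verify,'' whereas you give the exact accounting showing that the variables of a $(*)$-rule---the $\mathrm{var}(q_1)$-variables recovered via $h^{-1}$ plus the fresh variables for positions outside the range of $h$---total exactly $|\mathrm{var}(q_2)|$, which is at most the variable width of $\Pi$.
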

\begin{proof}
  We start with Point~1. Let $r$ be the number of rules in $\Pi$, $v$
  the variable width, and $w$ the atom width.  Regarding the size of
  $\Pi^S$, note that the first (identification) step replaces each
  rule of $\Pi$ with at most $k!$ rules, where $k$ is the number of
  variables in the original rule. After this step, we thus have at
  most $r \cdot v!$ rules.  The subsequent rewriting in the
  construction of $\Pi'$ splits each rule into at most one rule per
  atom in the original rule. The number of rules in $\Pi'$ is thus
  bounded by $r \cdot v! \cdot w$. The number of rules in $\Pi^S$ is
  clearly at most quadratic in the number of rules in $\Pi'$, thus
  their number is bounded by $(r \cdot v! \cdot w)^2$. None of the
  steps increases the rule size, i.e., the rule size of $\Pi^S$ is
  bounded by the rule size of $\Pi$. This yields the bound stated in
  Point~1.

  Point~2 is very easy to verify by analyzing the construction of
  $\Pi^S$.

  For Point~3, note that the program $\Pi'$ has the same EDB schema as
  $\Pi$. In the construction of $\Pi^S$, the number of EDB relations
  is bounded by the number of rules in $\Pi'$, thus by $r \cdot v!
  \cdot w$.
\end{proof}
%
%
%
%
%
So far, we have concentrated on a single program. To obtain
Theorem~\ref{thm:simplify}, we have to jointly simplify that two
involved programs $\Pi_1$ and $\Pi_2$. This only means that, when
constructing $\Pi^S_i$ from $\Pi'_i$ in the second step of the
normalization procedure, then we use the set of EDB relations
introduced for \emph{both} $\Pi'_1$ and $\Pi'_2$ instead of only those
for $\Pi'_i$. The bounds in Lemma~\ref{lem:sizelem1} then cleary give
rise to those in Theorem~\ref{thm:simplify}. It remains to show that
the joint simplification $\Pi^S_1,\Pi^S_2$ of $\Pi_1,\Pi_2$ behaves as
expected regarding containment. 
\begin{lemma}
\label{lem:correctnessfirst}
~\\[-5mm]
\begin{enumerate}

\item $\Pi_1 \not\subseteq \Pi_2$ implies $\Pi^S_1 \not\subseteq \Pi^S_2$;
 
\item $\Pi^S_1 \not\subseteq_{>w} \Pi^S_2$ implies $\Pi_1
  \not\subseteq 
\Pi_2$.
 
\end{enumerate}
where $w$ is the atom width of $\Pi_1 \cup \Pi_2$.
\end{lemma}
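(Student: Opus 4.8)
The plan is to move the containment witness back and forth between the two EDB schemas using two translations, and to transfer models of $\Pi_i$ and $\Pi^S_i$ along them. Since $\Pi'_i$ is equivalent to $\Pi_i$ and $\Pi^S_i$ is built from $\Pi'_i$, I work with $\Pi'_i$ in place of $\Pi_i$ throughout. For an $\Sbf_E$-instance $J$, let $\sigma(J)$ be the $\Sbf'_E$-instance containing $R_q(g(\vect{x}))$ for every EDB relation $R_{q(\vect{x})} \in \Sbf'_E$ and every injective homomorphism $g$ from $q|_{\Sbf_E}$ into $J$. Conversely, for an $\Sbf'_E$-instance $I$, let $\tau(I)$ be the $\Sbf_E$-instance obtained by \emph{materialising}, for every fact $R_q(\vect{a}) \in I$, all EDB atoms of $q$ with variables instantiated according to $\vect{a}$. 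Neither translation introduces new constants, and neither touches the IDB relations, so an IDB-extension of $J$ and the matching IDB-extension of $\sigma(J)$ (and likewise for $I$ and $\tau(I)$) assign the same IDB facts to the same constants.

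The technical core is a \emph{rule-firing correspondence} between a rule $p(\vect{y}) \leftarrow q_1(\vect{x}_1)$ of $\Pi'_i$ and the rules of $\Pi^S_i$ derived from it through the clause $(\ast)$ of the construction. A fact $R_{q_3}(\vect{b})$ of an $\Sbf'_E$-instance records an injective homomorphism $g\colon q_3 \to \tau(\cdot)$ into the materialised instance; composing $g$ with the stored injective $h\colon q_1 \to q_3$ yields precisely an embedding of $q_1|_{\Sbf_E}$, so the derived rule $p(\vect{y}) \leftarrow h^{-1}(R_{q_3}(\vect{x}_3)) \wedge q_1(\vect{x}_1)|_{\Sbf_I}$ fires in the abstract instance iff the original rule fires in the materialised one. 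The only place where injectivity needs care is that a \emph{non-injective} match of $q_1|_{\Sbf_E}$ into $J$ must factor through an injective match of a variable-identified variant of the rule; because $\Pi_i$ was closed under all consistent variable identifications \emph{before} the splitting step, such a variant, and hence a suitable relation $R_{q_3}$ and clause $(\ast)$, is always present.

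Using this correspondence I would prove Point~1 with no girth assumption. Let $J$ witness $\Pi_1 \not\subseteq \Pi_2$, i.e.\ $J \models \Pi'_1$ and $J \not\models \Pi'_2$, and put $I := \sigma(J)$. For $I \not\models \Pi^S_2$, take a model $J^\ast \supseteq J$ of $\Pi'_2$ omitting $\mn{goal}$, place its IDB part on top of $I$ to form $I^\ast$, and check via the correspondence that every $\Pi^S_2$-rule holds in $I^\ast$ while $\mn{goal}$ stays absent. For $I \models \Pi^S_1$, take an arbitrary model $I^\ast \supseteq I$ of $\Pi^S_1$, place its IDB part on top of $J$ to form $J^\ast$, verify that $J^\ast \models \Pi'_1$ (here the identification bookkeeping of the previous paragraph is used to reflect every $\Pi'_1$-firing into a $\Pi^S_1$-firing), and conclude that $\mn{goal} \in J^\ast$ since $J \models \Pi'_1$, hence $\mn{goal} \in I^\ast$.

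Point~2 is symmetric but this is where the girth bound enters. Let $I$ witness $\Pi^S_1 \not\subseteq_{>w} \Pi^S_2$, so $I \models \Pi^S_1$, $I \not\models \Pi^S_2$, and $I$ has girth exceeding $w$; put $J := \tau(I)$. The direction $J \models \Pi'_1$ is exactly as in Point~1 and needs no girth. For $J \not\models \Pi'_2$, I start from a model $I^\ast \supseteq I$ of $\Pi^S_2$ omitting $\mn{goal}$, lift its IDB part onto $J$, and must show every $\Pi'_2$-rule $p \leftarrow q_1$ whose body is satisfied in $J^\ast$ is reflected by a firing $\Pi^S_2$-rule in $I^\ast$. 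This is the step that fails at low girth: the embedding of $q_1|_{\Sbf_E}$ into $J$ must lie entirely inside the atoms materialised from a \emph{single} fact $R_{q_3}(\vect{b})$ of $I$, for otherwise it would draw atoms from several facts, and—since $q_1|_{\Sbf_E}$ is biconnected with at most $w$ atoms—gluing those facts at their shared constants would yield a cycle of length at most $w$ in $I$, contradicting girth $>w$. Once confined to one fact $R_{q_3}(\vect{b})$ (whose entries are distinct, again by girth, so the induced $h\colon q_1 \to q_3$ is injective), the derived $\Pi^S_2$-rule fires in $I^\ast$ and its head, present in $I^\ast$, lies in $J^\ast$ as required. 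The main obstacle is precisely this biconnectedness-plus-girth argument ruling out spurious cross-fact embeddings, exactly the phenomenon illustrated by the $R_{q_1}$/$R_{q_2}$ triangle in the main text; the remaining steps are routine model-chasing.
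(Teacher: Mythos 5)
Your skeleton matches the paper's proof (translate the witness instance in both directions, transfer IDB extensions, reduce rule firings to the clause-$(\ast)$ rules, use girth to localize biconnected bodies into single facts), but there is a genuine gap in how you treat \emph{non-injective} homomorphisms, and it infects both points. Your key bookkeeping claim---that a non-injective match of a rule body $q_1$ factors through an injective match of a variable-identified variant, and that ``a suitable relation $R_{q_3}$ and clause $(\ast)$'' is therefore available---is false as stated. Identifying variables in a biconnected body can destroy the invariants of the split program: identifying opposite corners of the $4$-cycle $r(x_1,x_2)\wedge r(x_2,x_3)\wedge r(x_3,x_4)\wedge r(x_4,x_1)$ yields two $2$-cycles glued at a cut vertex, and identifications can also create reflexive atoms $r(x,x)$ alongside other EDB atoms. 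Since the identification closure is applied \emph{before} the splitting step, the identified variant survives in $\Pi'_i$ only as \emph{several} rules chained through the fresh IDB relations introduced by splitting; there is no single relation $R_{q_1\theta}$ in $\Sbf'_E$ and no single clause-$(\ast)$ rule for it. Worse, under your injective-only definition of $\sigma$, every fact $R_q(\vect{b})$ has pairwise distinct entries and every clause-$(\ast)$ body embeds its variables injectively into such a fact, so a firing of a derived rule always induces an \emph{injective} match of $q_1$ back in $J$---hence non-injective firings of $\Pi'_1$-rules in Point~1 can never be reflected by any single derived rule, and your ``$I\models\Pi^S_1$'' direction does not go through as written. (The paper sidesteps this entirely by defining the forward translation with \emph{all} satisfying tuples, not only injective ones, which makes Point~1 immediate.)

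The same issue breaks your Point~2 localization. Your girth argument (``biconnected with at most $w$ atoms, so gluing several facts would create a cycle of length $\leq w$'') is only valid for injective matches: a non-injective match of a biconnected body is an injective match of its quotient $q_1\theta$, which need not be biconnected, and its image can legitimately span several facts of $I$ without creating any short cycle. Concretely, if $I$ contains two repeat-free facts materializing the $2$-cycles $\{a,b\}$ and $\{a,c\}$, the $4$-cycle body above matches via $x_1,x_3\mapsto a$, $x_2\mapsto b$, $x_4\mapsto c$, drawing atoms from two facts that share only the single constant $a$---no cycle of $I$ is induced, so high girth does not confine the match to one fact. This is precisely where the paper invests its real work: it quotients the body by the kernel of $h$, partitions the EDB atoms of the quotient into reflexive atoms and maximal biconnected components (which pairwise share at most one variable and form a tree/forest), observes that $\Pi'_2$ contains, for each component, an associated split rule with fresh IDB relations linking them, applies girth \emph{per component} to localize each into a single fact, and then performs a bottom-up pass over the tree to propagate the fresh IDB atoms until a disjunct of the original head is forced. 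Your proposal asserts the conclusion of this argument while skipping the decomposition and chaining that make it true; to repair it you should either adopt the paper's non-injective forward translation (fixing Point~1) and carry out the component/tree argument for Point~2, or redo the bookkeeping at the level of the pre-split identification closure together with the fresh-IDB chains.
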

\begin{proof}
  It is obvious that the construction of the program $\Pi'_i$ from
  $\Pi_i$ preserves equivalence. We can therefore assume that the
  programs $\Pi_1$ and $\Pi_2$ in Lemma~\ref{lem:correctnessfirst} are
  in fact the programs $\Pi'_1$ and $\Pi'_2$.  For $i \in \{1,2\}$,
  let $\Sbf_{I,i}$ be the IDB-schema of $\Pi_i$ (and thus also
  of~$\Pi^S_i$), and let $\mn{goal}_i$ be the goal relation of
  $\Pi_i$.

  \smallskip

  For Point~1, let $I$ be an $\Sbf_E$-instance such that $I \models
  \Pi_1$ and $I \not\models \Pi_2$. Let $J$ be the $\Sbf'_E$-instance
  that consists of all facts $R_q(a_1,\dots,a_n)$ such that
$I \models
  q[a_1,\dots,a_n]$. It remains to show that $J \models \Pi^S_1$ and
  $J \not\models \Pi^S_2$. 

  For $J \models \Pi^S_1$, assume to the contrary of what is to be
  shown that there is an extension $J'$ of $J$ to schema
  $\Sbf'_E \cup \Sbf_{I,1}$ that satisfies all rules of $\Pi^S_i$ and
  does not contain the $\mn{goal}_1$ relation. Let $I'$ be the
  corresponding extension of $I$, that is, $I'$ extends $I$ with the
  $\Sbf_{I,1}$-facts from $J'$. It suffices to show that $J'$
  satisfies all rules in $\Pi_1$ to obtain a contradiction against
  $I \models \Pi_1$. Thus, let $p(\vect{y}) \leftarrow q(\vect{x})$ be
  a rule in $\Pi_1$ and let $h$ be a homomorphism from $q(\vect{x})$
  to $I'$. Then $\Pi^S_1$ contains the rule
  $p(\vect{y}) \leftarrow R_{q(\vect{x})|_{\Sbf_E}}(\vect{x}) \wedge
  q(\vect{x})|_{\Sbf_{I_1}}$.
  By construction of $J'$ and $I'$, $h$ is also a homomorphism from
  $R_{q(\vect{x})|_{\Sbf_E}}(\vect{x}) \wedge
  q(\vect{x})|_{\Sbf_{I_1}}$
  to $J'$. Since $J'$ satisfies all rules of $\Pi^S_1$, one of the
  disjuncts of $p(\vect{y})$ is satisfied under $h$, as required.

  For $J \not\models \Pi^S_2$, let $I'$ be an extension of $I$ to
  $\Sbf_E \cup \Sbf_{I,2}$ that satisfies all rules of $\Pi_2$ and
  does not contain the $\mn{goal}_2$ relation. Let $J'$ be the
  corresponding extension of $J$. It suffices to show that $J'$
  satisfies all rules of $\Pi^S_2$. Thus, let
  $p(\vect{y}) \leftarrow q(\vect{x})$ be a rule in $\Pi^S_2$ and let
  $h$ be a homomorphism from $q(\vect{x})$ to $J'$. Then there is a
  rule $p(\vect{y}) \leftarrow q'(\vect{x}')$ in $\Pi_2$, a relation
  $R_{q''(\vect{x}'')}$ in $\Sbf'_{E}$, and an injective homomorphism
  $g$ from $q'(\vect{x}')$ to $ q''(\vect{x}'')$ such that
  $q(\vect{x})=g^{-1}(R_{q''(\vect{x}'')}(\vect{x}'')) \wedge
  q'(\vect{x}')|_{\Sbf_{I_2}}$.
  By construction of $J$ and $J'$, $h$ is also a homomorphism from
  $q'(\vect{x}')$ to~$I$.  Since $I'$ satisfies
  $ p(\vect{y}) \leftarrow q'(\vect{x}')$, one of the disjuncts of
  $p(\vect{y})$ is satisfied under~$h$, as required.
  
    \smallskip

    Now for Point~2. Let $I$ be an $\Sbf_E'$-instance of girth
    exceeding $w$ such that $I \models \Pi^S_1$ and $I\not\models
    \Pi^S_2$. Let $J$ be the $\Sbf_E$-instance that consists of all
    facts $r(a_{i_1},\dots,a_{i_k})$ such that for some fact
    $R_{q(x_1,\dots,x_n)}(a_1,\dots,a_n)$, we have
    $r(x_{i_1},\dots,x_{i_k}) \in q$. We show that $J \models \Pi_1$
    and $J \not\models \Pi_2$.

    For $J \models \Pi_1$, assume to the contrary of what is to be
    shown that there is an extension $J'$ of $J$ to schema
    $\Sbf_E \cup \Sbf_{I,1}$ such that all rules of $\Pi_1$ are
    satisfied and $\mn{goal}_1 \notin J'$. Let $I'$ be the
    corresponding extension of $I$ to $\Sbf'_E \cup \Sbf_{I,1}$. It
    suffices to show that all rules of $\Pi^S_1$ are satisfied in $I'$
    to obtain a contradiction against $I \models \Pi^S_1$. Thus, let
    $p(\vect{y}) \leftarrow q(\vect{x})$ be a rule in $\Pi^S_1$ and let
    $h$ be a homomorphism from $q(\vect{x})$ to $I'$. Then there is a
    rule $p(\vect{y}) \leftarrow q'(\vect{x}')$ in $\Pi_1$, a
    relation $R_{q''(\vect{x}'')}$ in $\Sbf'_{E}$, and an injective
    homomorphism $g$ from $q'(\vect{x}')$ to $ q''(\vect{x}'')$ such
    that
    $q(\vect{x})=g^{-1}(R_{q''(\vect{x}'')}(\vect{x}'')) \wedge
    q'(\vect{x}')|_{\Sbf_{I_2}}$.
    By construction of $J$ and $I'$, $h$ is also a homomorphism from
    $q'(\vect{x}')$ to~$I$. Since $J'$ satisfies
    $ p(\vect{y}) \leftarrow q'(\vect{x}')$, one of the disjuncts of
    $p(\vect{y})$ is satisfied under~$h$, as required.

    For $J \not\models \Pi_2$, let $I'$ be an extension of $I$ to
    $\Sbf'_E \cup \Sbf_{I,2}$ that satisfies all rules of $\Pi^S_2$
    and does not contain the $\mn{goal}_2$ relation.  Let $J'$ be the
    corresponding extension of $J$. It suffices to show that $J'$
    satisfies all rules of~$\Pi_2$.  Thus, let $ p(\vect{y})
    \leftarrow q(\vect{x})$ be a rule in $\Pi_2$ and let $h$ be a
    homomorphism from $q(\vect{x})$ to $J'$.  Let the query
    $q'(\vect{x}')$ be obtained from $q(\vect{x})$ by identifying all
    variables that $h$ maps to the same target. For simplicity, let us
    assume first that $q'$ is connected.

Partition the EDB atoms of $q'(\vect{x}')$ into components
    as follows: every reflexive atom $r(x,\dots,x)$ forms a component
    and every maximal biconnected set of non-reflexive atoms forms a
    component.  Let $q_1(\vect{x}_1),\dots,q_n(\vect{x}_n)$ be the
    components obtained in this way, enriched with IDB atoms in the
    following way: if $P(x)$ is in $q(\vect{x})$ with $P$ IDB and $x$
    occurs in $q_i(\vect{x}_i)$, then $q_i(\vect{x}_i)$ contains
    $P(x)$. It can be verified that distinct components share at most
    one variable and that the undirected graph obtained by taking the
    non-reflexive components as nodes and putting edges between
    components that share a variable is a tree. For ech tree in the
    tree, choose a component that is the root to turn the undirected
    tree into a directed one, allowing us to speak about successors,
    predecessors, etc. Slightly extend the tree by adding each
    reflexive component as a leaf below some node that contains the
    variable in the reflexive component; if there is no such node, the
    component forms an extra tree.

    For every component $q_i(\vect{x}_i)$, $\Pi_2$ contains an
    associated rule. Recall that we assume $\Pi_2$ to be the result of
    the first step of the construction of $\Pi_2^S$. If the
    $q_i(\vect{x}_i)$ is a leaf in the tree, then the associated rule
    takes the form
    $p_i(\vect{y}_i) \vee Q_i(x_i) \leftarrow q_i(\vect{x}_i)$ where
    $p_i(\vect{y}_i)$ is the restriction of $p(\vect{y})$ to atoms
    that are nullary or contain a variable from $\vect{x}_i$, $Q_i$ is
    a fresh unary relation, and $x_i$ is the variable that the
    component shares with the component which is its predecessor in
    the tree. For non-leafs, the rule body is additionally enriched
    with atoms $Q_j(x)$ where $Q_j$ is a fresh IDB introduced for a
    successor node. For the root node, no fresh IDB relation is
    introduced.
      
    We now make a bottom-up pass over the tree as follows. Consider
    the rule $p'_i(\vect{y}_i) \leftarrow q'_i(\vect{x}_i)$ associated
    with the current node. The homomorphism $h$ from above is also a
    homomorphism from $q'_i(\vect{x}_i)$ to $J'$; this is clear for
    leaf nodes and can inductively be verified for inner nodes. Take
    the corresponding rule $\rho$ in $\Pi_2^S$, the one that
    introduces a new EDB relation for the EDB atoms in
    $q'_i(\vect{x}_i)$. By construction of $J'$, since
    $q'_i(\vect{x}_i)$ is biconnected without reflexive loops or a
    single reflexive loop and because the girth of $I'$ is higher than
    that of $q'_i(\vect{x}_i)$, the $h$-image of all EDB-atoms in
    $q'_i(\vect{x}_i)$ must have been derived from a single fact
    in~$I$. There is another rule $\rho'$ in $\Pi_2^S$ in which the
    EDB-relation in the body of $\rho$ is replaced with the relation
    from that fact. This rule applies in $I'$ and thus one of the
    atoms from its head is true. If this is an atom from
    $p(\vect{x})$, we are done with the entire proof. If this is a
    fresh IDB atom, we are done with this tree node and can continue
    in our bottom-up tree walk. We will be done at the root at latest
    since the associated rule contains no fresh IDB relation.

    This finishes the case where the query $q'(\xbf')$ is connected.
    the general case, the tree has to be replaced by a forest. There
    will be exactly one tree in the forest in which the rule
    associated with the root does not have a fresh IDB relation in the
    head. For those trees where the root rule has a fresh such
    relation, we find another tree where that relation occurs in the
    body of the rule associated with a leaf. This defines an order on
    the trees, which is acyclic. We process the trees in this order,
    each single tree essentially as in the connected case.
\end{proof}

\subsubsection*{From Containment to Relativized Emptiness}

\begin{lemma}
  For any $g > 0$, 
  \begin{enumerate}

  \item if $\Pi_1 \not\subseteq \Pi_2$, then $\Pi$ is non-empty
    w.r.t.\ $D$.

  \item if $\Pi$ is non-empty w.r.t.\ $D$ on instances of girth $>g$,
    for any $g > 0$, then $\Pi_1 \not\subseteq_{>g} \Pi_2$.

  \end{enumerate}
\end{lemma}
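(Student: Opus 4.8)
The guiding idea is to read an $\Sbf'_E$-instance $I'$ as coding \emph{both} a candidate counterexample $I$ (its $\Sbf_E$-reduct) \emph{and} a guessed expansion of $I$ to a model of $\Pi_2$ that omits $\mn{goal}_2$: the atom $P(a)$ records $a\in P$ and $\overline{P}(a)$ records $a\notin P$, for $P\in\Sbf_{I,2}$. Under this reading $D$ forces the guess to be consistent, Step~1 runs $\Pi_1$ over every guess, and Step~2 deletes precisely those runs whose guess already falsifies a \emph{single} rule of $\Pi_2$. The plan is to prove the two implications separately; the forward one is routine and needs no girth, while the backward one carries the real content.

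For Point~1, I would start from a genuine witness: an $\Sbf_E$-instance $I$ with $I\models\Pi_1$, together with a model $J_2\supseteq I$ of $\Pi_2$ with $\mn{goal}_2\notin J_2$ witnessing $I\not\models\Pi_2$. Define $I'$ by adding to $I$ the complete annotation induced by $J_2$ (add $P(a)$ if $a\in P^{J_2}$, else $\overline{P}(a)$, and $\overline{\mn{goal}_2}()$). Then $I'\models D$ by construction. To see $I'\models\Pi$, suppose some expansion $K\supseteq I'$ to the IDBs of $\Pi_1$ satisfies $\Pi$ but omits $\mn{goal}_1$, and let $J_1$ be its $\Sbf_E\cup\Sbf_{I,1}$-reduct. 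For any rule $\rho_1\in\Pi_1$ and homomorphism $h$ of its body into $J_1$, the annotated copy $\rho$ of $\rho_1$ whose added atoms are the $J_2$-types of the $h$-images \emph{survives} Step~2: its removal would exhibit a rule of $\Pi_2$ violated by the annotation, contradicting $J_2\models\Pi_2$. Since $h$ is also a homomorphism of the body of $\rho$ into $K$ and $K\models\rho$, a disjunct of $\rho_1$'s head holds; hence $J_1\models\Pi_1$ with $\mn{goal}_1\notin J_1$, contradicting $I\models\Pi_1$.

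For Point~2, take $I'$ of girth exceeding $g$ with $I'\models D$ and $I'\models\Pi$, and let $I$ be its $\Sbf_E$-reduct; as only facts of arity $\ge 2$ contribute to cycles and these lie in $\Sbf_E$, $I$ again has girth $>g$. I would first note that \emph{every} rule of $\Pi$ contains $\overline{\mn{goal}_2}()$ in its body, so $I'\models\Pi$ forces $\overline{\mn{goal}_2}()\in I'$, whence $\mn{goal}_2()\notin I'$ by $D$. The implication $I\models\Pi_1$ is then the clean mirror of Point~1: given a no-$\mn{goal}_1$ expansion $J_1\supseteq I$ of $\Pi_1$, the instance $K=I'\cup(J_1\setminus I)$ satisfies every rule of $\Pi$ (each projects to a rule of $\Pi_1$ satisfied by $J_1$) and omits $\mn{goal}_1$, contradicting $I'\models\Pi$. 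It remains to produce a model of $\Pi_2$ omitting $\mn{goal}_2$ over a witness instance.

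The hard part, and where girth is essential, is this last step, because the annotation of $I'$ need \emph{not} be a global model of $\Pi_2$: Step~2 only removes rules that are \emph{locally} inconsistent within one $\Pi_1$-body, so an $\Sbf_E$-fact may trigger a $\Pi_2$-rule whose conclusion the annotation declares false while $I'\models\Pi$ still holds (the offending annotated $\Pi_1$-rule having been deleted). My plan is therefore to repair the annotation: let $\hat I\subseteq I$ drop every $\Sbf_E$-fact that, together with the recorded $1$-types, triggers a simple $\Pi_2$-rule all of whose head relations are annotated false, and complete the remaining annotation on unannotated relations. The key lemma is that a dropped fact can never be used by a surviving rule of $\Pi$ firing at $I'$ (any such firing would force exactly the annotation pattern that Step~2 removes), so $\hat I\models\Pi_1$ is preserved; and reading the positive annotations over $\hat I$ then yields a genuine $\Pi_2$-model $\hat J_2$ with $\mn{goal}_2\notin\hat J_2$, giving $\hat I\not\models\Pi_2$ and hence $\Pi_1\not\subseteq_{>g}\Pi_2$ (note $\hat I\subseteq I$ has girth $>g$). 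Here girth $>g\ge 1$, in particular excluding girth one (cf.\ the remark after the theorem), guarantees that the unique EDB atom of each simple $\Pi_2$-rule cannot be matched by identifying its distinct variables, so that rule firing is determined unambiguously by the $1$-types the annotation records. I expect controlling this repair---showing it removes all $\Pi_2$-violations and introduces none while preserving $\models\Pi_1$---to be the main technical obstacle.
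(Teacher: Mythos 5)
Your proposal follows essentially the same route as the paper's proof: Point~1 by annotating a witness with a goal-free $\Pi_2$-model and observing that the faithfully annotated rules survive the second (removal) step of the construction of $\Pi$, and Point~2 by taking the $\Sbf_E$-reduct and repairing it through deletion of the EDB facts that trigger annotated $\Pi_2$-violations, with girth $>1$ (no fact containing a repeated constant) forcing matches of the all-variables EDB atoms to be injective, so that any surviving rule of $\Pi$ whose firing uses a deleted fact would already have been removed in step~2. The only deviations are cosmetic---you delete all offending facts in one pass and complete the missing annotation, whereas the paper deletes them one at a time (re-establishing $I \models \Pi$ after each deletion) and restricts the instance to fully annotated elements---and the ``key lemma'' you defer is precisely the argument the paper carries out.
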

\begin{proof}
  For Point~1, assume $\Pi_1 \not\subseteq \Pi_2$. Then there is an
  $\Sbf_E$-instance $I$ such that $I \models \Pi_1$ and $I \not
  \models \Pi_2$. Let $J$ be an extension of $I$ to signature $\Sbf_E
  \cup \Sbf_{I,2}$ such that all rules in $\Pi_2$ are satisfied and
  $\mn{goal}_2() \notin J$. Add $\overline{P}(a)$ to $J$ if $P(a)
  \notin J$ for all unary $P \in \Sbf_{I,2}$ and $a \in \mn{adom}(J)$,
  and add $\overline{P}()$ to $J$ if $P() \notin J$ for all nullary $P
  \in \Sbf_{I,2}$. Clearly, (the extended) $J$ is over schema
  $\Sbf'_E$ and satisfies $D$. To show that $\Pi$ is non-empty w.r.t.\
  $D$, it thus remains to argue that $J \models \Pi$.

  Assume to the contrary that this is not the case.  Then there is an
  extension $J'$ of $J$ to signature $\Sbf'_E \cup \Sbf_{I,1}$ such
  that all rules of $\Pi$ are satisfied, but
  $\mn{goal}_1() \notin J'$. Since the restriction of $J'$ to
  $\Sbf_E$-facts is $I$, it remains to argue that $J'$ satisfies all
  rules in $\Pi_1$. Let $\rho p(\vect{y}) \leftarrow = q(\vect{x})$ be
  such a rule and let $h$ be a homomorphism from $q$ to~$J'$. We
  define an extension $q'$ of $q$ as follows: conjunctively add to $q$
  all $P() \in J \cap \Sbf_{I,2}$ and all $\overline{P}() \in J$ with
  $P \in \Sbf_{I,2}$; moreover, for all variables $x$ in~$q$ and all
  facts $P(h(x)) \in J$ with $P \in \Sbf_{I,2}$, conjunctively add the
  atom $P(x)$ to $q$, and likewise for facts $\overline{P}(h(x))$ and
  atoms $\overline{P}(x)$.  Since $J'$ satisfies all rules in $\Pi_2$,
  it can be verified that the rule
  $p(\vect{y}) \leftarrow q'(\vect{x})$, which is added to $\Pi$ in
  the first step of its construction, is not removed in the second
  step of the construction. Since $h$ is a homomorphism from $q'$ to
  $J'$ and $J'$ satisfies all rules in $\Pi$,
  $J' \models p[h(\vect{y})]$ and thus $J'$ satifies $\rho$ as
  required.
  
\smallskip

Now for Point~2. Assume that $\Pi$ is non-empty w.r.t.\ $D$ on
instances of girth $>g$, with $g>0$. Then there is an
$\Sbf'_E$-instance $I$ of girth $>g$ with $I \models \Pi$ and $I
\models D$. By construction of $\Pi$, this implies that (i)~$P() \in
I$ or $\overline{P}() \in I$ for every nullary $P \in \Sbf_{I,2}$
(otherwise no rule of $\Pi$ would be applicable, implying $I
\not\models \Pi$).  Moreover, we can assume w.l.o.g.\ that (ii)~every
$a \in \mn{adom}(I)$ satisfies $P(a) \in I$ or $\overline{P}(a) \in I$
for every unary $P \in \Sbf_{I,2}$; in fact, it this is not the case,
then we can simply replace $I$ with its restriction to those elements
$a$ that satisfy the condition (elements which do not can never be
involved in rule applications). Let $J$ be the restriction of $I$ to
schema $\Sbf_E$. By construction of $\Pi$ and due to Conditions~(i)
and~(ii), $I \models \Pi$ implies $J \models \Pi_1$. To finish the
proof, it would this be sufficient to show that $I$ witnesses $J \not
\models \Pi_2$. This, however, need not be the case: while we know
that $\mn{goal}_2() \in I$ as otherwise no rule of $\Pi$ would be
applicable, it need not be the case that all rules in $\Pi_2$ are
satisfied in $I$. We thus show how to first manipulate $I$ such that
$I \models \Pi$ still holds, $I$ does still not contain
$\mn{goal}_2()$, and $I$ satisfies all rules in $\Pi_2$. In fact,
we exhaustively apply the following.

Assume that there is a rule $\rho$ in $\Pi_2$ that is not satisfied in
$I$.  Since $\Pi_2$ is simple, $\rho$ has the form
$Q_1(y_1) \vee \cdots \vee
Q_\ell(y_\ell) \leftarrow
A(\vect{x}) \wedge P_{1}(\vect{x}_{1}) \wedge \cdots \wedge
P_{k}(\vect{x}_{k})$ with $A$ EDB and all $P_i$ and $Q_i$ IDB. Let $h$ be a
homomorphism from the rule body to $I$ such that $Q_i(h(y_i)) \notin
I$ for $1 \leq i \leq \ell$.  We modify $I$ by removing the fact
$A(h(\vect{x}))$, resulting in instance $I^-$. Clearly, the
application of $\rho$ via $h$ is no longer possible and it remains to
show that $I^- \models \Pi$. Assume that this is not the case, that
is, there is an extension $J^-$ of $I^-$ to schema $\Sbf_E' \cup
\Sbf_{I,1}$ such that $\mn{goal}_1() \notin J^-$ and $J^-$ satisfies
all rules of $\Pi$. Let $J$ be the corresponding extension of $I$,
that is, $J$ and $J^-$ differ only in the presence of the fact
$A(h(\vect{x}))$. We show that $J$ satisfies all rules of $\Pi$,
contradicting $I \models \Pi$. Clearly, we need to consider only rules
$\rho'$ whose only $\Sbf_E$-atom is of the form $A(\vect{x}')$ and
only homomorphisms $h'$ from the body of $\rho'$ to $J$ such that
$h(\vect{x}')=h(\vect{x})$. Fix such a $\rho'$ and $h'$. Since
$I$ and thus also $J$ has girth $>1$ and since $\vect{x}'$
contains all variables from the body of $\rho'$, $h'$ must be
injective.  By definition of $\Pi$ and because of the homomorphism
$h'$, we must have
\begin{itemize}

\item $P() \in J$ (resp.\ $\overline{P}() \in J$) implies that the
  body of $\rho'$ has a conjunct $P()$ (resp.\ $\overline{P}$) for
  every nullary $P \in \Sbf_{I,2}$;

\item $P(h'(x)) \in J$ (resp.\ $\overline{P}(h'(x)) \in J$) implies
  that the body of $\rho'$ has a conjunct $P(x)$ (resp.\
  $\overline{P}(x)$) for unary $P \in \Sbf_{I,2}$ and all variables
  $x$ from $\vect{x}'$.
 
\end{itemize}
Because of the homomorphism $h$ and since $Q_i(h(y_i)) \notin I$ for
$1 \leq i \leq \ell$, the rule $\rho \in \Pi_2$ and the variable
substitution $h \circ {h'}^{-1}$ mean that the rule $\rho'$ was
removed during the second step of the construction of $\Pi$,
in contradiction to $\rho' \in \Pi$. 
\end{proof}

\subsubsection*{Deciding Relativized Emptiness}

\smallskip
\noindent
\LEMKthetaAreEnough*

\noindent
\begin{proof}
  Clearly, $K_\theta \models \Pi$ means that $K_\theta$ is a witness
  for $\Pi$ being non-empty w.r.t.\ $D$. Conversely, assume that there
  is an $\Sbf_E$-instance $I$ with $I \models D$ and $I \models
  \Pi$. Let $\theta$ be the 0-type of $I$. Then the mapping $h$
  defined by setting $h(a)=t_a$ for all constants $a$ in $I$ is a
  homomorphism from $I$ to $K_\theta$. It is well-known (and can be
  proved using a disjunctive version of the chase procedure) that
  truth of MDDLog queries is preserved under homomorphisms, thus $I
  \models \Pi$ implies $K_\theta \models \Pi$.
\end{proof}

\noindent
\LEMruntime*

\noindent
\begin{proof}
  Let us first analyze the time that it takes to check whether
  $K_\theta \not\models \Pi$, for one instance $K_\theta$.  The number
  of elements in $K_\theta$ is bounded by $2^{|D|}$. Note that the
  interpretation of the relations in $\Sbf_E \setminus \Sbf_D$ is
  trivial, and thus we do not need to explicitly construct these
  relations when building~$K_\theta$. Thus, $2^{\Omc(|D|)}$ is a bound
  on the number of facts in (the constructed part of) $K_\theta$ and
  on the time needed to build it. The number of guesses to be taken
  when constructing $K'_\theta$ is bounded by $2^{|D|} \cdot |\Sbf_I|$
  where $\Sbf_I$ is the IDB schema of $\Pi$, thus by
  $2^{|D|} \cdot |\Pi|$.  For each rule in $\Pi$, the number of
  candidate functions for homomorphisms from the rule body to
  $K'_\theta$ is bounded by $2^{|D|\cdot v}$ and it can be checked in
  time $\Omc(|\Pi| \cdot (|D|+|\Sbf_I|))$ whether a candidate is a
  homomorphism.  We thus need time
  $\Omc(|\Pi|^2) \cdot 2^{\Omc(|D|\cdot v)}$ per rule and time
  $\Omc(|\Pi|^3) \cdot 2^{\Omc(|D|\cdot v)}$ overall to deal with the
  instance $K_\theta$. There are at most $2^{|D|}$ many instances
  $K_\theta$, a factor that is absorbed by the bound that we have
  already computed.
\end{proof}

The next lemma is a main ingredient to the proof of
Lemma~\ref{thm:emptinessGirth}. Intuitively, it shows that a
semi-simple MDDLog programs with disjointness constraints can be
understood as a constraint satisfaction problem (CSP).  For two
instances $I$ and $J$, we write $I \rightarrow J$ if there is a
homomorphism from $I$ to $J$, which is defined in the standard way. Of
course, homomorphisms also have to respect nullary relations. In the
following, we call a finite instance a \emph{template} when we use it
as a homomorphism target, as in the CSP literature.
\begin{restatable}{lemma}{LEMNFtoCSP}\label{lem:NFtoCSP}
%
%
  For every 0-type $\theta$, there are templates $T_0,\dots,T_n$ in
  signature $\Sbf_{E}$ such that, for every $\Sbf_E$-instance $I$ of
  some 0-type $\theta' \subseteq \theta$, we have $I \not\models \Pi$
  iff $I \rightarrow T_i$ for some $i \leq n$.
\end{restatable}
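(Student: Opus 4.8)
The plan is to realize the semi-simple program $\Pi$ together with $D$ as a finite family of homomorphism targets, one for each admissible way of fixing the truth of the nullary relations. Write $\Sbf_D$ for the relations occurring in $D$ (all nullary or unary). Since $\Pi$ is semi-simple \wrt $D$, every rule has one of two shapes: either it contains a single EDB atom $R(x_1,\dots,x_k)$ with $R\in\Sbf_E\setminus\Sbf_D$ that lists all body variables, each exactly once, decorated only by unary atoms over $\Sbf_D$ and over the IDB schema $\Sbf_I$ and by nullary atoms; or it contains no such atom, in which case it has at most one variable. Call $\tau=(s,u)$ a \emph{full type} if $s$ is a $D$-consistent set of unary $\Sbf_D$-relations (a $1$-type) and $u$ a set of unary IDB relations. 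A \emph{global assignment} $g$ fixes the truth of every nullary relation of $\Pi$ so that its nullary $\Sbf_D$-part is a $0$-type contained in $\theta$, $\mn{goal}$ is false, and every variable-free rule of $\Pi$ is satisfied. There are finitely many global assignments, and these index the templates $T_0,\dots,T_n$.

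For a fixed $g$ I would build $T_g$ as follows. Its elements are the full types $\tau=(s,u)$ that cause no defect in any single-variable rule, that is, whenever such a rule has its body made true by $\tau$ and by the nullary values of $g$, some head disjunct already holds at $\tau$ or in $g$. Its facts are $P(\tau)$ for every $\tau$ and every unary $P\in s$; every nullary EDB fact made true by $g$; and, for each $R\in\Sbf_E\setminus\Sbf_D$ of arity $k$, the fact $R(\tau_1,\dots,\tau_k)$ for every tuple such that no rule with EDB atom $R$ is violated, i.e.\ whenever the unary and nullary body atoms of such a rule hold at $(\tau_1,\dots,\tau_k)$ and in $g$, one of its head disjuncts holds. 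The simple shape of the rules is exactly what makes this work: every homomorphism from a rule body into an IDB-extension of an instance factors through a single fact (for the rules with an EDB atom) or a single element (for the rest), so that satisfying all rules is equivalent to a purely local condition on facts and elements, which is what $T_g$ records. A disjunctive head such as $P(x)\vee Q(y)$ is absorbed into the \emph{choice} of element types, so it forces no branching over templates; branching is forced only by the global nullary choices, whence a finite family suffices.

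I would then verify both directions for instances with $I\models D$ (the only case relevant to relativized emptiness; an instance violating a unary disjointness constraint has an element realizing no template $1$-type and hence maps to no $T_i$). For $I\rightarrow T_g$, I would read off the IDB-extension $J$ of $I$ giving each constant $a$ the IDB relations in the second component of $h(a)$ and making true exactly the nullary IDB relations of $g$; element- and $R$-validity together with the factorization show that $J$ satisfies every rule and omits $\mn{goal}$, so $I\not\models\Pi$. Conversely, if $I\not\models\Pi$, I would fix a witnessing extension $J$, let $g$ record the nullary EDB facts of $I$ and the nullary IDB facts of $J$ (with $\mn{goal}$ false), and map each $a$ to the full type it realizes in $J$; since $J$ satisfies all rules, every image element and image $R$-fact is valid in $T_g$ and all $\Sbf_E$-facts are preserved, so $h$ is a homomorphism into a template of the family, and $g$ is admissible because $J$ satisfies the variable-free rules and the $\Sbf_D$-part of $g$ lies in $\theta$.

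The main obstacle, and the reason the statement lists several templates rather than giving a single CSP, is disentangling the head disjunctions. The two points needing the most care are: first, that unary head disjunctions can be shifted into the per-element choice of type, so they never demand distinct templates, whereas the genuine branching comes solely from the finitely many globally-fixed nullary truth values; and second, that the ``all variables, each exactly once'' form of the EDB atom is precisely what guarantees that rule-body homomorphisms factor through single facts, making the edge relation of $T_g$ a faithful encoding of the multi-variable rules. Checking this factorization, and checking that enlarging the nullary part of $g$ can only delete facts from $T_g$ (so that soundness survives even when $g$ is not matched exactly to $I$), are the steps I expect to be the most delicate.
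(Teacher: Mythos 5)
Your proposal is correct and takes essentially the same route as the paper's own proof: templates indexed by the finitely many admissible assignments to the nullary relations (the paper's ``0-types for $\Sbf_I\cup\Sbf_D$''), with elements the unary types and an $R$-fact $R(t_1,\dots,t_k)$ included exactly when no rule with that EDB atom is locally violated, the ``all variables, each exactly once'' shape of semi-simple rules guaranteeing that rule-body homomorphisms factor through a single fact, and both directions verified by translating between homomorphisms and IDB-extensions exactly as you describe (your handling of nullary parts \emph{contained in} $\theta$ is, if anything, slightly more careful than the paper's phrasing). The one small deviation is that you build $D$-consistency into the element types and so prove the biconditional only for instances satisfying $D$, whereas the paper's types ignore $D$ and the lemma holds verbatim for all instances of 0-type $\theta'\subseteq\theta$; as you note, this restriction is harmless for the intended application to relativized emptiness, since the relevant instances $K_\theta$ and their high-girth replacements all satisfy $D$.
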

\begin{proof}
  Just like the construction of the instances $K_\theta$, the
  construction of the templates $T_0,\dots,T_n$ is based on types.
  However, the types used for this purpose are formed over the schema
  $\Sbf=\Sbf_{I} \cup \Sbf_{D}$ instead of over the schema $\Sbf_D$,
  where $\Sbf_I$ is the IDB schema of~$\Pi$. To emphasize the
  difference between the two kinds of types, we from now on call the
  types introduced above \emph{types for} $\Sbf_D$.

  A \emph{type for~$\Sbf$} is a set $t \subseteq \Sbf$ that satisfies
  all rules in~$\Pi$, that is, if a rule $\rho$ mentions only nullary
  and unary relations (and thus involves at most a single variable
  since $\Pi$ is semi-simple w.r.t.\ $D$) and all these relations are
  in $t$, then at least one of the relations from the head of $\rho$
  is in $t$. Note that, in contrast to 0-types for $\Sbf_D$ and
  1-types for $\Sbf_D$ defined before, a type for \Sbf contains both
  unary and nullary relation symbols.  The restriction $\delta$ of a
  type $t$ for $\Sbf$ to nullary relations is a \emph{0-type
    for~$\Sbf$}. There is no need to define 1-types for \Sbf. We say
  that a type $t$ for $\Sbf$ is \emph{compatible} with a 0-type
  $\delta$ for $\Sbf$ if $\delta$ is the restriction of $t$ to nullary
  relations.

  Let $\theta$ be a 0-type for $D$. To construct the set of templates
  stipulated in Lemma~\ref{lem:NFtoCSP}, we define one template
  $T_\delta$ for every 0-type $\delta$ for $\Sbf$ that agrees with
  $\theta$ on relations from $\Sbf_D$ and does not contain the
  \mn{goal} relation. The elements of $T_\delta$ are the types for
  $\Sbf$ that are compatible with $\delta$ and $T_\delta$ consists of
  the following facts:
  \begin{enumerate}

  \item all facts $R(t_1,\dots,t_n)$ such that $R \in \Sbf_E \setminus
    \Sbf_D$ is of arity $n$ and there is no rule $\rho$ in $\Pi$ and
    variables $x_1,\dots,x_n$ such that the following conditions are
    satisfied:
  \begin{itemize}

  \item $R(x_1,\dots,x_n)$ occurs in the body of $\rho$;

  \item if $P(x_i)$ or $P()$ occurs in the body of $\rho$ with $P \in \Sbf$,
    then $P \in t_i$;

  \item for none of the disjuncts $P(x_i)$ in the head of $\rho$, we 
    have $P \in t_i$;

  \item for none of the disjuncts $P()$ in the head of $\rho$, we have
    $P \in \delta$;

  \end{itemize}

%
%
%
%
  %
  %

\item all facts $P(t)$ with $P \in t \cap \Sbf_D$;

\item all facts $P()$ with $P \in \delta \cap \Sbf_D$.

  \end{enumerate}
  We have to show that the templates $T_\delta$ are as required, that
  is, for every $\Sbf_E$-instance $I$ that has some 0-type $\theta'
  \subseteq \theta$ for $\Sbf_D$, we have $I \not\models \Pi$ iff $I
  \rightarrow T_\delta$ for some 0-type $\delta$ for $\Sbf$.

  \smallskip ``if''. Assume that $I$ is an $\Sbf_E$-instance of 0-type
  $\theta' \subseteq \theta$ for $\Sbf_D$ and that there is a
  homomorphism $h$ from $I$ to $T_\delta$. Extend $I$ to an $\Sbf_E
  \cup \Sbf_{I}$-instance $J$ by adding $P()$ whenever $P() \in \delta
  \cap \Sbf_I$ and $P(a)$ whenever $P \in h(a) \cap \Sbf_{I}$. Since
  the goal relation is not in $\delta$, it is not in~$J$; it thus
  remains to show that every rule $\rho$ of $\Pi$ is satisfied in
  $J$. First for rules $\rho$ that contain a body atom whose relation
  is from $\Sbf_E \setminus \Sbf_D$.  Since $\Pi$ is semi-simple
  w.r.t.~$D$, the body of $\rho$ consists of one atom
  $R(x_1,\dots,x_n)$ with $R \in \Sbf_E \setminus \Sbf_D$ plus atoms
  of the form $P(x_i)$ and $P()$ with $P \in \Sbf$. Assume that $g$ is
  a homomorphism from the body of $\rho$ to $J$. Then we have
  $R(g(x_1),\dots,g(x_n)) \in T_\delta$.  Moreover,
\begin{enumerate}

\item when $P(x_i)$ is in the body of $\rho$ and $P \in \Sbf$, then
  $P \in g(x_i)$ and 

\item when $P()$ is in the body of $\rho$ and $P \in \Sbf$, then
  $P \in g(x_i)$ for $1 \leq i \leq n$.

\end{enumerate}
In fact, Point~1 follows from Point~2 of the construction of
$T_\delta$ when $P \in \Sbf_D$ and from the definition of $J$ when
$P \in \Sbf_I$. Point~2 similarly follows from Point~3 of the
construction of $T_\delta$. With Points~1 and~2 above, Condition~1
from the construction of $T_\delta$ yields that the head of $\rho$
contains an atom $P(x_i)$ with $P \in g(x_i)$ or $P()$ with $P \in J$.
Thus, $\rho$ is satisfied in $J$. For rules $\rho$ that contain no
atom with a relation from $\Sbf_E \setminus \Sbf_D$, one can first
observe that, since $\Pi$ is semi-simple w.r.t.\ $D$, there is at 
most one variable in the rule. We can now argue in a similar way as
before that the rule is satisfied in $J$, using in particular the fact that
types for \Sbf satisfy the rules in $\Pi$.
  
\smallskip ``only if''. Assume that $I$ is an $\Sbf_E$-instance of
0-type $\theta' \subseteq \theta$ for $\Sbf_D$ and that $I \not\models \Pi$. Then
there is an extension $J$ of $I$ to signature $\Sbf_E \cup \Sbf_{I}$
that satisfies rules of $\Pi$ and does not contain the \mn{goal}
relation. For every constant $a$ in $I$, let $t_a$ be the set of all
unary relations $P \in \Sbf$ such that $P(a) \in J$ and of all nullary
relations $P \in \Sbf$ such that $P() \in J$. Clearly, $t_a$ is a
type for $\Sbf$. Set $h(a)=t_a$ for all constants $a$ in
$I$. It remains to verify that $h$ is a homomorphism from
$I$ to~$T_\delta$.

First consider facts $P(a) \in I$ with $P \in \Sbf_D$.  Then $P
\in t_a$, thus $P \in h(a)$, thus $P(a) \in T_\delta$ by definition of
$T_\delta$.

Now consider facts $R(a_1,\dots,a_n) \in I$ with $R \in \Sbf_E
\setminus \Sbf_D$. Assume to the contrary of what is to be shown that
$R(h(a_1),\dots,h(a_n)) \notin T_\delta$. By definition of $T_\delta$,
there is a rule $\rho$ of $\Pi$ and variables $x_1,\dots,x_n$ such
that
\begin{enumerate} 

  \item $R(x_1,\dots,x_n)$ occurs in the body of $\rho$, 

  \item if $P(x_i)$ or $P()$ occurs in the body of $\rho$ with $P \in \Sbf$,
    then $P \in t_i$;

  \item for none of the disjuncts $P(x_i)$ in the head of $\rho$, we 
    have $P \in t_i$;

  \item for none of the disjuncts $P()$ in the head of $\rho$, we have
    $P \in \delta$.

\end{enumerate}
By definition of $T_\delta$ and of $h$, Points~2 to~4 imply that 
\begin{enumerate}

\setcounter{enumi}{4}

\item if $P(x_i)$ (resp\ $P()$) occurs in the body of $\rho$ with $P \in \Sbf$, then
$P(a_i) \in J$ (resp.\ $P() \in J$);

\item for none of the disjuncts $P(x_i)$ in the head of $\rho$, we 
  have $P(a_i) \in J$;

\item for none of the disjuncts $P()$ in the head of $\rho$, we 
  have $P() \in J$;

\end{enumerate}
Since $R(x_1,\dots,x_n)$ occurs in the body of $\rho$ and $\Pi$ is
semi-simple w.r.t.\ $D$, the variables $x_1,\dots,x_n$ are all
distinct. We can thus define a function $g$ by setting $g(x_i)=a_i$.
Since $R(a_1,\dots,a_n) \in I$ and by Point~5, $g$ is a homomorphism
from the body of $\rho$ to $I$. By Points~6 and~7, $g$ witnesses that
$\rho$ is violated in $J$, in contradiction to $J$ satisfying all
rules in~$\Pi$.
\end{proof}

A second important ingredient to the proof of
Theorem~\ref{thm:emptinessGirth} is the following well-known lemma
which is originally due to Erd\H{o}s and concerns graphs of high girth
and high chromatic number and was adapted to the following formulation
in \cite{DBLP:journals/siamcomp/FederV98}.
\begin{lemma} 
\label{lem:erdoes}
  Let $\Sbf_E$ be a schema. For every $\Sbf_E$-instance $I$ and $g,s
  \geq 0$, there is an $\Sbf_E$-instance $I'$ such that
  \begin{enumerate}

  \item $I' \rightarrow I$, 

  \item $I'$ has girth exceeding $g$, and 

  \item for  every $\Sbf_E$-instance $J$ with at most $s$ elements,
    $J \rightarrow I$ iff $J \rightarrow I'$.

  \end{enumerate}
\end{lemma}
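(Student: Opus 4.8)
The plan is to prove the statement by the probabilistic \emph{alteration} method of Erd\H{o}s, adapted to relational structures. First I would record which parts are trivial. Increasing $g$ can only strengthen the conclusion, so I would assume $g \ge 2s$. Nullary facts and facts with a repeated argument cannot lie on cycles of length exceeding one, so I would copy all of these verbatim from $I$ into $I'$ and run the random construction only on the genuinely relational (repetition-free) facts. Once Property~1 ($I'\to I$) is in hand, one half of the homomorphism equivalence is automatic: any homomorphism $I\to J$ composes with $I'\to I$ to give $I'\to J$. The substantial content is therefore the converse, $I'\to J \Rightarrow I\to J$ for all $J$ with at most $s$ elements (equivalently, $I\nrightarrow J$ is preserved when passing to $I'$); this is exactly the form in which the lemma is applied, since it lets one preserve $I\nrightarrow T_\delta$ for the bounded-size templates $T_\delta$ of Lemma~\ref{lem:NFtoCSP}.

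For the construction I would fix a large $n$ and density $p=c_0\,n^{-1+1/(2g)}$, take as domain $\adom(I)\times[n]$ with projection $\pi(a,j)=a$, and for every fact $R(a_1,\dots,a_r)$ of $I$ and every lift $((a_1,j_1),\dots,(a_r,j_r))$ include the lifted fact independently with probability $p$, obtaining a random structure $I'_0$. By construction $\pi$ is a homomorphism $I'_0\to I$, and this is inherited by every substructure, so Property~1 will hold for any $I'\subseteq I'_0$. I would then handle girth by a first-moment estimate: a cycle of length $\ell$ occupies $\ell$ distinct lifted facts spanning $O(\ell)$ junction elements, so the expected number of cycles of length at most $g$ is bounded by $\sum_{\ell\le g}O\big((np)^{\ell}\big)=O(n^{1/2})$, since $np=c_0\,n^{1/(2g)}$ and hence $(np)^{\ell}\le (np)^{g}=O(n^{1/2})$. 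Using Markov's inequality I would fix an outcome of $I'_0$ with at most $O(n^{1/2})$ short cycles, delete one element (with its incident facts) from each such cycle, and let $I'$ be the result; then $I'$ has girth exceeding $g$, only $O(n^{1/2})=o(n)$ elements have been removed, and $I'\to I$ persists by restricting $\pi$.

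The main obstacle is the hard direction of Property~3 and, crucially, its compatibility with the deletion step. The clean intuition is a \emph{push-down}: given $h\colon I'\to J$, any level $j$ whose full diagonal section $\{R((a_1,j),\dots,(a_r,j)) : R(a_1,\dots,a_r)\in I\}$ survives in $I'$ yields the homomorphism $a\mapsto h(a,j)$ from $I$ to $J$. One cannot simply plant such a full section, however, because that would embed a copy of $I$ and reintroduce $I$'s short cycles, destroying high girth; and deletion only removes facts, which can only make a $J$-colouring easier. The heart of the argument, which I would carry out probabilistically exactly as in Erd\H{o}s's proof, is to show that the sparse high-girth $I'_0$ is nonetheless a good enough \emph{expander} that it admits no homomorphism to any $J$ with $I\nrightarrow J$ and $|J|\le s$, \emph{robustly} under the removal of the $o(n)$ short-cycle elements: a hypothetical $h\colon I'\to J$ would force $h$ to be essentially constant along almost all fibres $\{(a,j):j\in[n]\}$, and averaging over fibres would then produce a homomorphism $I\to J$, contradicting $I\nrightarrow J$. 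Since there are only finitely many such $J$ (as $\Sbf_E$ is finite of bounded arity and $|J|\le s$), a union bound lets me fix a single outcome of $I'_0$ realising all of these events simultaneously together with the cycle bound; the resulting $I'$ then satisfies Properties~1--3. Verifying the robust expansion estimate---that deleting the $o(n)$ cycle-breaking elements cannot create a homomorphism into any bad $J$---is the delicate point, and is where the choice of $p$ must be balanced against $g$ and $s$.
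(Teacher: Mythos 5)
The paper does not actually prove this lemma: it invokes it as a known result of Erd\H{o}s, in the formulation of Feder and Vardi \cite{DBLP:journals/siamcomp/FederV98}, so the benchmark is that standard proof, and your skeleton --- sparse random lift over $\adom(I)\times[n]$, first-moment count of short cycles, deletion, and a push-down of homomorphisms via fibres --- is exactly its skeleton. You also (correctly, if silently) repaired the statement: Property~3 as printed, ``$J \rightarrow I$ iff $J \rightarrow I'$'', is false in general (take $I=J=\{r(a,a)\}$: then $J\rightarrow I$ but $J$ cannot map into any $I'$ of girth $>1$); the intended and used form, visible in the proof of Lemma~\ref{thm:emptinessGirth}, is $I\rightarrow J$ iff $I'\rightarrow J$ for all $J$ with at most $s$ elements, and that is what you set out to prove.

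However, there are three concrete gaps. First, copying facts with repeated arguments verbatim into $I'$ is an outright error: under the paper's definition such a fact is itself a cycle of length one, so your $I'$ would have girth $1$ and Property~2 fails for every $g\geq 1$. These facts must be lifted randomly like all others, with the repeated positions allowed to take \emph{distinct} copies in the fibre --- for $I=\{r(a,a)\}$ this is precisely Erd\H{o}s's random graph on the fibre of $a$, and it is the only way the lemma can hold for such $I$. Second, the single density $p=c_0 n^{-1+1/(2g)}$ is calibrated for binary relations only: a cycle of length $\ell$ built from arity-$r$ facts has on the order of $n^{\ell(r-1)}$ placements, so the expected number of short cycles is $\sum_{\ell\leq g} O\bigl((n^{r-1}p)^{\ell}\bigr)$, which already diverges for $r=3$ with your $p$; you need arity-dependent densities $p_r \approx n^{-(r-1)+1/(2g)}$ (one global $p$ is either too dense for high arities or too sparse for the counting step on low ones). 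This is not a corner case here, since the lemma is applied to the schemas produced by the simplification step, whose relations $R_q$ have arity up to the variable width of the programs. Third, the step you defer is the heart of the proof, and the intuition you offer for it --- that a homomorphism $h\colon I'\rightarrow J$ is ``forced to be essentially constant along almost all fibres'', via expansion --- is neither provable nor needed: nothing forces near-constancy. The actual mechanism is pigeonhole plus concentration: each fibre has a plurality value taken at least $n/s$ times, defining $\bar h\colon \adom(I)\rightarrow J$; if $\bar h$ is not a homomorphism, some fact of $I$ has at least $(n/s)^{r}$ lifts inside the plurality classes, each present independently with probability $p_r$, giving expectation $\Omega(n^{1+1/(2g)})$; a Chernoff bound combined with a union bound over \emph{all} $|J|^{|\adom(I)|\cdot n}$ maps $h$ (your union bound over the finitely many templates $J$ is not the binding one) shows that with high probability every such map retains more violating lifted facts than the $O(n^{1/2})$ deletions can destroy --- which is cleanest if one deletes one \emph{fact} per short cycle rather than an element, as deleting facts suffices for girth and makes the bookkeeping against deletions trivial. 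With these three points filled in your argument goes through, but as written it does not.
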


\smallskip
\noindent
\THMemptinessGirth*

\noindent
\begin{proof}
Clearly, we only
need to prove that if $\Pi$ is non-empty w.r.t.~$D$, then this is
witnessed by an instance of girth exceeding $g$. By
Lemma~\ref{lem:KthetaAreEnough}, $\Pi$ being non-empty w.r.t.~$D$
implies that $K_\theta \models \Pi$ for some 0-type $\theta$.  By
Lemma~\ref{lem:NFtoCSP}, we find templates $T_0,\dots,T_n$ in
signature $\Sbf_E$ such that, for every $\Sbf_E$-instance~$I$ of some
0-type $\theta' \subseteq \theta$, we have $I \not\models \Pi$ iff $I
\rightarrow T_i$ for some $i \leq
n$. 
Thus $K_\theta \not\rightarrow T_i$ for all $i \leq n$. By
Lemma~\ref{lem:erdoes}, there is an $\Sbf_E$-instance $K'_\theta$ such
that $K'_\theta \rightarrow K_\theta$ (and thus $K'_\theta$ satisfies
the constraints in $D$ and has some 0-type $\theta' \subseteq
\theta$), $K'_\theta$ has girth exceeding $h$, and for every
$\Sbf_E$-instance $I$ of size at most $s:=\max\{|T_0|,\dots,|T_n|\}$,
we have $K'_\theta \rightarrow I$ iff $K_\theta \rightarrow I$. The
latter implies $K'_\theta \not\rightarrow T_i$ for all $i \leq n$, and
thus $K'_\theta \models \Pi$ as desired. \qed
\end{proof}

\subsubsection*{Deriving Upper Bounds}

\smallskip
\noindent
\THMmainupper*

\noindent
\begin{proof}
  To decide whether $\Pi_1 \subseteq \Pi_2$, we first jointly simplify
  $\Pi_1$ and $\Pi_2$ as per Theorem~\ref{thm:simplify}, giving
  programs $\Pi^S_1$ and $\Pi^S_2$. These programs and
  Theorem~\ref{thm:toempty} give another program $\Pi$ and a set of
  constraints $D$ such that $\Pi$ is semi-simple w.r.t.\ $\Pi$. We
  decide whether $\Pi$ is empty w.r.t.\ $D$ and return the result. The
  size and complexity bounds given in
  Theorems~\ref{thm:simplify},~\ref{thm:toempty},
  and~\ref{thm:emptinessCompl} give the complexity bound stated in
  Theorem~\ref{thm:mainupper}. In fact, it can be verified that
  $|\Pi^S_i| \leq 2^{p(|\Pi_i|)}$, $|\Pi| \leq 2^{2^{p(|\Pi_2| \cdot
      \mn{log}|\Pi_1|)}}$, and $|D| \leq 2^{p(\Pi_2)}$ where $p$ is a
  polynomial.  Moreover, the variable width of $\Pi$ is bounded by
  that of $\Pi_1 \cup \Pi_2$ and it remains to plug these bounds
  into the time bounds stated in Theorem~\ref{thm:emptinessCompl}.

  It remains to argue that the algorithm is correct. If $\Pi_1
  \not\subseteq \Pi_2$, then $\Pi_1^S \not\subseteq \Pi_2^S$, thus
  $\Pi$ is non-empty w.r.t.\ $D$, thus ``no'' is returned. Let $w$ be
  the atom width of $\Pi_1 \cup \Pi_2$ (with the exception that $w=2$
  if that atom width is one).  If ``no'' is returned by our algorithm,
  then $\Pi$ is non-empty w.r.t.\ $D$. By
  Theorem~\ref{thm:emptinessGirth}, $\Pi$ is non-empty w.r.t.\ $D$ on
  instances of girth $>w$. Thus $\Pi_1^S \not\subseteq_{>w} \Pi_2^S$,
  implying $\Pi_1 \not\subseteq\Pi_2$.
\end{proof}

For the following lemma, let $\Sbf_E$ be the EDB schema of
$\Pi_1$ and $\Pi_2$, and let $\Sbf_{I,i}$ be the IDB schema of
$\Pi_i$, $i \in \{1,2\}$.

\noindent
\LEMconstredok*  

\smallskip
\noindent
\begin{proof}
  Assume that $\Pi_1 \not\subseteq \Pi_2$. Then there is an
  $\Sbf_E$-instance $I$ and a tuple $\abf \subseteq \mn{adom}(I)^k$
  such that $\abf \in \Pi_1(I) \setminus \Pi_2(I)$. Let $\Cbf_\Pi$ be
  the constants which occur in $\Pi_1 \cup \Pi_2$, and observe that
  $\Cbf_\Pi \subseteq \Cbf$. We can assume w.l.o.g.\ that all
  constants in $\abf$ are from \Cbf; if this is not the case, we can
  first rename constants in $I$ and $\abf$ that are from $\Cbf
  \setminus (\Cbf_\Pi \cup \abf)$ with fresh constants and then rename
  constants in $I$ and $\abf$ that are from $\abf \setminus \Cbf$ with
  constants from $\Cbf \setminus \Cbf_\Pi$.  By choice of $\Cbf$,
  there are enough constants of the latter kind.  It can now be
  verified that $I \models \Pi_1^\abf$ and $I \not\models
  \Pi_2^\abf$. In particular, an extension $J$ of $I$ to schema
  $\Sbf_E \cup \Sbf_{I,1}$ that satisfies all rules in $\Pi_1^\abf$
  and does not contain the goal relation gives rise to an extension
  $J'$ of $I$ that satisfies all rules in $\Pi_1^\abf$ and does not
  contain $\mn{goal}(\abf)$: start with $J'$ and then apply all goal
  rules of $\Pi_1$. By construction of $\Pi_1^\abf$ and since
  $\mn{goal}() \notin J'$, we have $\mn{goal}(\abf) \notin J'$.
  Similarly, an extension $J$ of $I$ to schema $\Sbf_E \cup
  \Sbf_{I,2}$ that satisfies all rules in $\Pi_2$ and does not
  contain $\mn{goal}(\abf)$ gives rise to an extension $J'$ of $I$
  that satisfies all rules in $\Pi_2^\abf$ and does not contain
  $\mn{goal}()$.
  
  Conversely, assume that $\Pi^\abf_1 \not\subseteq \Pi^\abf_2$.  Then
  there is an $\Sbf_E$-instance $I$ with $I \models \Pi^\abf_1$ and $I
  \not\models \Pi^\abf_1$. By considering appropriate extensions of
  $I$ to the schemas $\Sbf_E \cup \Sbf_{I,i}$, it can be verified in a 
  very similar way as above that $\abf \in \Pi_1(I) \setminus \Pi_2(I)$.
\end{proof}
  
\LEMconstredtwook*

\smallskip
\noindent
  \begin{proof}
    Assume that $\Pi_1 \not\subseteq \Pi_2$.  Then there is an
    $\Sbf_E$-instance $I$ such that $I \models \Pi_1$ and $I \not
    \models \Pi_2$. Let $J$ be the $\Sbf'_E$-instance obtained from
    $I$ by adding $R_a(a)$ whenever $a \in \mn{adom}(I) \cap \Cbf$.
    It can be verified that $J \models \Pi'_1$ and $J \not\models
    \Pi'_2$.

    \smallskip

    Conversely, assume that $\Pi'_1 \not\subseteq \Pi'_2$. Then there
    is an $\Sbf'_E$-instance $I$ such that $I \models \Pi'_1$ and $I
    \not \models \Pi'_2$. Let $J$ be the $\Sbf_E$-instance which is 
    obtained from $I$ by 
    \begin{itemize}

    \item dropping all facts that use a relation from $\Sbf'_E
      \setminus \Sbf_E$ and then

    \item taking the quotient according to the following equivalence
      relation on  $\mn{adom}(I)$:
      $$
        \{ (a,b) \mid \exists c \in \Cbf: R_c(a), R_c(b) \in I \};
      $$
      note that this is indeed an equivalence
      relation because the rules $\mn{goal}() \leftarrow R_{a_1}(x)
      \wedge R_{a_2}(x)$ in $\Pi'_2$ (for all distinct $a_1,a_2$) 
      imply that for any $a \in \mn{adom}(I)$, there is at most one
      $b \in \Cbf$ with $R_b(a) \in I$.

    \end{itemize}
    It can be verified that $J \models \Pi_1$ and $J \not\models
    \Pi_2$.
\end{proof}

\section{Ontology-Mediated Queries}

\subsubsection*{Preliminaries}

We first introduce the relevant OMQ languages and then provide missing
proofs.

\smallskip

A \emph{$\mathcal{ALCI}$-concept} is formed according to the syntax rule
$$
\begin{array}{r@{\;}c@{\;}l}
   C,D &::=& \top \mid \bot \mid A \mid \neg C \mid C \sqcap D \mid C
   \sqcup D \, \mid \\[\myeqnsep]
&& \exists r . C \mid \exists r^-.C \mid \forall r . C
   \mid \forall r^- . C
\end{array}
$$
where $A$ ranges over a fixed countably infinite set of \emph{concept
  names} and $r$ over a fixed countably infinite set of \emph{role
  names}.  An \emph{\ALC-concept} is an $\mathcal{ALCI}$-concept in
which the constructors $\exists r^- . C$ and $\forall r^-.C$ are not
used.  An \ALC-TBox (resp.\ \ALCI-TBox) is a finite set  of concept
inclusions $C \sqsubseteq D$, $C$ and $D$ $\ALC$-concepts (resp.\
\ALCI-concepts). A \emph{$\mathcal{SHI}$-TBox} is a finite set of
\begin{itemize}
\item \emph{concept inclusions}
$C \sqsubseteq D$, $C$ and $D$ $\mathcal{SHI}$-concepts, 

\item \emph{role
  inclusion} $r \sqsubseteq s$, $r$ and $s$ role names, and

\item
\emph{transitivity statements} $\mn{trans}(r)$, $r$ a role
name.

\end{itemize}
DL semantics is given in terms of interpretations. An
\emph{interpretation} takes that form $\Imc=(\Delta^\Imc,\cdot^\Imc)$
where $\Delta^\Imc$ is a non-empty set called the \emph{domain}
and $\cdot^\Imc$ is the \emph{interpretation function} which maps
each concept name $A$ to a subset $A^\Imc \subseteq \Delta^\Imc$
and each role name $r$ to a binary relation $r^\Imc \subseteq r^\Imc
\times r^\Imc$. The interpretation functions is extended to concepts
in the standard way, for example
$$
\begin{array}{r@{\;}c@{\;}l}
  (\exists r . C)^\Imc &=& \{ d \in \Delta^\Imc \mid \exists e \in 
                           C^\Imc: (d,e) \in r^\Imc \} \\[\myeqnsep] 
  (\exists r^- . C)^\Imc &=& \{ d \in \Delta^\Imc \mid \exists e \in 
                           C^\Imc: (e,d) \in r^\Imc \}.
\end{array}
$$
We refer to standard references such as \cite{DBLP:conf/dlog/2003handbook} for full
details. An intepretation is a \emph{model} of a TBox \Tmc if it
\emph{satisfies} all statements in \Tmc, that is,
\begin{itemize}

\item $C \sqsubseteq D \in \Tmc$ implies $C^\Imc \subseteq D^\Imc$;

\item $r \sqsubseteq s \in \Tmc$ implies $r^\Imc \subseteq s^\Imc$;

\item $\mn{trans}(r) \in \Tmc$ implies that $r^\Imc$ is transitive.

\end{itemize}

In description logic, data is typically stored in so-called ABoxes.
For uniformity with MDDLog, we use instances instead, identifying
unary relations with concept names, binary relations with role names,
and disallowing relations of any other arity. An interpretation \Imc
is a \emph{model} of an instance $I$ if $A(a) \in I$ implies
$a \in A^\Imc$ and $r(a,b) \in I$ implies $(a,b) \in r^\Imc$. We say
that an instance $I$ is \emph{consistent} with a TBox \Tmc if $I$ and
\Tmc have a joint model. We write $\Tmc \models r \sqsubseteq s$
if every model \Imc of \Tmc satisfies $r^\Imc \subseteq s^\Imc$.

An \emph{ontology-mediated query (OMQ)} takes the form
$Q=(\Tmc,\Sbf_E,q)$ with \Tmc a TBox, $\Sbf_E$ a set of concept and
role names, and $q$ a UCQ.  We use $(\Lmc,\Qmc)$ to refer to the set
of all OMQs whose TBox is formulated in the language \Lmc and where
the actual queries are from the language \Qmc. For example,
$(\ALC,\text{UCQ})$ refers to the set of all OMQs that consist of an
\ALC-TBox and a UCQ. For OMQs $(\Tmc,q)$ from
$(\mathcal{SHI},\mathcal{UCQ})$, we adopt the following additional
restriction: when \Tmc contains a transitivity $\mn{trans}(r)$ and
$\Tmc \models r \sqsubseteq s$, we disallow the use of $s$ in the
query $q$. Let $I$ be an $\Sbf_E$-instance and $\vect{a}$ a tuple of
constants from $I$. We write $I \models Q[\vect{a}]$ and call
$\vect{a}$ a \emph{certain answer to $Q$ on} $I$ if for all models
\Imc of $I$ and~\Tmc, we have $\Imc \models q[\vect{a}]$ (defined in
the usual way).

Containment between OMQs is defined in analogy with containment
between MDDLog programs: $Q_1=(\Tmc_1,\Sbf_E,q_1)$ is \emph{contained}
in $Q_2=(\Tmc_2,\Sbf_E,q_2)$, written $Q_1 \subseteq Q_2$, if for every
$\Sbf_E$-instance $I$ and tuple $\vect{a}$ of constants from $I$,
$I \models Q_1[\vect{a}]$ implies $I \models Q_2[\vect{a}]$.
This is different from the notion of containment considered in
\cite{DBLP:conf/kr/BienvenuLW12}, here called \emph{consistent
  containment}. We say that $Q_1=(\Tmc_1,\Sbf_E,q_1)$ is
\emph{consistently contained} in $Q_2=(\Tmc_2,\Sbf_E,q_2)$, written
$Q_1 \subseteq^\cbf Q_2$, if for every $\Sbf_E$-instance $I$ that is
consistent with $\Tmc_1$ and $\Tmc_2$ and every tuple $\vect{a}$
of constants from $I$, $I \models Q_1[\vect{a}]$ implies
$I \models Q_2[\vect{a}]$. We observe the following.
\begin{lemma}
  In $(\mathcal{SHI},\text{UCQ})$, consistent containment can be
  reduced to containment in polynomial time.
\end{lemma}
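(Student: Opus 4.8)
The plan is to isolate the exact discrepancy between the two notions. Unrestricted containment $Q_1 \subseteq Q_2$ quantifies over all instances, whereas $\subseteq^\cbf$ ranges only over instances consistent with both $\Tmc_1$ and $\Tmc_2$. Since an OMQ returns \emph{every} tuple on an instance inconsistent with its TBox, the gap between the two can come only from an $I$ that is inconsistent with $\Tmc_1$ but consistent with $\Tmc_2$ (so $Q_1$ fires trivially while $Q_2$ genuinely fails), or symmetrically. First I would therefore construct, in polynomial time, OMQs $Q_1'=(\Tmc^*,\Sbf_E,q_1')$ and $Q_2'=(\Tmc^*,\Sbf_E,q_2')$ over a single TBox $\Tmc^*$ such that (a)~$\Tmc^*$ is consistent with $I$ iff $I$ is consistent with both $\Tmc_1$ and $\Tmc_2$, and (b)~on every such jointly consistent $I$ the certain answers of $Q_i'$ coincide with those of $Q_i$. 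Outside the jointly-consistent regime both $Q_1'$ and $Q_2'$ then return everything, so containment holds vacuously; inside it the condition is literally that of $\subseteq^\cbf$. Hence $Q_1'\subseteq Q_2'$ iff $Q_1 \subseteq^\cbf Q_2$.

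The construction uses \emph{signature duplication} to keep the two TBoxes from interacting through $\Sbf_E$. I first rename the auxiliary (non-$\Sbf_E$) symbols of $\Tmc_1$ and $\Tmc_2$ so that they are disjoint; this changes neither OMQ. For each $\Sbf_E$-symbol $S$ I add two fresh copies $S^{(1)},S^{(2)}$ and let $\Tmc_i^{(i)}$ be $\Tmc_i$ with every $\Sbf_E$-symbol $S$ replaced by $S^{(i)}$. I then set
$$
\Tmc^* \;=\; \Tmc_1^{(1)} \cup \Tmc_2^{(2)} \cup \{\, S \sqsubseteq S^{(1)},\; S \sqsubseteq S^{(2)} \mid S \in \Sbf_E \,\},
$$
where the linking inclusions push the data seeded on $\Sbf_E$ into each copy while leaving the original $\Sbf_E$-symbols otherwise unconstrained. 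Finally $q_i'$ is $q_i$ with its $\Sbf_E$- and auxiliary symbols renamed to their copy-$i$ versions, so that $q_i'$ is evaluated entirely on copy $i$.

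For correctness I argue (a) and (b). For (a), the inclusions $S \sqsubseteq S^{(i)}$ force every model to embed an isomorphic copy of $I$ into copy $i$; since the two copies use disjoint symbols and the original $\Sbf_E$-symbols carry no TBox constraints, a model of $\Tmc^*$ with $I$ exists iff each copy can be extended to a model of its $\Tmc_i^{(i)}$ independently, i.e.\ iff $I$ is consistent with both $\Tmc_1$ and $\Tmc_2$. For (b), fix a jointly consistent $I$. Every model of $I$ and $\Tmc_i$ extends to a model of $\Tmc^*$ (interpret copy $i$ as that model, copy $\bar\imath$ by extending $I$ to some model of $\Tmc_{\bar\imath}$, which exists by joint consistency, and keep the original $\Sbf_E$-symbols equal to $I$); hence a certain answer of $q_i'$ w.r.t.\ $\Tmc^*$ is certain for $q_i$ w.r.t.\ $\Tmc_i$. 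Conversely, the copy-$i$ reduct of any model of $\Tmc^*$ is itself a model of $I$ and $\Tmc_i$ (the linking inclusions guarantee it contains $I$), so every certain answer of $q_i$ w.r.t.\ $\Tmc_i$ is certain for $q_i'$ w.r.t.\ $\Tmc^*$. With (a) and (b), the case analysis of the first paragraph yields the equivalence, and the whole construction is plainly polynomial.

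The main obstacle is precisely the interaction of $\Tmc_1$ and $\Tmc_2$ through the shared schema $\Sbf_E$: the naive choice $\Tmc^*=\Tmc_1\cup\Tmc_2$ fails, because an instance can be consistent with each TBox separately yet inconsistent with their union (one may force an $\Sbf_E$-concept empty and the other force it nonempty), which would both miss genuine $\subseteq^\cbf$-counterexamples and create spurious inconsistencies. Signature duplication is exactly what decouples the two consistency checks and makes (a) and (b) hold simultaneously. A final routine point I would verify is that $\Tmc^*$ is still a $\mathcal{SHI}$-TBox and that the admissibility restriction of $(\mathcal{SHI},\text{UCQ})$—no transitive role or superrole thereof in the query—survives: renaming respects it within each copy, and the only new role inclusions make original $\Sbf_E$-roles \emph{sub}roles of copied roles, so no copied transitive role gains a new superrole occurring in $q_i'$.
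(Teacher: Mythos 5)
Your proposal is correct, but it follows a genuinely different route from the paper's. The paper keeps $Q_1$ completely untouched and folds the $\Tmc_1$-consistency check into $Q_2$ alone: assuming $\Tmc_1$ in negation normal form, it extends $\Tmc_2$ with a \emph{weakened} copy of $\Tmc_1$ in which each negated concept name $\neg A$ is replaced by a fresh surrogate $\overline{A}'$, adds covering axioms $\top \sqsubseteq A' \sqcup \overline{A}'$ and clash-flagging axioms $A' \sqcap \overline{A}' \sqsubseteq B$ for a fresh concept name $B$, links the original symbols into the copy via $A \sqsubseteq A'$ and $r \sqsubseteq r'$, and replaces $q_2$ by $q_2 \vee \exists x\, B(x)$; on instances inconsistent with $\Tmc_1$ every model is then forced to make $B$ nonempty, so the enlarged second OMQ fires trivially exactly where consistent containment stops caring. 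You instead symmetrize: both OMQs are re-based on a single joint TBox $\Tmc^*$ built by signature duplication, so that inconsistency with either TBox trivializes \emph{both} queries simultaneously, and on jointly consistent instances the disjoint copies decouple and each renamed query computes the original certain answers. Each approach buys something. The paper's construction leaves $Q_1$ literally unchanged and introduces fresh symbols only on the $Q_2$ side, but it relies on the query language being closed under adding the disjunct $\exists x\, B(x)$ (i.e., on UCQs being available) and on the NNF weakening trick; your construction only renames symbols, so it preserves the syntactic shape of both queries (it would work verbatim for CQ-based OMQ languages) and, as a by-product, yields two OMQs over the \emph{same} TBox. Your explicit observation that the naive choice $\Tmc_1 \cup \Tmc_2$ fails because the TBoxes interact through shared symbols is precisely the decoupling issue at stake, and your disjoint-copy design avoids it cleanly. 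Two routine points to discharge in a full write-up: (i) in your claims (a) and (b) the two component models must be amalgamated over one common domain, which works because $\mathcal{SHI}$ has no nominals and is closed under disjoint unions (pad each model on fresh elements, keeping the instance constants fixed, possibly after a L\"owenheim--Skolem adjustment of cardinalities); and (ii) when padding the copy-$i$ model in the converse direction of (b), pad with copies isomorphic to that very model, since a Boolean or disconnected component of the renamed query could otherwise pick up a spurious match inside arbitrary padding.
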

\begin{proof}(sketch) Let $Q_1=(\Tmc_1,\Sbf_E,q_1)$ and
  $Q_2=(\Tmc_2,\Sbf_E,q_2)$ be OMQs from
  $(\mathcal{SHI},\text{UCQ})$. Assume without loss of generality that
  all concept that occur in $\Tmc_1$ are in negation normal form, that
  is, negation is only applied to concept names but not to compound
  concepts. For every concept name $A$ in
  $\Sbf_E \cup \mn{sig}(\Tmc_1)$, introduce fresh concept names $A'$ and
  $\overline{A}'$ that do not occur in $Q_1$ and $Q_2$.  For every
  role name $r$ in $\Sbf_E \cup \mn{sig}(\Tmc_1)$, introduce a fresh
  role name $r'$. Define the TBox $\Tmc'_2$ as the extension of
  $\Tmc_2$ with the following:
  \begin{itemize}

  \item $A \sqsubseteq A'$ for all concept names $A$ in
    $\Sbf_E \cup \mn{sig}(\Tmc_1)$;

  \item $r \sqsubseteq r'$ for all role names $r$ in
    $\Sbf_E \cup \mn{sig}(\Tmc_1)$;

  \item every concept inclusion, role inclusion, and transitivity
    statement from $\Tmc_1$, each concept name $A$ replaced with $A'$,
    each subconcept $\neg A$ replaced with $\overline{A}'$, and each
    role name replaced with $r'$;

  \item the inclusions $\top \sqsubseteq A' \sqcup \overline{A}'$ and
    $A' \sqcap \overline{A}' \sqsubseteq B$ for all concept names $A$
    in $\Sbf_E \cup \mn{sig}(\Tmc_1)$, where $B$ is a fresh concept name.
    
  \end{itemize}
  Set $q'_2 = q_2 \vee \exists x \, B(x)$. It suffices to establish
  the following claim. The proof is not difficult and left to the
  reader.
  \\[2mm]
  {\bf Claim.} $Q_1 \subseteq^\cbf Q_2$ iff $Q_1 \subseteq Q'_2$.
\end{proof}

\subsubsection*{Upper Bound}

\noindent 
\THMthatreductionagain*

\noindent
\begin{proof}
  Let $Q=(\Tmc,\Sbf_E,q_0)$ be an OMQ from
  $(\mathcal{SHI},\text{UCQ})$. We use $\mn{sub}(\Tmc)$ to denote the
  set of subconcepts of (concepts occurring in) \Tmc. Moreover, let
  $\Gamma$ be the set of all tree-shaped conjunctive queries that can
  be obtained from a CQ in $q_0$ by first quantifying all answer
  variables, then identifying variables, and then taking a subquery.
  Here, a conjunctive query $q$ is \emph{tree-shaped} if (i)~the
  undirected graph $(V,\{\{ x,y \} \mid r(x,y) \in q \})$ is a tree
  (where $V$ is the set of variables in $q$), (ii)~$r_1(x,y),r_2(x,y)
  \in q$ implies $r_1=r_2$, and (iii)~$r(x,y) \in q$ implies $s(y,x)
  \notin q$ for all $s$. Every $q \in \Gamma$ can be viewed as a
  $\mathcal{ALCI}$-concept provided that we additionally choose a root
  $x$ of the tree. We denote this concept with $C_{q,x}$. For example,
  the query $\exists x \exists y \exists z \, r(x,y) \wedge A(y)
  \wedge s(x,z)$ yields the $\mathcal{ALCI}$-concept $\exists r . A \sqcap
  \exists s . \top$. Let $\mn{con}(q_0)$ be the set of all these
  concepts $C_{q,x}$ and let $\Sbf_I$ be the schema that consists of
  monadic relation symbols $P_C$ and $\overline{P}_C$ for each $C \in
  \mn{sub}(\Tmc) \cup \mn{con}(q_0)$ and nullary relation symbols
  $P_q$ and $\overline{P}_q$ for each $q \in \Gamma$. We are going to
  construct an MDDLog program $\Pi$ over EDB schema $\Sbf_E$ and IDB
  schema $\Sbf_I$ that is equivalent to $Q$.

  By a \emph{diagram}, we mean a conjunction $\delta(\vect{x})$ of
  atoms over the schema $\Sbf_E \cup \Sbf_I$. For an interpretation
  \Imc, we write $\Imc \models \delta(\vect{x})$ if there is a
  homomorphism from $\delta(\vect{x})$ to \Imc, that is, a map
  $h:\vect{x} \rightarrow \Delta^\Imc$ such that:
  \begin{enumerate}

  \item $A(x) \in \delta$ with $A \in \Sbf_E$ implies $h(x) \in A^\Imc$;

  \item $r(x,y) \in \delta$ with $r \in \Sbf_E$ implies $(h(x),h(y)) \in A^\Imc$;

  \item $P_q() \in \delta$ implies $\Imc \models q$ and
    $\overline{P}_q() \in \delta$ implies $\Imc \not \models q$;

  \item $P_C(x) \in \delta$ implies $h(x) \in C^\Imc$ and
    $\overline{P}_C() \in \delta$ implies $h(x) \notin C^\Imc$.

  \end{enumerate}
  We say that $\delta(\vect{x})$ is \emph{realizable} if there is
  an interpretation \Imc with $\Imc \models \delta(\vect{x})$.  A
  diagram $\delta(\vect{x})$ \emph{implies} a CQ
  $q(\vect{x}')$, with $\vect{x}'$ a sequence of variables
  from $x$, if every homomorphism from $\delta(\vect{x})$ to some
  interpretation \Imc is also a homomorphism from $q(\vect{x}')$
  to \Imc. The MDDLog program $\Pi$ consists of the following rules:
  \begin{enumerate}

  \item the rule $P_q() \vee \overline{P}_q() \leftarrow R(\vect{x})$
    for each $q \in \Gamma$, each $R \in \Sbf_E$, and each $R \in
    \Sbf_E$ where $\xbf = x_1,\dots,x_n$, $n$ the arity of $R$;

  \item the rule $P_C(x) \vee \overline{P}_C(x) \leftarrow
    R(\vect{x})$ for each $C \in \mn{sub}(\Tmc) \cup \mn{con}(q_0)$,
    each $R \in \Sbf_E$, and each tuple $\vect{x}$ that
    can be obtained from $x_1,\dots,x_n$ by replacing a single $x_i$
    with $x$ ($n$ the arity of $R$);


  \item the rule $\bot \leftarrow \delta(x)$ for each 
    non-realizable diagram $\delta(x)$ that contains a single
    variable $x$ and only atoms of the form $P_C(x)$, $C \in
    \mn{sub}(\Tmc) \cup \mn{con}(q_0)$;

  \item the rule $\bot \leftarrow \delta(\vect{x})$ for each 
    non-realizable connected diagram $\delta(\vect{x})$ that contains 
    at most two variables and at most three atoms;

  \item the rule $ \mn{goal}(\vect{x}') \leftarrow \delta(\vect{x})$
    for each diagram $\delta(\vect{x})$ that implies $q_0(\vect{x})$,
    has at most $|q_0|$ variable occurrences, and uses only relations
    of the following form: $P_q$, $P_C$ with $C$ a concept name that
    occurs in $q_0$, and role names from $\Sbf_E$ that occur in $q_0$.
    
  \end{enumerate}
  %
  To understand $\Pi$, a good first intuition is that rules of type~1
  and~2 guess an interpretation \Imc, rules of type~3 and~4 take care
  that the independent guesses are consistent with each other, with
  the facts in $I$ and with the inclusions in the TBox~\Tmc, and rules
  of type~5 ensure that $\Pi$ returns the answers to $q_0$ in~$\Imc$.

  However, this description is an oversimplification. Guessing \Imc is
  not really possible since \Imc might have to contain additional
  domain elements to satisfy existential quantifiers in \Tmc which may
  be involved in homomorphisms from (a CQ in) $q_0$ to $\Imc$, but new
  elements cannot be introduced by MDDLog rules. Instead of
  introducing new elements, rules of type~1 and~2 thus only guess the
  tree-shaped queries that are satisfied by those
  elements. Tree-shaped queries suffice because $\mathcal{SHI}$ has a
  tree-like model property and since we have disallowed the use of
  roles in the query that have a transitive subrole. The notion of
  `diagram implies query' used in the rules of type~4 takes care that
  the guessed tree-shaped queries are taken into account when looking
  for homomorphisms from $q_0$ to the guessed model. A more detailed
  explanation can be found in the proof of Theorem~1 of
  \cite{DBLP:conf/pods/BienvenuCLW13}. In fact, the construction used
  there is identical to the one used here, with two exceptions. First,
  we use predicates $P_C$ and $\overline{P}_C$ for every concept $C
  \in \mn{sub}(\Tmc) \cup \mn{con}(q_0)$ while the mentioned proof
  uses a predicate $P_t$ for every subset $t \subseteq \mn{sub}(\Tmc)
  \cup \mn{con}(q_0)$. And second, our versions of Rules~3-5 are
  formulated more carefully. 
  It can be verified that the correctness proof
  given in \cite{DBLP:conf/pods/BienvenuCLW13} is not affected by
  these modifications.  The modifications do make a difference
  regarding the size of $\Pi$, though, which we analyse next.

  It is not hard to see that the number of rules of type~1 is bounded
  by $2^{|q|^2}$, the number of rules of type~2 is bounded by
  $|\Tmc|$, the number of rules of type~3 is bounded by $2^{2^{|q|
      \cdot \mn{log}|\Tmc|}}$, the number of rules of type~4 is
  bounded by $2^{p(|q|\cdot \mn{log}|\Tmc|)}$ for some polynomial $p$,
  and the number of rules of type~5 is bounded by $2^{p(|q|)}$.
  Consequently, the overall number of rules is bounded by $2^{2^{p(|q|
      \cdot \mn{log}|\Tmc|)}}$ and so is the size of $\Pi$. The bounds
  on the size of the IDB schema and number of rules in $\Pi$ stated in
  Theorem~\ref{thm:thatreductionagain} are easily verified.  The
  construction can be carried out in double exponential time since for
  a given diagram $\delta(\vect{x})$ and CQ $q(\vect{x}')$, with
  $\vect{x}'$ a sequence of variables from $x$, it can be decided in
  2{\sc ExpTime} whether $\delta(\vect{x})$ implies $q(\vect{x}')$.
\end{proof}

\noindent 
\THMALCI*

\begin{proof}
  We convert $Q_1$ and $Q_2$ into MDDLog programs as per
  Theorem~\ref{thm:thatreductionagain} and then remove the answer
  variables according to the proof of Theorem~\ref{thm:mainupperPLUS}.
  Analyzing the latter construction reveals that it produces programs
  of size $r \cdot 2s \cdot a^s$ where $r$ is the number of rules of
  the input program, $s$ is the rule size, and $a$ the arity. Moreover
  the IDB schema is not changed and rule size at most doubles. The
  $\Pi_1,\Pi_2$ obtained by these two first steps thus still satisfy
  Conditions~1-3 of Theorem~\ref{thm:thatreductionagain} except that
  $|q|$ in the last point has to be replaced by $2|q|$.  

  The joint simplifications $\Pi^S_1$ and $\Pi^S_2$ from
  Theorem~\ref{thm:simplify} then have size $|\Pi_i^S| \leq
  2^{2^{p(|q_i| \cdot \mn{log}|\Tmc_i|)}}$ and their variable width is
  bounded by (the rule size of $\Pi_i$ and thus by) $2|q|$. Let us
  analyze the size of the IDB schema of $\Pi^S_i$. First note that the
  initial variable identification step can be ignored. In fact, we
  start with at most $2^{2^{p(|q_i|\cdot \mn{log}|\Tmc_i|)}}$ rules, each
  of size at most $2|q_i|$. Thus variable identification results in 
  a factor of $(2|q_i|)!$, which is absorbed by $2^{2^{p(|q_i|\cdot \mn{log}|\Tmc|)}}$.
  The other parameters are not changed by variable identification.


  When making rules biconnected in the construction of $\Pi^S_1$ and
  $\Pi^S_2$, we need not worry about rules of type~1-2 and~4-5. The
  reason is that there are only $2^{p(|q_i| \cdot \mn{log}|\Tmc_i|)}$
  many such rules, each of size at most $2|q_i|$, and thus the number
  of additional IDB relations introduced for making them biconnected
  is also bounded by $2^{p(|q_i| \cdot \mn{log}|\Tmc_i|)}$. Rules of
  type~3, on the other hand, are of a very restricted form, namely
  $$
    \bot \leftarrow P_{C_1}(x) \wedge \cdots \wedge P_{C_n}(x)
  $$
  with $C_1,\dots,C_n \in \mn{sub}(\Tmc) \cup \mn{con}(q_0)$.
  These rules are biconnected and thus we are done in the Boolean
  case. In the non-Boolean case, rules of the above form are 
  manipulated in the second step of the reduction of non-Boolean
  MDDLog programs to Boolean MDDLog programs. The result
  are rules of exactly the same shape, but also rules of the form
  $$
    \bot \leftarrow P_{C_1}(x_1) \wedge R_a(x_1) \wedge \cdots \wedge
    P_{C_n}(x_n) \wedge R_a(x_n).
  $$
  The latter rules have to be split up to be made biconnected. This
  will result in rules of the form
  $$
    \bot \leftarrow Q_1() \wedge \cdots \wedge Q_n() \quad
    \text{ and }
    \quad
    Q_i() \leftarrow P_C(x) \wedge R_a(x)
  $$
  where $R_a$ is one of the fresh IDB relations introduced in the
  mentioned reduction. Clearly, there are only $2^{p(|q_i| \cdot
    \mn{log}|\Tmc_i|)}$ many rule bodies of the latter form and thus
  it suffices to introduce at most the same number of fresh IDB
  relations $Q_i$. In summary, we have shown that a careful
  construction of $\Pi^S_i$ can ensure that the size of the IDB schema
  of $\Pi^S_i$ is bounded by $2^{p(|q_i| \cdot \mn{log}|\Tmc_i|)}$.

  It can now be verified that the program $\Pi$ from
  Theorem~\ref{thm:toempty} has size at most
  $2^{2^{p(|q_1|\cdot|q_2|\cdot\mn{log}|\Tmc_1|\cdot\mn{log}|\Tmc_2|)}}$
  and $D$ has size $2^{p(|q_2|\cdot\mn{log}|\Tmc_2|)}$. Applying
  Theorem~\ref{thm:emptinessCompl} gives the complexity bound stated
  in Theorem~\ref{thm:ALCI}.
\end{proof}


\subsubsection*{Lower Bound}

\noindent
The following result establishes the lower bound in Point~3 of
Theorem~\ref{thm:alcicompl}. We state it here even in a slightly
stronger form. $\mathcal{ELIU}$ denotes the description logic that
admits only the concept constructors $\top$, $\sqcap$, $\sqcup$, and
$\exists$ and $\ELI_\bot$ denotes the DL with the constructors $\top$,
$\bot$, $\sqcap$, and $\exists$.  With BAQ, we denote the class of
\emph{Boolean atomic queries}, that is, queries of the form $\exists x
\, A(x)$ with $A$ a concept name.
\begin{restatable}{theorem}{THMALCIlower}
  Containment of an $(\mathcal{ELIU},\text{BAQ})$-OMQ in an
  $(\mathcal{ELI}_\bot,\text{CQ})$-OMQ is 2{\sc NExpTime}-hard.
\end{restatable}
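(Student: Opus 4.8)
The plan is to reduce the 2-exp \torus tiling problem to OMQ containment, reusing the grid / counting-tree instance encoding from our MDDLog lower bound but splitting its two ingredients---the program $\Pi$ and the defect-detecting query---between the two OMQs. Given $(P,w_0)$ with $|w_0|=n$ and $m=n+1$, instances over a binary EDB schema $\Sbf_E$ again describe a (partial) $2^{2^n}\times 2^{2^n}$ grid in which every grid node and every horizontal, vertical, and self step node is the root of a depth-$m$ binary counting tree whose $2^{n+1}$ leaves store the position bits via $B_1,\overline{B}_1,B_2,\overline{B}_2$, with the navigation and \mn{jump} gadgets attached as before. I will construct $Q_1=(\Tmc_1,\Sbf_E,\exists x\,A(x))$ in $\mathcal{ELIU}$ and $Q_2=(\Tmc_2,\Sbf_E,q_2)$ in $\mathcal{ELI}_\bot$ so that $Q_1\subseteq Q_2$ iff there is a tiling for $P$ and $w_0$; since tiling existence is 2{\sc NExpTime}-complete, this yields the 2{\sc NExpTime} lower bound, and the weak target logics give the stronger statement.

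For $Q_1$ I would realize the MDDLog program $\Pi$ inside $\mathcal{ELIU}$, with the Boolean atomic query $\exists x\,A(x)$ as its goal. The non-disjunctive part of $\Pi_{\mn{tree}}$ that certifies a node to be g-, h-, or v-active is a tree-shaped check and translates into $\mathcal{ELI}$ concept inclusions defining $\mn{gactive}$, $\mn{hactive}$, $\mn{vactive}$: the level bookkeeping $\mn{lev}^{G}_i$, $\mn{lev}^{H,\ell}_i$ becomes a family of concept names linked by nested existentials, and the ``grid-node predecessor'' requirement for step nodes is expressed with an inverse role, e.g.\ $\exists r^{-}.\mn{gactive}$, which is exactly why $\mathcal{ELIU}$ (rather than $\mathcal{ELU}$) is used. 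The single use of disjunction is to guess a tile type, $\mn{gactive}\sqsubseteq T_1\sqcup\cdots\sqcup T_p$, which $\mathcal{ELIU}$ permits. Each goal rule of $\Pi$ is a tree-shaped defect pattern and becomes a concept inclusion forcing $A$, e.g.\ $T_i\sqcap\mn{gactive}\sqcap\exists r.(\mn{hactive}\sqcap\exists r.(T_j\sqcap\mn{gactive}))\sqsubseteq A$ for $(T_i,T_j)\notin\horiz$, and analogously for $\vertical$ and for the initial condition via $\mn{pos}_{j,0}$. This needs neither $\bot$ nor $\forall$, so every $\mathcal{ELIU}$ knowledge base is consistent and $I\models Q_1$ means ``$A$ holds in every model''; hence on any instance without counting defects, $I\models Q_1$ iff the represented grid admits no proper tiling, exactly as for $\Pi$.

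For $Q_2$ I would keep the target semantics ``$I\models Q_2$ iff $I$ contains a counting defect'' but replace the UCQ of the MDDLog reduction by a single CQ, borrowing the inverse-role queries of \cite{DBLP:conf/cade/Lutz08}. Since $\mathcal{ELI}_\bot$ is Horn, $(I,\Tmc_2)$ has a canonical model and certain answers of $q_2$ reduce to evaluating $q_2$ there. Instead of the \mn{jump}-based forward navigation of the original $q$, the CQ $q_2$ uses role inverses to travel from a leaf of one counting tree up to the root, across the $r$-edge to the successive tree, and down to the leaf at the same position, closing a fork whose two branches have lengths differing by a fixed amount; this fork has a homomorphism into the canonical model precisely when two corresponding bits disagree (condition (Q1)), and, as in the main reduction, the self step nodes let (Q1) subsume (Q2). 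The role of $\Tmc_2\in\mathcal{ELI}_\bot$ is to complete the tree gadgets and to install, via $\bot$, the disjointness constraints that make the canonical model carry definite bit values and block spurious foldings of $q_2$; I must check that this $\Tmc_2$ stays consistent on all instances, as in our stated convention, so that the certain-answer reading of $q_2$ is genuinely ``a defect is present''.

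Correctness then mirrors Lemma~\ref{lem:contlowercorr}: if there is no tiling, the full defect-free grid $I$ satisfies $I\models Q_1$ and $I\not\models Q_2$, witnessing $Q_1\not\subseteq Q_2$; conversely, if a tiling exists, every $I$ with $I\not\models Q_2$ is defect-free, so its counters describe genuine positions, and the global tiling induces a model of $\Tmc_1$ in which no defect rule fires and $A$ stays empty, giving $I\not\models Q_1$ and hence $Q_1\subseteq Q_2$. The main obstacle I expect is the single CQ $q_2$: adapting Lutz's folding argument so that one disjunction-free conjunctive query over $\mathcal{ELI}_\bot$ detects \emph{every} incrementation or copy defect through inverse-role navigation, while admitting \emph{no} homomorphism on defect-free instances---here the length difference between the two navigation branches must block the match, the analogue of the extended self-loops used when we turned the MDDLog UCQ into a CQ. A secondary but delicate point is confirming that $\mathcal{ELIU}$ alone, without $\bot$ or $\forall$, suffices both for the ``active'' tree checks and for forcing $A$ in $Q_1$, and that $\Tmc_2$ remains consistent on all instances despite its use of $\bot$.
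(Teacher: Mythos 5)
Your high-level split is exactly the paper's: $Q_1$ detects tiling defects via a TBox that forces a goal concept queried by the BAQ, and $Q_2$ detects counting defects via a single CQ in the style of Lutz (2008). But your first concrete step -- ``realize $\Pi_{\mn{tree}}$ inside $\mathcal{ELIU}$, keeping the jump/navigation-gadget encoding'' -- fails, and this failure is precisely what forces the paper to redesign the whole encoding. $\mathcal{ELI(U)}$ concept inclusions can only express tree-shaped body patterns (they are invariant under unravelling), whereas the rules that make the MDDLog encoding work are cyclic: $\mn{left}(x) \leftarrow r(x,y) \wedge r(y,z) \wedge \mn{jump}(x,z)$ and $\mn{right}(x) \leftarrow r(x,y) \wedge \mn{jump}(x,y)$ are triangles/multi-edges, and the g-active check $\mn{gactive}(x) \leftarrow \mn{lev}^G_0(x) \wedge r(x,y) \wedge \mn{lev}^G_0(y) \wedge r(y,x)$ is a $2$-cycle. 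So the jump gadgets, the self step nodes, the path-length-difference CQ trick built on the jump shortcuts, and your argument that (Q1) subsumes (Q2) via self step nodes are all unavailable in the target logics. The paper instead replaces $r$ by the symmetric composition $S = r^-;r$, discards jump gadgets and self step nodes entirely, puts explicit leaf addresses $A_0,\dots,A_{m-1}$, $\overline{A}_i$ and level markers $L_0,\dots,L_{m+2}$ into the EDB schema so that every tree check becomes tree-shaped, pads the counting trees to depth $m+2$ with no branching at the bottom, and introduces three cyclically alternating tree types (bit concepts $B_1,\dots,B_6$, plus alternating root labels $R_0/R_1$ for the CQ version) to restore the directionality that symmetric navigation destroys. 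Because g-activeness is then defined by having an $\mn{hactive}^{t\oplus 1}$ or $\mn{vactive}^{t\oplus 1}$ neighbor of the next type, the reduction must start from the 2-exp \emph{torus} tiling problem rather than the square version you reuse (border nodes of a square lack the required neighbors), and (Q2) is derived from (Q1) via this typed-neighbor structure rather than via self steps.

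Your consistency requirement on $\Tmc_2$ is also backwards. If $\Tmc_2$ were consistent with every instance, its $\bot$-inclusions would never fire and could be deleted; in the paper they are the load-bearing device. Inclusions such as $L_{m+1} \sqcap \exists S.(L_{m+2} \sqcap A_i) \sqcap \exists S.(L_{m+2} \sqcap \overline{A}_i) \sqsubseteq \bot$ (and the analogues for $B_j$, $\overline{B}_j$) make any ambiguously labeled instance inconsistent with $\Tmc_2$ and hence trivially a model of $Q_2$, which removes it as a witness for non-containment; only on the surviving, unambiguously labeled instances does the fork-shaped CQ $q_2$ -- assembled from $m$ components with complementary $A_i/\overline{A}_i$ labelings at the padded levels, sharing the variables $x,x'$ -- match exactly when corresponding bits of neighboring trees disagree. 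Without this inconsistency mechanism (or a replacement for it), nothing in your construction prevents a leaf from carrying, say, both $B_1$ and $\overline{B}_1$, and the canonical-model reading of $q_2$ as ``a counting defect is present'' does not go through.
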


\smallskip
\noindent
The overall strategy of the proof is similar to that of our proof of
the lower bounds stated in Theorem~\ref{thm:hardness1}, but the
details differ in a number of respects. Instead of reducing 2-exp
\torus tiling problem, we now reduce the 2-exp torus tiling problem.
The definition is identical except that a tiling $f$ for the
latter problem additionally needs to satisfy
$(f(2^{2^n}-1,i),f(0,i)) \in \horiz$ and
$(f(i,2^{2^n}-1),f(i,u)) \in \vertical$ for all $i < 2^{2^n}$.  

We first implement the reduction using UCQs instead of CQs and
then adapt the proof to CQs.  In the previous reduction, the role name
$r$ was used to connect neighboring grid nodes and nodes in counting
trees. In the current reduction, we replace $r$ with the role
composition $r^-;r$ where $r^-$ denotes the inverse of $r$ and which
behaves like a reflexive-symmetric role. We use $S$ as an abbreviation
for $r^-;r$. In particular, $\exists S . C$ stands for
$\exists r^- . \exists r .C$ and $\forall S . C$ stands for
$\forall r^- . \forall r
.C$. 
Some other details of the reduction are also different than
before. Counting trees now have depth $m+2$ instead of $m$, but no
branching occurs on the last two levels of the tree. We also have
three different versions of counting trees: one which uses the
concept names $B_1,\overline{B}_1$ and $B_2,\overline{B}_2$ to store
the two counters, one that uses $B_3,\overline{B}_3$ and
$B_4,\overline{B}_4$, and one that uses $B_5,\overline{B}_5$ and
$B_6,\overline{B}_6$. We say that the trees are of type 0, 1, or 2 to
distinguish between the different versions.  In the grid
representation, we cycle through the types: from left to right and
bottom to top, every tree of type 0 is succeeded by trees of type 1
which are succeeded by trees of type 2 which are succeeded by trees of
type 3. Note that this refers to trees below grid nodes, but also to
trees below horizontal and vertical step nodes. All this prepares for
the construction of the UCQ later on.

Let $P$ be a 2-exp torus tiling problem and $w_0$ an input to $P$ of
length $n$. We construct TBoxes $\Tmc_1,\Tmc_2$ and OMQs
$Q_i=(\Tmc_i,\Sbf_E,q_i)$, $i \in \{1,2\}$, such that
$Q_1 \subseteq Q_2$ iff there is a tiling for $P$ and $w_0$. The
schema $\Sbf_E$ consists of
\begin{itemize}

\item the EDB symbols $r$, $B_i$, $\overline{B}_i$, $i \in
  \{1,\dots,6\}$;

\item concept names $A_0,\dots,A_{m-1}$ and
  $\overline{A}_0,\dots,\overline{A}_{m-1}$ to implement a binary
  counter that identifies the position of each leaf in a counting
  tree;

\item concept names $L_0,\dots,L_{m+2}$ that identify the
  levels in counting trees.

\end{itemize}
We now construct the TBox $\Tmc_1$. We first define concept inclusions
which verify that a grid node has a proper attached counting tree. We
start with identifying nodes on level $m+2$ by the concept name
$\mn{lev}_{m+2}^{G,t}$ where $t \in \{0,1,2\}$ describes the type
of the counting tree as explained above. We only give the construction
explicitly for $\mn{lev}_{m+2}^{G,0}$, which implements the two
counters using $B_1,\overline{B}_1$ and $B_2,\overline{B}_2$:
$$
\begin{array}{c}
  A_i \sqsubseteq  V_i \qquad \overline{A}_i  \sqsubseteq  V_i \qquad 0 \leq i <m\\[\myeqnsep]
  V_0 \sqcap \cdots \sqcap V_{m-1} \sqcap B_1\sqcap 
  B_2 \sqcap L_{m+2}
\sqsubseteq \mn{lev}_{m+2}^{G,0} \\[\myeqnsep]
  V_0 \sqcap \cdots \sqcap V_{m-1} \sqcap \overline{B}_1\sqcap 
  \overline{B}_2  \sqcap L_{m+2}
\sqsubseteq
  \mn{lev}_{m+2}^{G,0}.
\end{array}
$$
%
%
To make the UCQ work later on, we need that level $m+1$-nodes are
labeled complementarily regarding the concept names
$A_i,\overline{A}_i$, $i \leq m$.  We thus identify nodes on level
$m+1$ as follows:
$$
\begin{array}{r@{\;}c@{\;}ll}
  A_i \sqcap \exists S . (\mn{lev}_{m+2}^{G,0} \sqcap \overline{A}_i) &
  \sqsubseteq & A\mn{ok}_i & 0 \leq i \leq m\\[\myeqnsep]
  \overline{A}_i \sqcap \exists S . (\mn{lev}_{m+2}^{G,0} \sqcap A_i) & \sqsubseteq &
  A\mn{ok}_i & 0 \leq i \leq m \\[\myeqnsep]
  \multicolumn{4}{c}{
  A\mn{ok}_0 \sqcap \cdots \sqcap A\mn{ok}_{m-1} \sqcap 
L_{m+1} \sqsubseteq \mn{lev}_{m+1}^{G,0}}  
\end{array}
$$
Note that the first two lines may speak about different
$S$-successors. It is thus not clear that they achieve the intended
complementary labeling.  Moreover, we have not yet made sure that
level $m+2$-nodes are labeled with only one of $A_i,\overline{A}_i$
for each $i$ and with only one $B_j,\overline{B}_j$ for each $j \in
\{1,2\}$.
We fix these problem by including the
following concept inclusions in $\Tmc_2$:
$$
\begin{array}{r@{\,}c@{\,}l}
  L_{m+1} \sqcap \exists S . (L_{m+2} \sqcap A_i) \sqcap  \exists S 
  . (L_{m+2} \sqcap \overline{A}_i) &\sqsubseteq& \bot  \\[\myeqnsep]
   L_{m+1} \sqcap \exists S . (L_{m+2} \sqcap B_j) \sqcap  \exists S 
   . (L_{m+2} \sqcap \overline{B}_j) &\sqsubseteq& \bot
\end{array}
$$
where $i$ ranges over $0,\dots,m-1$ and $j$ over $\{1,2\}$.  
These
inclusions ensure that all relevant successors are labeled identically
regarding the relevant concept names: otherwise the instance is 
inconsistent with $\Tmc_2$ and thus makes $Q_2$ true, which rules
it out as a witness for non-containment.

We next make sure that every level $m$ node has a level $m+1$ node
attached and that its labeling is again complementary (in other words,
the labeling of the level $m$ node agrees with the labeling of the
level $m+2$ node below the attached level $m+1$ node):
$$
\begin{array}{r@{\;}c@{\;}ll}
  A_i \sqcap \exists S . (\mn{lev}_{m+1}^{G,0} \sqcap \overline{A}_i) &
  \sqsubseteq & A\mn{ok}'_i & 0 \leq i \leq m\\[\myeqnsep]
  \overline{A}_i \sqcap \exists S . (\mn{lev}_{m+1}^{G,0} \sqcap A_i) & \sqsubseteq &
  A\mn{ok}'_i & 0 \leq i \leq m \\[\myeqnsep]
  \multicolumn{4}{c}{
  A\mn{ok}'_0 \sqcap \cdots \sqcap A\mn{ok}'_{m-1} 
  \sqcap L_{m+1} \sqsubseteq \mn{lev}_{m}^{G,0}}  
\end{array}
$$
We also include the following in $\Tmc_2$:
$$
  L_{m} \sqcap \exists S . (L_{m+1} \sqcap A_i) \sqcap  \exists S 
  . (L_{m+1} \sqcap \overline{A}_i) \sqsubseteq \bot 
$$
where $i$ ranges over $0,\dots,m-1$. 
We next
verify the remaining levels of the tree. To make sure that the
required successors are present on all levels, we branch on the
concept names $A_i$, $\overline{A}_i$ at level $i$ of a counting tree
and for all $j< i$, keep our choice of $A_j$, $\overline{A}_j$:
$$
\begin{array}{r@{\;}c@{\;}l}
   \exists S . (\mn{lev}^{G,0}_{i+1} \sqcap A_{i}) \sqcap 
   \exists S . (\mn{lev}^{G,0}_{i+1} \sqcap \overline{A}_{i}) & \sqsubseteq & \mn{Succ} \\[\myeqnsep]
   A_j \sqcap \exists S . (\mn{lev}^{G,0}_{i+1} \sqcap A_j) &\sqsubseteq& \mn{Ok}_j \\[\myeqnsep]
   \overline{A}_j \sqcap \exists S . (\mn{lev}^{G,0}_{i+1} \sqcap \overline{A}_j) &\sqsubseteq&
   \mn{Ok}_i \\[\myeqnsep]
   \mn{Succ} \sqcap \mn{Ok}_0 \sqcap \cdots \sqcap \mn{Ok}_{i-1}
   \sqcap L_i
   &\sqsubseteq& \mn{lev}^{G,0}_{i}
\end{array}
$$
where $i$ ranges over $0,\dots,m-1$ and $j$ over $0,\dots,i-1$. Again, lines one
to three may speak about different successors and we need to make sure
that all those successors are labeled identically. This is done by
adding the following inclusions to $\Tmc_2$:
$$
L_{i} \sqcap \exists S . (L_{i+1} \sqcap A_j)  \sqcap \exists S 
. (L_{i+1} \sqcap \overline{A}_j) \sqsubseteq \bot 
$$
where the ranges of $i$ and $j$ are as above. 
This finishes the verification of the counting tree. We do
not use self step nodes in the current reduction, so a grid node is
simply the root of a counting tree where both counter values are
identical:
$$
  \mn{lev}^{G,0}_0 \sqsubseteq \mn{gactive}^0.
$$
The superscript $\cdot^0$ in $\mn{gactive}^0$ indicates that the
counting tree of which this node is the root is of type~0. Concept
inclusions that set $\mn{gactive}^1$ and $\mn{gactive}^2$ are defined
analogously, replacing $B_1,\overline{B}_1$ and $B_2,\overline{B}_2$
appropriately, as explained above.  In a similar way, we can verify
the existence of counting trees below horizontal step nodes and
vertical step nodes, signalling the existence of such trees by the
concept names $\mn{hactive}^t$ and $\mn{vactive}^t$, $t \in
\{0,1,2\}$. In contrary to the counting trees between grid nodes,
counting trees below horizontal and vertical step nodes need to
properly increment the counters, as in the previous reduction.
Details are slightly tedious but straightforward and thus omitted.


We also use $\Tmc_1$ to enforce that all grid nodes are labeled with a
tile type. However, as we shall see below we cannot use all nodes
labeled with a $\mn{gactive}^t$-concept as grid nodes, but only those
ones that have an $S$-neighbor which is labeled with
$\mn{hactive}^{t \oplus 1}$ or with $\mn{vactive}^{t \oplus 1}$ where
$\oplus$ denotes addition modulo three.\footnote{This is the reason
  why we reduce torus tiling instead of square tiling: in a square,
  grid nodes on the upper and right border are missing the required
  successors.} We call such nodes g-active and add the following to
$\Tmc_1$:
$$
  \mn{gactive}^t \sqcap \exists S . \mn{hactive}^{t \oplus 1}
\sqsubseteq
\bigsqcup _{T_i \in \tiles} T_i \qquad  t  \in \{0,1,2\}
$$
We next add inclusions to $\Tmc_1$ which identify a defect in the
tiling and signal this by making the concept name $D$ true:
\begin{enumerate}

\item horizontally neighboring tiles match;
  for all $T_i,T_j \in \tiles$ with $(T_i,T_j) \notin H$
  and $t \in \{0,1,2\}$:
  $$
  T_i \sqcap \mn{gactive}^t \sqcap \exists S . (\mn{hactive}^{t \oplus 
    1} \sqcap \exists S . (\mn{gactive}^{t \oplus 2} \sqcap T_j))
  \sqsubseteq D 
  $$

\item vertically neighboring tiles match;
  for all $T_i,T_j \in \tiles$ with $(T_i,T_j) \notin V$
    and $t \in \{0,1,2\}$:
  $$
  T_i \sqcap \mn{gactive}^t \sqcap \exists S . (\mn{vactive}^{t \oplus 
    1} \sqcap \exists S . (\mn{gactive}^{t \oplus 2} \sqcap T_j)) 
  \sqsubseteq D 
  $$

\item The tiling respects the initial condition. Let $w_0 = T_{i_0}
  \cdots T_{i_{n-1}}$.  As in the previous reduction, it is tedious
  but not difficult to write concept inclusions to be included in \Tmc
  which ensure that, for $0 \leq i < n$, every element that is in
  $\mn{gactive}^t$ for some $t$ and whose $B_\ell$-value
  represents horizontal position $i$ and vertical position 0,
  satisfies the concept name $\mn{pos}_{i,0}$. Here, $(t,\ell)$ ranges
  over $(0,1),(1,3),(2,5)$.  We then add the following CQ to $q_2$ for
  $0 \leq j < n$ and all $T_\ell \in \tiles$ with $T_\ell \neq
  T_{i_j}$:
  $$
  \mn{pos}_{j,0} \sqcap T_\ell \sqsubseteq D.
  $$

\end{enumerate}
Note that Points~1 and~2 achieve the desired cycling through the three
different types of counting trees: horizontal and vertical
neighborships of g-active nodes whose types are not as expected are
simply ignored (i.e., not treated as neighborships in the first place).

This completes the construction of the TBox $\Tmc_1$. The query $q_1$
simply takes the form $\exists x \, D(x)$, thus $Q_1$ is true in an
instance $I$ iff the (potentially partial) grid in $I$ does not admit
a tiling, as desired. The construction of $\Tmc_2$ is also finished at
this point. It thus remains to construct $q_2$. As in the previous
reduction, the purpose of $q_2$ is to ensure that counter values
are copied appropriately to neighboring counting trees and that the
two counter values below each grid and step node are unique. We call
two counting trees \emph{neighboring} if their roots are connected by
the relation $S$. Since $S$ is symmetric, we cannot distinguish
successor counting trees from predecessor ones. The three different
types of counting trees still allow us to achieve the desired copying
of counter values. More precisely, we need to ensure that
\begin{description}

\item[(Q1)] the $B_i$-value of a counting tree coincides with the
  $B_{i+3}$-value of neighboring trees, for all $i \in \{1,2,3\}$;

\item[(Q2)] every g-active node is associated (via counting trees)
  with at most one $B_i$-value, for each $i \in \{1,\dots,6\}$.

\end{description}
\begin{figure}[t!]
  \begin{center}
    \framebox[1\columnwidth]{\input{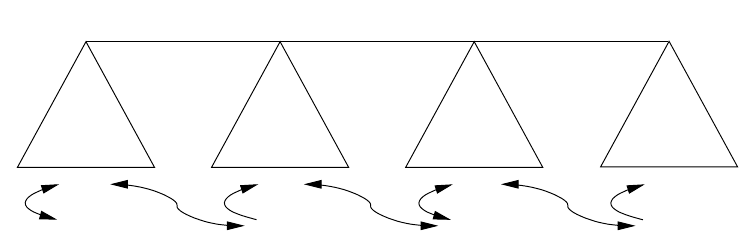_t}}
    \caption{Counting strategy.}
    \label{fig:tree}
  \end{center}
\end{figure}
The counting strategy is illustrated in Figure~\ref{fig:tree}, 
displaying a horizontal fragment of the grid. Arrows annotated with
``$=$'' indicate identical counter values and arrows annotated with
``$+_h$'' indicate incrementation of the horizontal component of the
counter. A vertical fragment would look identical, except that
$\mn{hactive}^i$ is replaced with $\mn{gactive}^i$ and incrementation
of the horizontal counter component with incrementation of the
vertical component.

Before we start constructing $q_2$, we observe that it actually
suffices to ensure (Q1) because (Q2) is then guaranteed automatically.
The reason is that we are only interested in nodes that are g-active
and thus have an $S$-neighbor which is labeled with $\mn{hactive}^{t
  \oplus 1}$ or with $\mn{vactive}^{t \oplus 1}$. Assume for example
that a node $a$ is in $\mn{gactive}^0$ and has an $S$-neighbor $b$
which is labeled with $\mn{hactive}^1$. By (Q1), \emph{all} leaves
with a given position in the tree below $a$ must agree regarding their
$B_1,\overline{B}_1$-value with the $B_4,\overline{B}_4$-value of
\emph{all} leaves with the same position in the tree below $b$. Since
both trees contain at least one leaf for each position, this achieves
(Q2) for the $B_1$-value. It also achieves (Q2) for the $B_2$-value
since that value is identical to $B_1$-value. Obviously, the other
cases are analogous. 

The UCQ $q_1$ for achieving (Q1) includes six CQs. We first construct
a query $q$ which is true in an instance $I$ if
\begin{itemize}

\item[($*$)] there are two leaves in neighboring counting trees that
  have the same position and such that one leaf is labeled with $B_1$
  and the other one with $\overline{B}_4$.

\end{itemize} 
The other five queries are then minor variations of~$q$.
\begin{figure}[t!]
  \begin{center}
    \framebox[1\columnwidth]{\input{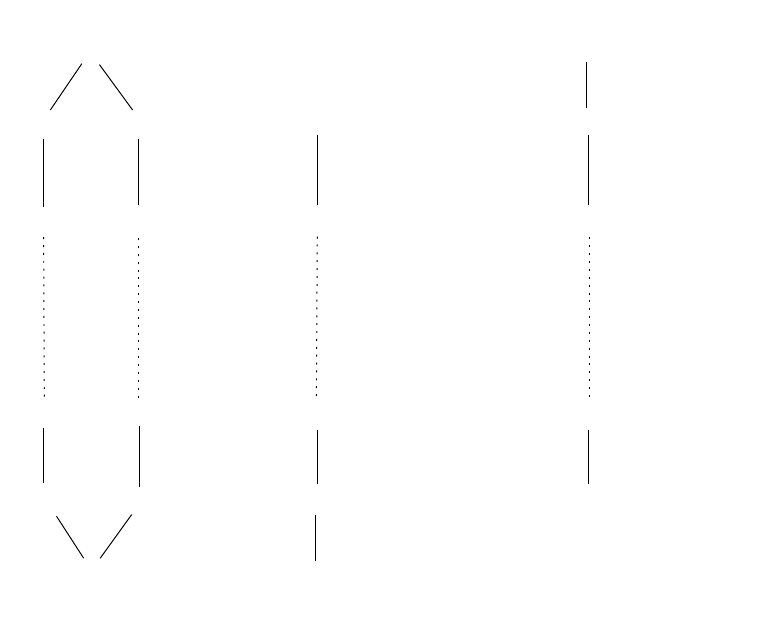_t}}
    \caption{Component query for (Q1) and two 
identifications.}
    \label{fig:q1}
  \end{center}
\end{figure}
We construct $q$ from component queries $p_0,\dots,p_{m-1}$, which all
take the form of the query show on the left-hand side of
Figure~\ref{fig:q1}. Note that all edges are $S$-edges and that the
only difference between the component queries is which concept names
$A_i$ and $\overline{A}_i$ are used. We assemble $p_0,\dots,p_{m-1}$
into the desired query $q$ by taking variable disjoint copies of
$p_0,\dots,p_{m-1}$ and then identifying (i)~the $x$-variables of all
components and (ii)~the $x'$-variables of all components. 

To see why $q$ achieves~($*$), first note that the variables $x$ and
$x'$ must be mapped to leaves of counting trees because of their
$L_{m+2}$-label. Call these leaves $a$ and~$a'$. Since $x$ is labeled
with $B_1$ and $x'$ with $\overline{B}_4$, $a$ and $a'$ must be in
different trees. Since they are connected to $x$ in the query, both
$x_0$ and $x'_0$ must then be mapped either to $a$ or to its
predecessor; likewise, $x_{2m+4}$ and $x'_{2m+4}$ must be mapped
either to $a'$ or to its predecessor.  Because of the labeling of $a$
and $a'$ and the predecessors in the counting tree with $A_i$ and
$\overline{A}_i$, we are actually even more constrained: exactly one
of $x_0$ and $x'_0$ must be mapped to $a$, and exactly one of
$x_{2m+4}$ and $x'_{2m+4}$ to~$a'$. Since the paths between leaves in
different trees in the instance have length at least $2m+5$ and $q$
contains paths from $x_0$ to $x_{2m+4}$ and from $x'_0$ to $x'_{2m+4}$
of length $2m+4$, only the following cases are possible:
\begin{itemize}

\item $x_0$ is mapped to $a$, $x'_0$ to the predecessor of $a$,
  $x'_{2m+4}$ to $a'$, and $x_{2m+4}$ to the predecessor of $a'$;

\item $x'_0$ is mapped to $a$, $x_0$ to the predecessor of $a$,
  $x_{2m+4}$ to $a'$, and $x'_{2m+4}$ to the predecessor of $a'$.

\end{itemize}
This gives rise to the two variable identifications in each query
$p_i$ shown in Figure~\ref{fig:q1}. Note that the first case implies
that $a$ and $a'$ are both labeled with $A_i$ while they are both
labeled with $\overline{A}_i$ in the second case. In summary, $a$ and
$a'$ must thus agree on all concept names $A_i$, $\overline{A}_i$.
Note that with the identification $x_0=x$ (resp.\ $x'_0=x$), there is
a path from $x$ to $x'$ in the query of length $2m+5$. Thus, $a$ and
$a'$ are in neighboring counting trees. Since $a$ must satisfy $B_1$
and $a'$ must satisfy $\overline{B}_4$ due to the labeling of $x$ and
$x'$, we have achieved ($*$).

\medskip

We now show how to replace the UCQ used in the reduction with a 
CQ. This requires the following changes:
\begin{enumerate}

\item every node on level $m$ now has two successors instead of one
  (while nodes on level $m+1$ still have a single successor); in
  $\mn{gactive}^0$-trees, one of the leafs below the same level $m$
  node carries the $B_1$,$\overline{B_1}$-label while the other leaf
  carries the $B_2$,$\overline{B_2}$-label; the labeling of the two
  leaves with $A_i$,$\overline{A}_i$ is identical; similarly for
  $\mn{gactive}^1$ and $\mn{gactive}^2$;

\item the predecessors of leaf nodes in counting trees receive
  additional labels: when the leaf node is labeled with $B_i$ (resp.\
  $\overline{B}_i$), then its predecessor is labeled with
  $\overline{B}_i$ (resp.\ $B_i$) and with $B_j$ and
  $\overline{B}_j$ for all \mbox{$j \in \{ 1,\dots,6\} \setminus \{i\}$};
  these concept names are added to $\Sbf_E$;

\item the roots of counting trees receive an additional label
  $R_0$ or $R_1$, alternating with neighboring trees; these
  concept names are added to $\Sbf_E$, too;

\item the query construction is modified.

\end{enumerate}
Points~2 and~3 are important for the CQ to be constructed to work
correctly. Point~2 does not make sense without Point~1.
Note that Points~1 to~3 can be achieved in a straightforward way by modifying
the previous reduction, details are omitted. We thus concentrate on
Point~4. The desired CQ $q$ is again constructed from component
queries. We use $m$ components as shown in Figure~\ref{fig:q1}, except
that the $B_1$ and $\overline{B}_4$-labels are dropped. We additionally
take the disjoint union with the component (partially) shown in
Figure~\ref{fig:cq} where again $x$ and $x'$ are the variables shared with
the other components. 
\begin{figure}[t!]
  \begin{center}
    \framebox[1\columnwidth]{\input{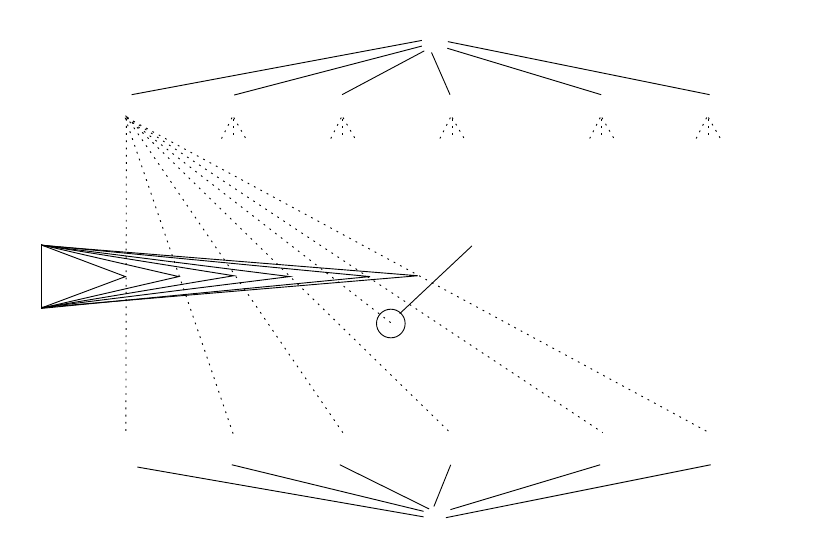_t}}
    \caption{Additional component for CQ.}
    \label{fig:cq}
  \end{center}
\end{figure}
The dotted edges denote $S$-paths of length $2m+4$. For readability,
we show only some of the paths. The general scheme is that every
variable $x_{i,0}$ has a path to every variable $x_{j,2m+4}$ unless
the two variables are labeled with complementary concept names, that
is, with concept names $B_i$ and $\overline{B}_j$ such that $i \in
\{1,2,3\}$ and $j=i+1$ or with concept names $\overline{B}_i$ and
$B_j$ such that $i \in \{1,2,3\}$ and $j=i+1$.  In the figure, we only
show the paths outgoing from $x_{1,0}$.  The edges that connect $u$
and $u'$ with the dotted paths always end at the middle point of a
path, which has distance $m+2$ to the $x_{i,0}$ variable where the
path starts and also distance $m+2$ to the $x_{j,2m+4}$ variable where
it ends.

We have to argue that the CQ $q$ just constructed achieves (Q1).  As
before, $x$ and $x'$ must be mapped to leafs of counting trees because
of their $L_{m+2}$-label. Call these leaves $a_1$ and $a_2$. All
$x_{i,0}$ must then be mapped to $a$ or its predecessor, and all
$x_{i,2m+4}$ must be mapped to $a'$ or its predecessor. In fact, due
to the labeling of $a$ and $a'$ and their predecessors in their counting
tree with the concepts $B_i$, $\overline{B}_i$, exactly one variable
$x_{i,0}$ from $x_{0,0},\dots,x_{12,0}$ is mapped to $a$ while all
others are mapped to the predecessor of $a$; likewise, exactly one of
the $x_{j,2m+4}$ from $x_{0,2m+4},\dots,x_{12,2m+4}$ is mapped to $a'$
while all others are mapped to the predecessor of $a'$. To achieve
(Q1), we have to argue that $x_{i,0}$ and $x_{j,2m+4}$ are labeled
with complementary concept names, and that $a$ and $a'$ are in
neighboring computation trees.

Assume to the contrary that $x_{i,0}$ and $x_{j,2m+4}$ are not labeled with
complementary concept names. Then they are connected in $q$ by a path
of length $2m+4$ whose middle point $y$ is connected to the variables
$u$ and $u'$. In a homomorphism to the grid with counting trees, there
are four possible targets for $u$ and $u'$ and for the predecessor
$y_{-1}$ of $y$ on the connecting path and the successor $y_1$ of $y$
on that path:
\begin{enumerate}

\item $u,y_{-1}$ map to the same constant, and so do $u'$ and $y$;

\item $u,y$ map to the same constant, and so do $u'$ and $y_1$;

\item $u',y_{-1}$ map to the same constant, and so do $u$ and $y$;

\item $u',y$ map to the same constant, and so do $u$ and $y_1$.

\end{enumerate}
However, options~1 and~3 are impossible because there would have to be
a path of length $2m+1$ from a node labeled $R_0$ or $R_1$ to the leaf
$a$. Similarly, options~2 and~4 are impossible because there would
have to be a path of length $2m+1$ from a node labeled $R_0$ or $R_1$
to the leaf $a'$. Thus, we have shown that $x_{i,0}$ and $x_{j,2m+4}$
are labeled with complementary concept names. 

This together with the labeling scheme of Figure~\ref{fig:tree} also
means that $a$ and $a'$ (to which $x_{i,0}$ and $x_{j,2m+4}$ are
mapped) are not in the same counting tree. Moreover, they cannot be in
counting trees that are further apart than one step because under the
assumption that $x=x_{i,0}$ and $x'=x_{j,2m+4}$, there is a path of
length $2m+5$ in the query from $x$ to $x'$. Note that
we can identify $u$ with the $2m+2$nd variable on any such path and
$u'$ with the $2m+3$rd variable (or vice versa) to admit a match in
neighboring counting trees.

\bigskip
An OMQ $Q=(\Tmc,\Sbf_E,q)$ is \emph{FO-rewritable} iff there is 
an FO-query that is equivalent to $Q$. Rewritability into monadic
Datalog and into unrestricted Datalog are defined accordingly.

\THMrewriteOMQ*

\begin{proof}
  For $(\ALC,\text{UCQ})$, it suffices to note that every MDDLog program with
  only unary and binary EDB relations can be translated into an
  equivalent OMQ from $(\ALC,\text{UCQ})$ in polynomial time
  \cite{DBLP:journals/tods/BienvenuCLW14}. Thus, the lower bounds for
  $(\ALC,UCQ)$ are an immediate consequence of
  Theorem~\ref{thm:rewrite}. 

  For $(\ALCI,\text{CQ})$, we adapt the above hardness proof for
  containment, essentially in the same way as in the proof of
  Theorem~\ref{thm:rewrite}. Our aim is thus to show that, from a
  2-exp torus tiling problem $P$ and an input $w_0$ to~$P$, we can
  construct in polynomial time an $(\ALCI,\text{CQ})$-OMQ $Q$ such
  that
  \begin{enumerate}

  \item if there is a tiling for $P$ and $w_0$, then $Q$ is
    FO-rewritable;

  \item if there is no tiling for $P$ and $w_0$, then $Q$ is
    not Datalog-rewritable.

  \end{enumerate}
  We have shown how to construct from a 2-exp torus tiling problem $P$
  and an input $w_0$ to $P$, two $(\ALCI,\text{CQ})$-OMQs $Q_1$, $Q_2$
  such that $Q_1 \subseteq Q_2$ iff there is a tiling for $P$ and
  $w_0$. Moreover, $Q_2$ consists only of inclusions of the form
  $C \sqsubseteq \bot$ with $C$ an \ELI-concept (a concept built only
  from conjuntion and existential restrictions, possibly using inverse
  roles). As above, let $Q_i = (\Tmc_i,\Sbf_E,q_i)$ for
  $i \in \{1,2\}$. The desired OMQ $Q=(\Tmc,\Sbf'_E,q)$ is constructed
  by choosing $q=q_2$, $\Sbf'_E=\Sbf_E \cup \{ s, u \}$, and
  choosing for \Tmc the union of $\Tmc_1$ and $\Tmc_2$, extended with
  the following CIs:
  \begin{enumerate}

  \item $\exists u . D \sqsubseteq R \sqcup G \sqcup B$ (where $D$ is the concept name 
    used in $q_1$);

  \item $C_1 \sqcap C_2 \sqsubseteq C_{q_2}$ for all distinct
    $C_1,C_2 \in \{R,G,B\}$ where $C_{q_2}$ is an (easy to construct)
    \ELI-concept such that $C^\Imc_{q_2} \neq \emptyset$
    implies $\Imc \models q_2$ for all interpretations \Imc;
  
  \item $C \sqcap \exists s . C \sqsubseteq C_{q_2}$ for all $C \in
    \{R,G,B\}$.

\end{enumerate}
We now show that $Q$ satisfies Points~1 and~2 above.

For Point~1, assume that there is a tiling for $P$ and~$w_0$.  Then
$Q_1 \subseteq Q_2$. We claim that we obtain a UCQ-rewriting $\varphi$
of $Q$ by taking the disjunction of $q_2$ and of $\exists x \, C(x)$
for every CI $C \sqsubseteq \bot$ in $\Tmc_2$. To see this, let $I$ be
an $\Sbf'_E$-instance. Clearly, $I \models \varphi$ implies
$I \models Q$. Conversely, assume that $I \models Q$. If
$I \not\models Q_1$, then there is a model \Imc of $I$ and $\Tmc_1$
such that $D^\Imc = \emptyset$. Thus the additional CIs from the
construction of $Q$ are inactive and $I \models Q$ implies
$I \models Q_2$, thus $I \models \varphi$.  Now assume
$I \models Q_1$. Then $I \models Q_2$ since $Q_1 \subseteq Q_2$. Thus,
again, $I \models \varphi$.

For Point~2, assume there is no tiling for $P$ and $w_0$. Then
$Q_1 \not\subseteq Q_2$. Given an undirected graph $G=(V,E)$, let the
instance $I^+_G$ be defined as the disjoint union of the instance
$I_0$ which represents the $2^{2^n}$-grid plus counting gadgets, the
instance $I_G$ which contains a fact $s(v_1,v_2)$ for every
$\{v_1,v_2\} \in V$, extended with the fact $u(v,g)$ for every
$v \in V$ and element $g$ of $I_0$.

Since there is no tiling for $P$ and $w_0$, we have $I_0 \models Q_1$
and thus $I^+_G \models Q_1$. By construction of $Q$ and since
$Q_1 \not\subseteq Q_2$, this implies that $I^+_G \models Q$ iff $G$
is not 3-colorable. It remains to argue that, consequently, a
Datalog-rewriting of $Q$ gives rise to a Datalog-rewriting of
non-3-colorability (which doesn't exist). This can be done as in the
proof of Theorem~\ref{thm:rewrite}.
\end{proof}

\end{document}